\newtheorem{proposition}{Proposition} 
\newtheorem{theorem}{Theorem}
\newtheorem{corollary}{Corollary}
\newtheorem{remark}{Remark}
\newtheorem{lemma}{Lemma}
\newtheorem{definition}{Definition}
 \newenvironment{proofof}[1]{\vspace*{5mm} \par \noindent
         \quad{\it Proof of #1: \hspace{2mm}}}{\endproof
\hfill$\Box$ \vspace*{3mm}
}
\def\Ker{\mathop{\rm Ker}}
\def\QED{\mbox{\rule[0pt]{1.5ex}{1.5ex}}}
\def\endproof{\hspace*{\fill}~\QED\par\endtrivlist\unskip}
\renewcommand{\qed}{\hfill \QED}
 \newenvironment{proofof}[1]{\vspace*{5mm} \par \noindent
         \quad{\it Proof of #1:\hspace{2mm}}}{\qed
}
\def\FF{\mathbb{F}}
\def\ZZ{\mathbb{Z}}
\def\rank{\mathop{\rm rank}}
\def\im{\mathop{\rm Im}}
\def\rank{\mathop{\rm rank}}
\def\Label#1{\label{#1}\ [\ \text{#1}\ ]\ }
\def\Label{\label}
\def\mojiparline#1{
    \newcounter{mpl}
    \setcounter{mpl}{#1}
    \@tempdima=\linewidth
    \advance\@tempdima by-\value{mpl}zw
    \addtocounter{mpl}{-1}
    \divide\@tempdima by \value{mpl}
    \advance\kanjiskip by\@tempdima
    \advance\parindent by\@tempdima
}
\begin{document}

\title{Secrecy and Robustness for Active Attacks\\ in Secure Network Coding\\
and its Application to Network Quantum Key Distribution}

\author{Masahito Hayashi, 
Masaki Owari, Go Kato, and Ning Cai 
\thanks{The material in this paper was presented in part at the 2017 IEEE International Symposium on Information Theory (ISIT 2017),   Aachen (Germany), 25-30 June 2017.}
\thanks{Masahito Hayashi is with the Graduate School of Mathematics, Nagoya University, Japan. He is also with 
Shenzhen Institute for Quantum Science and Engineering, Southern University of Science and Technology
and the Centre for Quantum Technologies, National University of Singapore, Singapore
(e-mail:masahito@math.nagoya-u.ac.jp).
Masaki Owari is with Department of Computer Science, Faculty of Informatics, Shizuoka University, Japan 
(e-mail:masakiowari@inf.shizuoka.ac.jp).
Go Kato is with NTT Communication Science Laboratories, NTT Corporation, Japan
(e-mail:kato.go@lab.ntt.co.jp).
Ning Cai is with the School of Information Science and Technology, ShanghaiTech University
(e-mail: cai@gmx.de).} }

\markboth{M. Hayashi, M. Owari, G. Kato, and N. Cai: Secrecy and Robustness for Active Attack in Secure Network Coding}{}

\maketitle

\begin{abstract}
In network coding,
we discuss the effect of sequential error injection on information leakage.
We show that there is no improvement 
when the operations in the network are linear operations.
However, 
when the operations in the network contains non-linear operations,
we find a counterexample to improve Eve's obtained information.
Furthermore, we discuss the asymptotic rate in a linear network
under the secrecy and robustness conditions as well as 
under the secrecy condition alone.
Finally, we apply our results to network quantum key distribution,
which clarifies the type of network that enables us to realize secure long distance communication via
short distance quantum key distribution.
\end{abstract}

\begin{IEEEkeywords} 
secrecy analysis,
secure network coding,
sequential injection,
passive attack,
active attack
\end{IEEEkeywords}

\section{Introduction}

Secure network coding offers a method for securely transmitting information from 
an authorized sender to an authorized receiver.
Cai and Yeung \cite{Cai2002} discussed the secrecy 
when the malicious adversary, Eve, wiretaps  a subset $E_E$ of all the channels in a network.
Using the universal hashing lemma \cite{bennett95privacy,HILL,hayashi11}, 
the papers \cite{Matsumoto2011,Matsumoto2011a} showed the existence of a secrecy code that works universally for any type of eavesdropper 
when the cardinality of $E_E$ is bounded.
In addition, the paper \cite{KMU} discussed the construction of such a code.
As another type of attack on information transmission via a network,
a malicious adversary contaminates the communication by 
changing the information on a subset $E_A$ of all the channels in the network.
Using an error correction, the papers \cite{Cai06a,Cai06,HLKMEK,JLHE} proposed a method to protect the message from contamination.
That is, we require that the authorized receiver correctly recovers the message, which is called robustness.
Now, for simplicity, we consider the unicast setting.
When the transmission rate from the authorized sender, Alice, to the authorized receiver, Bob, is $m_{0}$
and the rate of noise injected by Eve is $m_{1}$,
using the results published in \cite{JLKHKM,Jaggi2008}
the study \cite{JL} showed that
there exists a sequence of asymptotically correctable codes with the rate
$m_{0}-m_{1}$ if the rate of information leakage to Eve is less than $m_{0}-m_{1}$.

However, there is a possibility that the malicious adversary combines eavesdropping and contamination.
That is, contaminating a part of the channels, the malicious adversary might improve the ability of eavesdropping
while a parallel network offers no such a possibility \cite{ZKBJS1,ZKBJS2,KSZBJ}.
In fact, in arbitrarily varying channel model,
noise injection is allowed after Eve's eavesdropping, but  
Eve does not eavesdrop the channel after Eve's noise injection \cite{CN88,TJBK}\cite[Table I]{KJL}.
The studies \cite{KMU,Zhang} discussed the secrecy
when Eve eavesdrops the information transmitted on the channels in $E_E$ after noises are injected in $E_A$,
but they assumes that Eve do not know the information of the injected noise. 

In contrast, this paper discusses the secrecy when Eve 
adds artificial information to the information transmitted on the channels in $E_A$, 
eavesdrops the information transmitted on the channels in $E_E$, and 
estimates the original message from the eavesdropped information and the information of the injected noises.
We call this type of attack an active attack 
and call an attack without contamination a passive attack.
Specially, we call each of Eve's active operations a strategy.
Indeed, while the paper \cite{Yao2014} discusses robustness for an active attack,
it discusses secrecy only for a passive attack. 
When $E_A=E_E$ and any active attack is available for Eve, she is allowed to arbitrarily modify the information on the channels in $E_A$ sequentially based on the obtained information.

\begin{figure}[htbp]
\begin{center}
\includegraphics[scale=0.5]{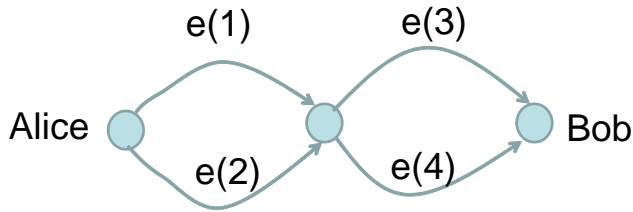}
\end{center}
\caption{One hop relay network.}
\Label{F1}
\end{figure}%

The aim of this paper is as follows.
Firstly, we show that no strategy can improve Eve's information
when any operation in the network is linear.
Then, we discuss a code that satisfies the need for  secrecy and robustness
when the transmission rate from Alice to Bob is $m_{0}$,
the rate of noise injected by Eve is $m_{1}$,
and the rate of information leakage to Eve is $m_{2}$.
In the asymptotic setting, we show the existence of such a secure protocol with rate $m_{0}-m_{1}-m_{2}$.
We discuss the asymptotic performance when 
only secrecy is considered.
When Alice and Bob share a small number of initial secret keys 
and can communicate with each other via a public channel, 
we do not impose robustness, but need the correctness only for passive attack.
In such a case, we show the existence of a secure protocol with the rate $m_{0}-m_{2}$.
This setting is useful for the secure communication over a network of quantum key distribution systems.

Quantum key distribution enables secure communication between two parties \cite{BB84}.
Recently, finite-length security analysis has been developed \cite{H-QKD,TWGR,HT1} even with multiple photons \cite{H-QKE2,H-N}.
Its commercial use has been well developed for limited transmission distance \cite{IDQ}.
However, it is very difficult to directly connect two distinct parties over long distances via quantum key distribution.
To realize long distance communication with quantum key distribution over short distances, 
this paper considers a method to connect them via a network composed of quantum key distribution over short distances. 
That is, we generate many pairs of shared secure keys on intermediate nodes by quantum key distribution, where each pair is composed of two nodes close to each other.
The secure keys shared by two nodes realize a secure channel between the two nodes.
Applying the method of network coding to these secure channels,
we realize secure communication between two distinct parties across a long distance.
We call this method network quantum key distribution.
When all the nodes are trusted, 
quantum key distribution guarantees the security of all the channels between the nodes.
However, there is a possibility that a part of nodes are occupied by Eve.
In this case, we do not impose robustness, but need the correctness only for passive attack.
Hence, for the security analysis of this case, we apply the above result to the network composed of these channels.

Next, to clarify the necessity of the linearity assumption, 
we discuss the ultimate performance on the one hop relay network (Fig. \ref{F1}) with the single shot case.
For this aim, 
we provide a counterexample of the non-linear coding of the binary case
on the network, in which, there exists a strategy to improve Eve's information.
A concrete description of this non-linear code on the one hop relay network (Fig. \ref{F1}) will be given in the main body of this paper.
In this example, even if Eve eavesdrops one edge before and after the intermediate node
as a passive attack,
she cannot recover the original message.
However, when she makes an active attack before the intermediate node,
she can recover the original message.
This example shows the importance of the assumption of linearity.
Similar unexpected properties for a nonlinear network error correcting code were reported in \cite{YYZ}.
Then, we show that the network code is limited to this counterexample
when we impose several natural secrecy conditions to the code on the one hop relay network (Fig. \ref{F1}) in the binary case.
This discussion shows that no code can guarantee the security over this type of active attack 
on the one hop relay network (Fig. \ref{F1}) in the binary case.
However, in the ternary case, there exists a code such that Eve cannot completely recover the message even with this type of active attack on the one hop relay network (Fig. \ref{F1}).
To discuss this problem, we introduce a new concept an ``anti-Latin square'', which is an opposite concept to a Latin square.

The remaining part of this paper is organized as follows.
Section \ref{S2} formulates our problem and shows the 
impossibility of Eve's eavesdropping under a linear network.
Section \ref{S3} discusses the asymptotic setting,
and show the achievability of the asymptotic rate $m_{0}-m_{1}-m_{2}$.
Section \ref{S5} discusses the asymptotic setting with secrecy without robustness.
Using the result of Section \ref{S5},
Section \ref{S6} considers the application of obtained results to 
network quantum key distribution
Section \ref{S10} applies the obtained result to multiple multicast network.
Section \ref{S4} discusses the ultimate performance on the one hop relay network.
In Section \ref{SCon}, we state the conclusion.


\section{Formulation of linear network and reduction theorem}\Label{S2}
\subsection{Single shot setting}
\subsubsection{Generic model}
In this subsubsection, we give a generic model, and discuss its relation with concrete network model in the latter subsubsections.
We consider the unicast setting of network coding on a network.
Assume that the authorized sender, Alice, intends to send information to the authorized receiver, Bob,
via the network.
We also assume that
Alice inputs the input variable $X$ in $\FF_q^{m_{{3}}}$ and Bob receives the output variable $Y_B$ in $\FF_q^{m_{{4}}}$.
We also assume that the malicious adversary, Eve, wiretaps the information $Y_E$ in $\FF_q^{m_{6}}$. 
Then, we adopt the model with matrices $K_B \in \FF_q^{m_{{4}}\times m_{{3}} }$ and $K_E \in \FF_q^{m_{6}\times m_{{3}} }$, in which,the variables $X$, $Y_B$, and $Y_E$ satisfy their relations
\begin{align}
Y_B=K_B X, \quad
Y_E=K_E X.\Label{F2}
\end{align}
We call this attack a {\it passive attack}.

In this paper, we address a stronger attack, in which, Eve injects noise $Z \in \FF_q^{m_{5}}$.
Hence, using matrices $H_B \in \FF_q^{m_{{4}}\times m_{5} }$ and $H_E \in \FF_q^{m_{6}\times m_{5}}$, 
we rewrite the relations \eqref{F2} as
  \begin{align}
\begin{aligned}
Y_B&=K_B X+ H_B Z, 
Y_E&=K_E X+ H_E Z, 
\end{aligned}
\Label{E3}
\end{align}
which is called a {\it wiretap and addition attack model}.
The $i$-th injected noise $Z_i$ (the $i$-th component of $Z$) is decided by a function $\alpha_i$ of $Y_E$.
In this paper, when a vector has the $j$-th component $x_j$,
the vector is written as $[x_j]_{1\le j \le a}$, where
the subscript $1\le j \le a$ expresses the range of the index $j$.
Thus, the set $\alpha=[\alpha_i]_{1\le i\le m_6}$ of the functions can be regarded as Eve's strategy,
and we call this attack an {\it active attack} with a strategy $\alpha$.
That is, a pair of a strategy and wiretap and addition attack model is called an active attack.
Here, we treat $K_B,K_E,H_B$, and $H_E$ as deterministic values, and denote the pairs $(K_B,K_E)$ and 
$(H_B,H_E)$ by $\bm{K}$ and $\bm{H}$, respectively.
Hence, our model is written as the triplet $(\bm{K}, \bm{H}, \alpha )$.
 We can consider several types for conditions for our model. 
As shown in the latter subsubsections, the triplet $(\bm{K}, \bm{H}, \alpha )$ is decided from the network topology and dynamics of the network.

We set the parameter $m_0$ as 
\begin{align}
\rank K_B= m_{0} ,
\Label{1-6}
\end{align}
and assume the ranks of $H_B$ and $K_E$ as
\begin{align}
\rank H_B= m_{1}, ~ \rank K_E= m_{2} .
\Label{1-6X}
\end{align}
Then, the parameters are summarized in Table \ref{hikaku}.

\begin{table}[htpb]
  \caption{Channel parameters}
\Label{hikaku}
\begin{center}
  \begin{tabular}{|c|l|} 
\hline
\multirow{2}{*}{$m_0$} & Rank of the channel from Alice\\
&  to Bob, i.e., $\rank K_B$ \\
\hline
$m_1$ & Rank of Eve's injected information ($\rank H_B$)\\
\hline
$m_2$ & Rank of Eve's wiretapped information ($\rank K_E$)\\
\hline
$m_3$ & Dimension of Alice's input information\\
\hline
$m_4$ & Dimension of Bob's observed information \\
\hline
$m_5$ & Dimension of Eve's injected information \\
\hline
$m_6$ & Dimension of Eve's wiretapped information \\
\hline
  \end{tabular}
\end{center}
\end{table}

We can consider two types for conditions for our model. 

\begin{definition}
For any value of $K_E x$,
there uniquely exists $y \in \FF_q^{m_6} $ such that
\begin{align}
y= K_E x+ H_E \alpha(y).\Label{Uni}
\end{align}
This condition is called the {\it uniqueness condition}.
\end{definition}

Although $\alpha_i $ is a function of the vector $[Y_{E,j}]_{1\le j \le m_6}$,
it is natural that $\alpha_i $ is decided by a part of Eve's observed variables
when we take the causality with respect to $\alpha$ into account.
Since the decision of the injected noise does not depend on the results of the decision,
we introduce the causal condition.

Hence, we choose the subset $w_i \subset \{1, \ldots, m_6\}$ such that
the function $\alpha_i $ is given as a function of the vector $[Y_{E,j}]_{j \in w_i}$.

\begin{definition}\Label{Def3}
Generally, the function $\alpha_i $ is given as a function of a part of component of the vector $[Y_{E,j}]_{1, \le j \le  m_6}$.
To clarify this point, 
we choose the subset $w_i \subset \{1, \ldots, m_6\}$ such that
the function $\alpha_i $ is given as a function of the vector $[Y_{E,j}]_{j \in w_i}$.
\begin{description}
\item[(A1)] The relation $ H_{E;j,i} = 0$ holds for $j \in w_i$.
(This condition means that
the $j$-th eavesdropping is done after the $i$-th injection for $j \in w_i$.)
\item[(A2)] 
The relation $ w_{1} \subset w_{2} \subset \ldots \subset 
w_{m_5}$ holds.
\end{description}
This condition is called the {\it causal condition}.
\end{definition}

Then, we have the following lemma.

\begin{lemma}\Label{LL2}
When the triplet $(\bm{K},\bm{H},\alpha)$ satisfies the causal condition, 
it satisfies the uniqueness condition.
\end{lemma}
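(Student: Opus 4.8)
The plan is to solve the fixed-point equation \eqref{Uni}, namely $y = K_E x + H_E \alpha(y)$, one noise component at a time, exploiting the fact that (A1) and (A2) together impose a triangular dependency structure. Write $v := K_E x \in \FF_q^{m_6}$ and expand the $j$-th coordinate of the equation as $y_j = v_j + \sum_{i} H_{E;j,i}\,\alpha_i(y)$, where $i$ ranges over the injection indices and each $\alpha_i(y)$ depends only on the block $[y_l]_{l \in w_i}$. I would first isolate the crucial observation: for a coordinate $l \in w_k$ and any injection index $i \ge k$, the nesting $w_1 \subset \cdots \subset w_{m_5}$ of (A2) gives $w_k \subseteq w_i$, hence $l \in w_i$, and then (A1) forces $H_{E;l,i} = 0$. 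Consequently, for every $l \in w_k$ the coordinate reduces to $y_l = v_l + \sum_{i < k} H_{E;l,i}\,\alpha_i(y)$, so the block $[y_l]_{l \in w_k}$ depends on the injected noise only through the \emph{earlier} components $\alpha_1(y), \ldots, \alpha_{k-1}(y)$.

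With this triangular structure in hand, uniqueness follows by induction on $k$. For the base case $k=1$, the reduced identity gives $[y_l]_{l \in w_1} = [v_l]_{l \in w_1}$ (the sum over $i<1$ being empty), so $\alpha_1(y) = \alpha_1([v_l]_{l \in w_1})$ is determined by $v$ alone. For the inductive step, assuming $\alpha_1(y), \ldots, \alpha_{k-1}(y)$ are already forced as functions of $v$, the reduced identity expresses each $y_l$ with $l \in w_k$ in terms of $v$ and those earlier values, so the block $[y_l]_{l \in w_k}$, and therefore $\alpha_k(y)$, is uniquely determined by $v$. Once all $\alpha_i(y)$ are pinned down as functions of $v$, the full vector $y = v + H_E \alpha(y)$ is determined uniquely, establishing that \eqref{Uni} has at most one solution.

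For existence I would run the very same recursion to \emph{define} values $a_1, \ldots, a_{m_5}$ (setting $a_k := \alpha_k([\tilde y_l]_{l \in w_k})$ with $\tilde y_l := v_l + \sum_{i<k} H_{E;l,i}\,a_i$), then set $y^\ast := v + H_E a$ and verify it is a genuine fixed point. The verification amounts to checking $\alpha_k(y^\ast) = a_k$ for each $k$, which holds because, for $l \in w_k$, the terms $H_{E;l,i}\,a_i$ with $i \ge k$ again vanish by the A1/A2 argument, so $y^\ast_l = v_l + \sum_{i<k} H_{E;l,i}\,a_i = \tilde y_l$; thus $\alpha_k$ evaluated on $[y^\ast_l]_{l\in w_k}$ returns exactly $a_k$. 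The main obstacle, and the single point where both hypotheses are genuinely needed, is precisely the reduction step showing that the block $[y_l]_{l \in w_k}$ carries no dependence on $\alpha_i(y)$ for $i \ge k$; everything else is a routine forward substitution once that lower-triangular form is secured.
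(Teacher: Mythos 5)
Your proof is correct and takes essentially the same route as the paper's: an induction along the injection order $k=1,\ldots,m_5$, using (A2) to get $w_k \subseteq w_i$ for $i \ge k$ and then (A1) to kill the coefficients $H_{E;l,i}$, so that the block $[y_l]_{l\in w_k}$ and hence $z_k=\alpha_k(y)$ are forced to be functions of $K_E x$. The only difference is that you also carry out the forward-substitution construction verifying \emph{existence} of the fixed point, a step the paper leaves implicit while proving only that any solution is determined by $K_E x$.
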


\begin{proof}
When the causal condition holds,
we show the fact that $y_{j'}$ is given as a function of $K_E x$ for any $j' \in w_{i}$
by induction with respect to the index $i =1, \ldots, m_5$, which expresses the order of the injected information.
This fact yields the uniqueness condition.

For $j \in w_1$, we have$y_j= (K_E x)_j$ because $(H_E \alpha(y))_{j}$ is zero.
Hence, the statement with $j=1$ holds.
We choose $j \in w_{i+1}\setminus w_{i}$.
Let $z_{i'}$ be the $i'$-th injected information.
Due to Conditions (A1) and (A2),
$y_j- (K_E x)_j$ is a function of $z_{1}, \ldots, z_{i}$.
Since the causal condition guarantees that
$z_{1}, \ldots, z_{i}$ are functions of 
$[y_{j'}]_{j' \in w_{i}}$,
$z_{1}, \ldots, z_{i}$ are functions of $K_E x$.
Then, we find that $y_j$ is given as a function of 
$K_E x$ for any 
$j \in w_{i+1}\setminus w_{i}$.
That is, the triplet $(\bm{K},\bm{H},\alpha)$ satisfies the uniqueness condition.
\end{proof}

Now, we consider an equivalent condition to the uniqueness condition holds 
when $\alpha$ is given as a linear function, i.e., $\alpha(y)=Gy$ for a matrix $G$.
Equation \eqref{Uni} is equivalent to the equation 
$(I- H_E G)y= K_E x$.
Hence, the uniqueness is equivalent to the invertability of 
the matrix $I- H_E G$.
In fact, if the causal condition does not hold,
the matrix $I- H_E G$ is not invertible.

\subsubsection{Construction of passive attack model from directed graph}\Label{PDG}
Next, we discuss how we can obtain the generic passive attack model \eqref{F2} from 
a concrete network structure.
We consider the unicast setting of network coding on a network, which is given as 
a graph $({V}, {E})$ with direction, 
where the set ${V}$ of vertices expresses the set of nodes
and the set ${E}:=\{e(1), \ldots, e(m_7)\}$ of edges expresses the set of communication channels, where 
a communication channel means a packet in network engineering.
Here, the directed graph $({V}, {E})$ is not necessarily acyclic.
When a channel transmits information from a node $u\in {V}$ to another node $v\in {V}$, it is written as $(u,v) \in {E}$.

We assume that the transmission on the edge starts at the tail node of the edge in the order of the numbers assigned to the edges, which is called the {\it partial time-ordered condition}.
In the single-use transmission,
the source node has several elements of $\FF_q$
and sends each of them via its outgoing edges 
in the order of assigned number of edges,
where $\FF_q$ is a finite field whose order is a power $q$ of the prime $p$.
Each intermediate node keeps received information via incoming edges.
Then, for each outgoing edge,
the intermediate node calculates one element of $\FF_q$ 
from previously received information,
and sends it via the outgoing edge.
That is, every outgoing information from a node $v(i)$ via a channel $e(j)$ 
depends only on the incoming information into the node $v(i)$ via channels 
$e(j')$ such that $j'<j$.
The operations on all nodes are assumed to be linear on the finite field $\FF_q$ with prime power $q$.
Bob receives the information $Y_B$ in $\FF_q^{m_{4}}$ 
on the edges of a subset $E_B:=\{e(\zeta_B(1)), \ldots, e(\zeta_B(m_{4}))\}\subset E$,
where $\zeta_B$ is a strictly increasing function from $\{1, \ldots, m_4\}$ to $\{1, \ldots, m_7\}$.

Let $\tilde{X}_j$ be the information on the edge $e(j)$.
In the following, we describe the information on the $m_8:=m_7-m_3$ edges that are not directly linked to the source node.
When the edge $e(j)$ is a outgoing edge of the node $v(i)$,
the information $\tilde{X}_{j}$ is given as a linear combination of 
the information on the edges incoming to the node $v(i)$.
We have coefficients $\theta_{j,j'}$ such that
$\tilde{X}_{j}= \sum_{j'}\theta_{j,j'} \tilde{X}_{j'}$, where
$\theta_{j,j'} $ is zero 
unless $ e(j')$ is not an edge incoming to $v(i)$.
The partial time-ordered condition implies that 
\begin{align}
\theta_{j,j'}=0 \hbox{ for } j' \ge j. \Label{GTY}
\end{align}
Now, we define $m_8$ $m_7 \times m_7$ matrices.
That is, we define the $j$-th $m_7 \times m_7$ matrix $M_j$ as follows.
The $j+m_3$-th column vector of the matrix $M_j$ is defined by $[\theta_{j+m_3,j'}]_{1\le j'\le m_7}$.
The remaining part of $M_j$ is defined as the identity matrix.
Then, we have
\begin{align}
Y_{B,j}= \sum_{i=1}^{m_3}(M_{m_8}\cdots M_1)_{\zeta_B(j),i} X_i\Label{KOG}
\end{align}
While the output of the matrix $M_{m_8}\cdots M_1$ takes values in $\FF_q^{m_7}$,
we focus the projection $P_B$ to the subspace $\FF_q^{m_4}$ that corresponds to the $m_4$ components observed by Bob.
That is, $P_B$ is a $m_4 \times m_7$ matrix to satisfy $P_{B;i,j}=\delta_{\zeta_B(i),j}$.
Similarly, 
we use the projection $P_A$ (a $m_7 \times m_3$ matrix) as $P_{A;i,j}=\delta_{i,j}$.
Due to \eqref{KOG}, the matrix $K_B:=P_B  M_{m_8}\cdots M_1 P_A$ satisfies the first equation in \eqref{F2}.

The malicious adversary, Eve, wiretaps the information $Y_E$ in $\FF_q^{m_{6}}$ 
on the edges of a subset $E_E:=\{e(\zeta_E(1)), \ldots, e(\zeta_E(m_{6}))\}\subset E$,
where $\zeta$ is a strictly increasing function from 
$\{1, \ldots, m_6\}$ to $\{1, \ldots, m_7\}$.
Then, we have
\begin{align}
Y_{E,j}= \sum_{i=1}^{m_3}(M_{m_8}\cdots M_1)_{\zeta_E(j),i} X_i\Label{KOG2}
\end{align}
We employ the projection $P_E$ (a $m_6 \times m_7$ matrix) to the subspace $\FF_q^{m_6}$ that corresponds to the $m_6$ components eavesdropped by Eve.
That is, $P_{E;i,j}= \delta_{\zeta(i),j}$.
Then, we obtain the matrix $K_E$ as $P_E  M_{m_8}\cdots M_1 P_A$.
Due to \eqref{KOG}, the matrix $K_E:=P_E  M_{m_8}\cdots M_1 P_A$ satisfies the second equation in \eqref{F2}.

In summary 
the topology and dynamics (operations on the intermediate nodes) of the network, including the places of attached edges decides
the graph $(V,E)$, the coefficients $\theta_{i,j}$, and functions $\zeta_B,\zeta_E$,
uniquely gives the two matrices $K_B$ and $K_E$.
Here, we emphasize that we do not assume the acyclic condition for the graph $({V}, {E})$.
That is, due to the partial time-ordered condition,
we can uniquely define our matrices $K_B$ and $K_E$, which is a similar way to \cite[Section V-B]{ACLY}\footnote{%
$\Lambda$ of Ahlswede-Cai-Li-Yeung corresponds to
the number of edges that are not connected to the source node in our paper.}.

\subsubsection{Construction of active attack model from directed graph}
We construct the generic active attack model from 
a concrete network structure.
We assume that Eve injects the noise in a part of edges $E_A \subset E$
as well as eavesdrops the edges $E_E$ and 
assume the condition 
\begin{align}
E_A \cap \{1, \ldots, m_3\}= \emptyset.\Label{emp}
\end{align}
In fact, when the condition \eqref{emp} does not hold,
the following new graph satisfies the condition \eqref{emp}.
We add new vertexes on the edges $\{e(1), \ldots ,e(m_3)\}$,
and divides these edges into two edges.
Then, these tail parts keep the original numbers,
and these head parts are assigned the numbers $\{m_3+1,\ldots, 2 m_3 \}$.
The numbers of the remaining edges are changed by adding $m_3$.
This modified graph satisfies the condition \eqref{emp}.

The elements of the subset $E_A$ is expressed as
$E_A=\{e(\eta(1)), \ldots, e(\eta(m_{5}))\}$ 
by using a function $\eta$ from 
$\{1, \ldots, m_5\}$ to $\{1, \ldots, m_7\}$.
To give the matrices $H_B$ and $H_E$,
modifying the matrix $M_j$,
we define the new matrix $M_j'$ as follows
The $j+m_3$-th column vector of the new matrix $M_j'$ is defined by 
$[\theta_{j+m_3,j'}+\delta_{j+m_3,j'}]_{1\le j'\le m_7}$.
The remaining part of $M_j'$ is defined as the identity matrix.
Then, we have
\begin{align}
Y_{B,j}= &
\sum_{i=1}^{m_3}(M_{m_8}\cdots M_1)_{\zeta_B(j),i} X_i
+
\sum_{i'=1}^{m_5}(M_{m_8}'\cdots M_1')_{\zeta_B(j),\eta(i')} Z_{i'} 
\Label{KOG3}\\
Y_{E,j}= &
\sum_{i=1}^{m_3}(M_{m_8}\cdots M_1)_{\zeta_E(j),i} X_i
+
\sum_{i'=1}^{m_5}(M_{m_8}'\cdots M_1'- I)_{\zeta_E(j),\eta(i')} Z_{i'}.
\Label{KOG4}
\end{align}
When Eve eavesdrops the edges $E_E \cap E_A$,
she obtains the information on $E_E\cap E_A$ before her noise injection.
Hence, to express her obtained information on $E_E\cap E_A$,
we need to substract  her injected information on $E_E\cap E_A$.
Hence, we need $-I$ in the second term of \eqref{KOG4}.
We introduce the projection $P_{E,A}$ (an $m_7 \times m_5$ matrix) as $P_{A;i,j}=\delta_{i,\eta(j)}$.
Due to \eqref{KOG3} and \eqref{KOG4}, 
the matrices 
$H_B:=P_B  M_{m_8}'\cdots M_1' P_{E,A}$ 
and
$H_E:=P_E  (M_{m_8}'\cdots M_1'-I) P_{E,A}$ satisfy
conditions \eqref{E3} with the matrices
$K_B$ and $K_E$, respectively.
This model ($K_B$, $K_E$, $H_B$, $H_E$)
is called the {\it wiretap and addition model}
determined by $(V,E)$ and $(E_E, E_A,\{\theta_{i,j}\})$, which expresses 
the topology and dynamics.

To discuss the active attack, we consider the condition for the strategy $\alpha$
in addition to the wiretap and addition attack model.
One may assume that 
the tail of the edge $e(j)$ sends the information to the edge $e(j)$
after the head of the edge $e(j-1)$ receives the information to the edge $e(j-1)$,
which we call the {\it full time-ordered condition}.
However, the full time ordered condition does not hold in general even when we reorder the numbers assigned to the edges.
Although we can discuss the active attack only with the partial time-ordered condition,
we discuss it under the full time-ordered condition first. 

When the full time-ordered condition holds,
the function $\eta$ is a strictly increasing function from 
$\{1, \ldots, m_5\}$ to $\{1, \ldots, m_7\}$.
Since Eve can choose the information to be added on the edge $e(i) \in E_A$
based on the obtained information $Y_E$,
the added error $Z_i$ is given as a function $\alpha_i $ of 
the vector $[Y_{E,j}]_{j \in w_i}$ with $w_i:=\{j| \eta(i) \ge \zeta_E(j)\}$.
Since the function $\eta$ is strictly increasing, Condition (A2) for the causal condition holds.
Since the relation \eqref{GTY} implies that 
$M_{m_8}'\cdots M_1'- I$ is a lower triangular matrix with zero diagonal elements,
the strictly increasing property of $\eta$ yield that
\begin{align}
H_{E;j,i}=0 \hbox{ when } \eta(i) \ge \zeta(j) \Label{F14},
\end{align}
which implies Condition (A1) for the causal condition.
In this way,
the full time-ordered condition implies the causal condition.

However, in the realistic setting,
it is possible that Eve might intercept (i.e., wiretap and contaminate) the information of an edge
before the head node of the previous edge receives the information on the edge.
Hence, we consider the case when the partial time-ordered condition holds, but
the full time-ordered condition does not necessarily hold\footnote{%
For an example, we consider the following case.
Eve gets the information on the first edge. Then, she gets 
the information on the second edge before she hands over 
the information on the first edge to the tail node of the first edge.
In this case, she can change the information on the first edge
based on the information on the first and second edges. Then, the time-ordered condition \eqref{F14} does not hold.}.
That is, the function $\eta$ from $\{1, \ldots, m_5\}$ to $E $ is injective but is not necessarily monotone increasing.
Then, we choose the sets $w_i$ to satisfy Condition (A2) for the causal condition 
so that the added error $Z_i$ is given as a function $\alpha_i $ of the vector $[Y_{E,j}]_{j \in w_i}$.
The partial time-ordered condition implies the relation
\begin{align}
j < \gamma(i):=\min \{\zeta_E(j')|   \theta_{\zeta_E(j'),i}\neq 0\}
\Label{LGY}
\end{align}
for $j \in w_{i}$,
which implies the following condition; 
For $ j \in w_i$,
there is no sequence $j=j_1>j_2, \ldots >j_l=\eta(i) $ 
such that 
\begin{align}
\theta_{j_{i},j_{i+1}}\neq 0 .
\end{align}
This condition implies Condition (A1) for the causal condition.  
That is, even when the full time-ordered condition does not hold,
the causal condition can be naturally derived.

Now, we consider the optimal choice of $\eta, \{w_i\}$ for Eve.
That is, we choose the subset $w_i$ as large as possible under the partial time-ordered condition and Condition (A2).
Then, we choose the bijective function $\eta_{o}$ from $\{1, \ldots, m_5\}$
to the set of indexes of elements of $E_A$ 
such that $\gamma \circ \eta_o$ is monotone increasing.
Then, we define $w_{o,i}:=\{ j| \zeta_E(j) < \gamma(\eta_o(i))\}$,
which satisfies Conditions (A1) and (A2) for the causal condition.  
Further, 
for the above choice $ \eta, \{w_i\}$, the condition \eqref{LGY} implies 
$w_{\eta\circ \eta_o^{-1}(i)}\subset w_{o,i} $, i.e., 
$w_{o,i} $ is the largest subset under the partial time-ordered condition and Condition (A2),
which shows the optimality of 
$\eta_o, \{w_{o,i}\}$.
Although the choice of $\eta_o$ is not unique, 
the choice of $w_{o,\eta_o^{-1}(i)}$ for $e(i) \in E_A$
 is unique.

\subsubsection{Examples}
\Label{F1Ex}
In this subsubsection, as an example, we consider the network given in Figs. \ref{F3} and \ref{F1B}.
Alice sends the variables $X_1,\ldots, X_4 \in \FF_q$ 
to nodes $v(1),v(2),v(3),$ and $v(4)$
via the edges
$e(1),e(2),e(3)$, and $e(4)$, respectively.
The edges $e(5), \ldots, e(12)$
send the elements received in the tail node.
The edges $e(13)$ and $e(14)$ 
send the sum of two elements received in the tail node.
The received elements via the edges 
$e(11), e(12), e(13),$ and $e(14)$
are written as $Y_{B,1}, Y_{B,2}, Y_{B,3}$, and $Y_{B,4}$, respectively.
 Then, the matrix $K_B$ is given as 
\begin{align}
K_B= 
\left(
\begin{array}{cccc}
1 & 0 & 0 & 0 \\
0 & 0 & 1 & 0 \\
1 & 1 & 0 & 0 \\
0 & 0 & 1 & 1 
\end{array}
\right).
\end{align}
Then, $m_0=4$.

Now, we assume that Eve eavesdrops and contaminates
the edges 
$e(1), e(2), e(6), e(7)$, and $e(13)$.
We denote the observed information 
and the injected information
on the edges $e(1), e(2), e(6), e(7)$, and $e(13)$
by
$Y_{E,1}, Y_{E,2}, Y_{E,3}, Y_{E,4},Y_{E,5}$
and
$Z_{1}, Z_{2}, Z_{3}, Z_{4},Z_{5}$.
In Fig. \ref{F3A},
Eve adds $Z_{1}, Z_{2}, Z_{3}, Z_{4},Z_{5}$ in edges $e(1), e(2), e(6), e(7)$, and $e(13)$.
Here, Eve injects noises 
Then, the matrices $H_B$, $K_E$, and $H_E$ are given as 
\begin{align}
H_B= 
\left(
\begin{array}{ccccc}
1 & 0 & 0 & 0 & 0\\
0 & 0 & 0 & 0 & 0\\
1 & 1 & 1 & 1 & 1\\
0 & 0 & 0 &  0& 0 
\end{array}
\right), \quad
K_E= 
\left(
\begin{array}{cccc}
0 & 1 & 0 & 0 \\
1 & 0 & 0 & 0 \\
0 & 1 & 0 & 0 \\
1 & 0& 0 & 0 \\
1 & 1 & 0 & 0 
\end{array}
\right), \quad
H_E= 
\left(
\begin{array}{ccccc}
0 & 0 & 0 & 0 &0\\
0 & 0 & 0 & 0 &0\\
0 & 1 & 0 &  0 & 0 \\
1 & 0 & 0 & 0 &0\\
1 & 1 & 1 &  1 & 0 
\end{array}
\right).
\end{align}
Then, $\rank H_B=\rank K_E=2$.
Eve can choose the function $\eta$ as 
\begin{align}
\eta(1)=2,
\eta(2)=1,
\eta(3)=7,
\eta(4)=6,
\eta(5)=13,
\end{align}
and choose the subsets $w_i$ as
\begin{align}
w_1=w_2=\{ 1,2 \},
w_3=w_4=\{1,2,3,4\},
w_5=\{1,2,3,4,5\}.
\end{align}
This case satisfies Conditions (A1) and (A2).
Hence, this model satisfies the causal condition.
Lemma \ref{LL2} guarantees that it also satisfies the uniqueness condition.

\begin{figure}[htbp]
\begin{center}
\includegraphics[scale=0.4]{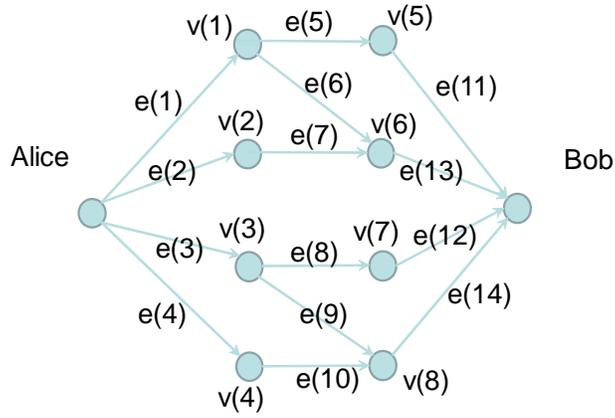}
\end{center}
\caption{Network of Subsubsection \ref{F1Ex} with name of edges}
\Label{F3}
\end{figure}%

\begin{figure}[htbp]
\begin{center}
\includegraphics[scale=0.4]{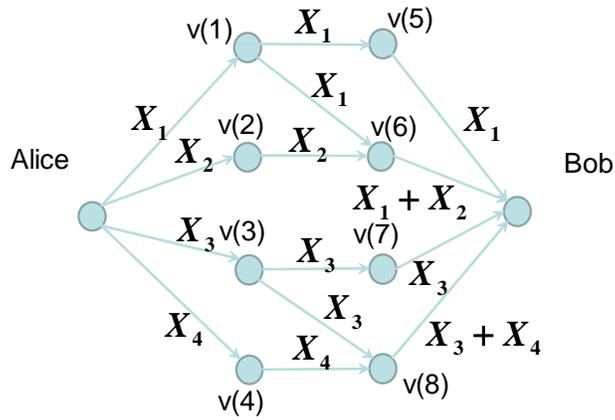}
\end{center}
\caption{Network of Subsubsection \ref{F1Ex} with network flow}
\Label{F1B}
\end{figure}%

\begin{figure}[htbp]
\begin{center}
\includegraphics[scale=0.4]{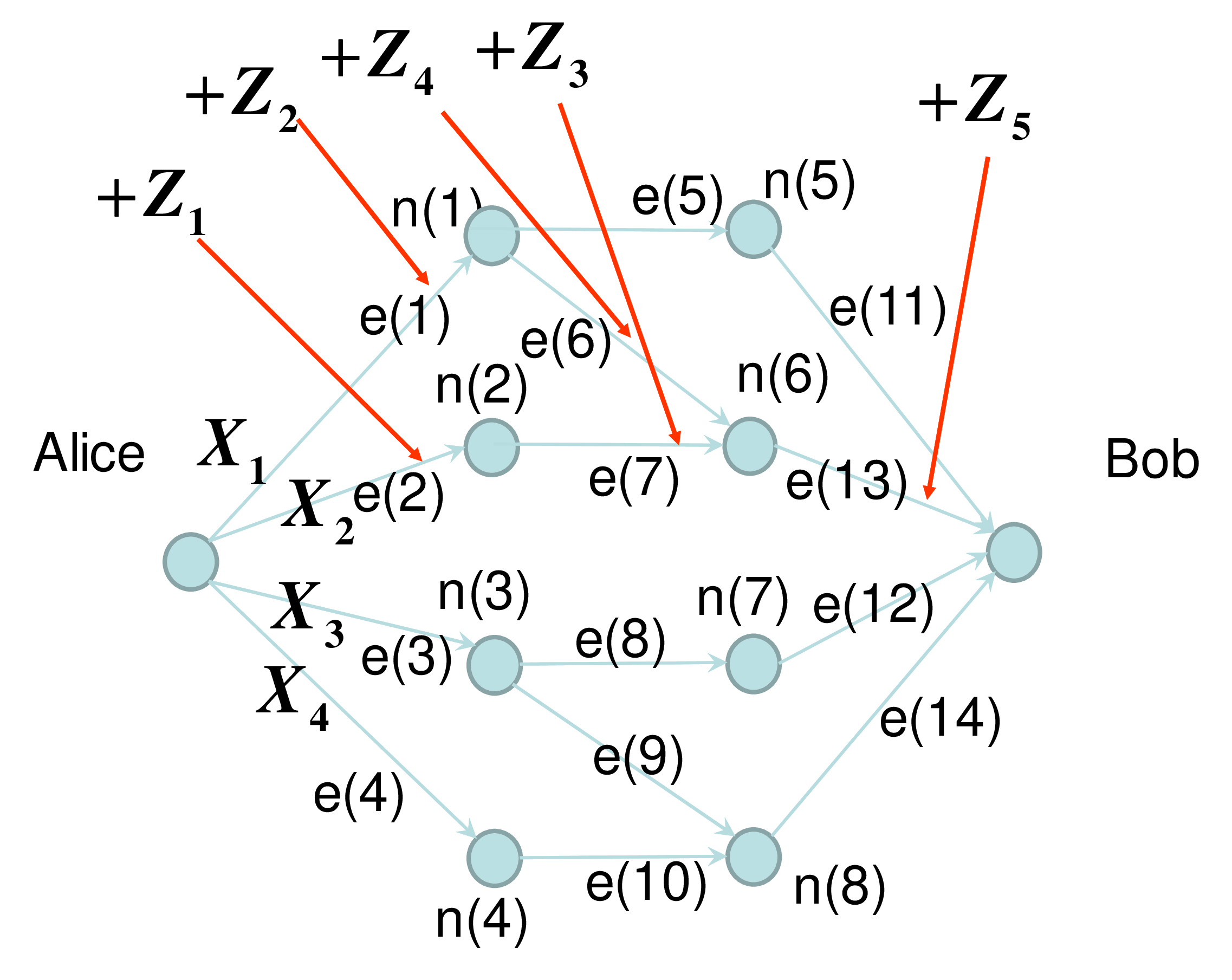}
\end{center}
\caption{Network of Subsubsection \ref{F1Ex} with addition attack}
\Label{F3A}
\end{figure}%


\subsubsection{Wiretap and replacement model}
In the above subsubsections, 
we have 
discussed the case when 
Eve injects the noise in the edges $E_A$ as well as eavesdrops the edges $E_E$.
In this subsubsection,
we assume that $E_A=E_E$
and
Eve eavesdrops the edges $E_E$ and replaces the information on the edges 
$E_A$ by other information.
While this assumption implies $m_5=m_6$ and
and the image of $\eta$ equals the image of $\zeta_E$, 
the function $\eta$ does not necessarily equal the function $\zeta_E$
because the order that Eve sends her replaced information to the heads of edges
does not necessarily equal the order that Eve intercepts the information on the edges.
This case also belongs to general wiretap and addition model \eqref{E3} as follows.
Modifying the matrix $M_j$,
we define the new matrix $M_j''$ as follows.
When there is an index $i$ such that $\zeta_E(i)=j$, 
the $j+m_3$-th column vector of the new matrix $M_j''$ is defined by 
$[\delta_{j+m_3,j'}]_{1\le j'\le m_7}$ and
the remaining part of $M_j''$ is defined as the identity matrix.
Otherwise, $M_j''$ is defined to be $M_j$.
Also, we define another matrix $F$ as follows.
The $\zeta_E (i)$-th column vector of the new matrix $F$ is defined by 
$[\theta_{\zeta_E(i),j'}]_{1\le j'\le m_7}$ and
the remaining part of $F$ is defined as the identity matrix.
Under the condition \eqref{emp},
we have
\begin{align}
Y_{B,j}= &
\sum_{i=1}^{m_3}(M_{m_8}''\cdots M_1'')_{\zeta_B(j),i} X_i
+
\sum_{i'=1}^{m_5}(M_{m_8}''\cdots M_1'')_{\zeta_B(j),\eta(i')} Z_{i'} 
\Label{KOG5}\\
Y_{E,j}= &
\sum_{i=1}^{m_3}(F M_{m_8}''\cdots M_1'')_{\zeta_E(j),i} X_i
+
\sum_{i'=1}^{m_5}(F M_{m_8}''\cdots M_1'')_{\zeta_E(j),\eta(i')} Z_{i'}.
\Label{KOG6}
\end{align}
Then, we choose 
matrices $K_B'$, $K_E'$, $H_B'$, and $H_E'$ as
$K_B':=P_B  M_{m_8}''\cdots M_1'' P_{A}$, 
$K_E':=P_E  FM_{m_8}''\cdots M_1'' P_{A}$, 
$H_B':=P_B  M_{m_8}''\cdots M_1'' P_{E}^T$, 
and
$H_E':=P_E F M_{m_8}''\cdots M_1'' P_{E}^T$,
which satisfy conditions \eqref{E3} due to \eqref{KOG5} and \eqref{KOG6}.
This model ($K_B'$, $K_E'$, $H_B'$, $H_E'$) is called the {\it wiretap and replacement model}
determined by $(V,E)$ and $(E_E, \{\theta_{i,j}\})$.

Next, we discuss the strategy $\alpha'$ under the matrices $K_B'$, $K_E'$, $H_B'$, and $H_E'$ such that
the added error $Z_i$ is given as a function $\alpha_i' $ of the vector $[Y_{E,j}]_{j \in w_i}$.
Since the decision of the injected noise does not depend on the results of the decision,
we impose the causal condition defined in Definition \ref{Def3} for the subsets $w_i$.

When the relation $j \in w_i$ holds with $\zeta_E(j)=\eta(i)$,
a strategy $\alpha'$ on 
the wiretap and replacement model 
($K_B'$, $K_E'$, $H_B'$, $H_E'$)
determined by $(V,E)$ and $(E_E, \{\theta_{i,j}\})$
is written by another strategy $\alpha$ on 
the wiretap and addition model
$K_B$, $K_E$, $H_B$, and $H_E$
determined by $(V,E)$ and $(E_E, E_E,\{\theta_{i,j}\})$,
which is defined as
$\alpha[Y_{E,j'}]_{j' \in w_i}:=
\alpha'[Y_{E,j'}]_{j' \in w_i}- Y_{E,j}$.
In particular, 
due to the condition \eqref{GTY},
the optimal choice $\eta_o,\{w_{o,i}\}$ 
under the partial time-ordered condition
satisfies 
the relation $j \in w_{o,i}$ holds with $\zeta_E(j)=\eta_o(i)$.
That is, under the partial time-ordered condition,
the strategy on the wiretap and replacement model 
can be written by another strategy on the wiretap and addition model.

However, if there is no synchronization among vertexes,
Eve can inject the replaced information to the head of an edge 
before the tail of the edge sends the information to the edge.
Then, the partial time-ordered condition does not hold.
In this case, the relation $j \in w_i$ does not necessarily hold with $\zeta_E(j)=\eta(i)$.
Hence, a strategy $\alpha'$ on 
the wiretap and replacement model 
($K_B'$, $K_E'$, $H_B'$, $H_E'$)
cannot be necessarily written as another strategy on 
the wiretap and addition model
($K_B$, $K_E$, $H_B$, $H_E$).

To see this fact, we discuss an example given in Subsubsection \ref{F1Ex}.
In this example, 
the network structure of the wiretap and replacement attack model 
is given by Fig. \ref{F3B}.
In Fig. \ref{F3B}, we change the order of replacement.
The function $\eta$ is given as 
\begin{align}
\eta(1)=6,
\eta(2)=7,
\eta(3)=13,
\eta(4)=1,
\eta(5)=2,
\end{align}
and choose the subsets $w_i$ as
\begin{align}
w_1=w_2=\{ 1,2 \},
w_3=w_4=w_4=\{1,2,5\}.
\end{align}
This case satisfies Conditions (A1) and (A2).
Hence, this attack satisfies the causal condition.
In contrast, this strategy cannot be written as a causal strategy on 
the wiretap and addition model
$K_B$, $K_E$, $H_B$, and $H_E$
because the replaced information on the edges $e(1)$ and $e(2)$
depend on the obtained information $Y_5$ on the edge $e(13)$.


\begin{figure}[htbp]
\begin{center}
\includegraphics[scale=0.38]{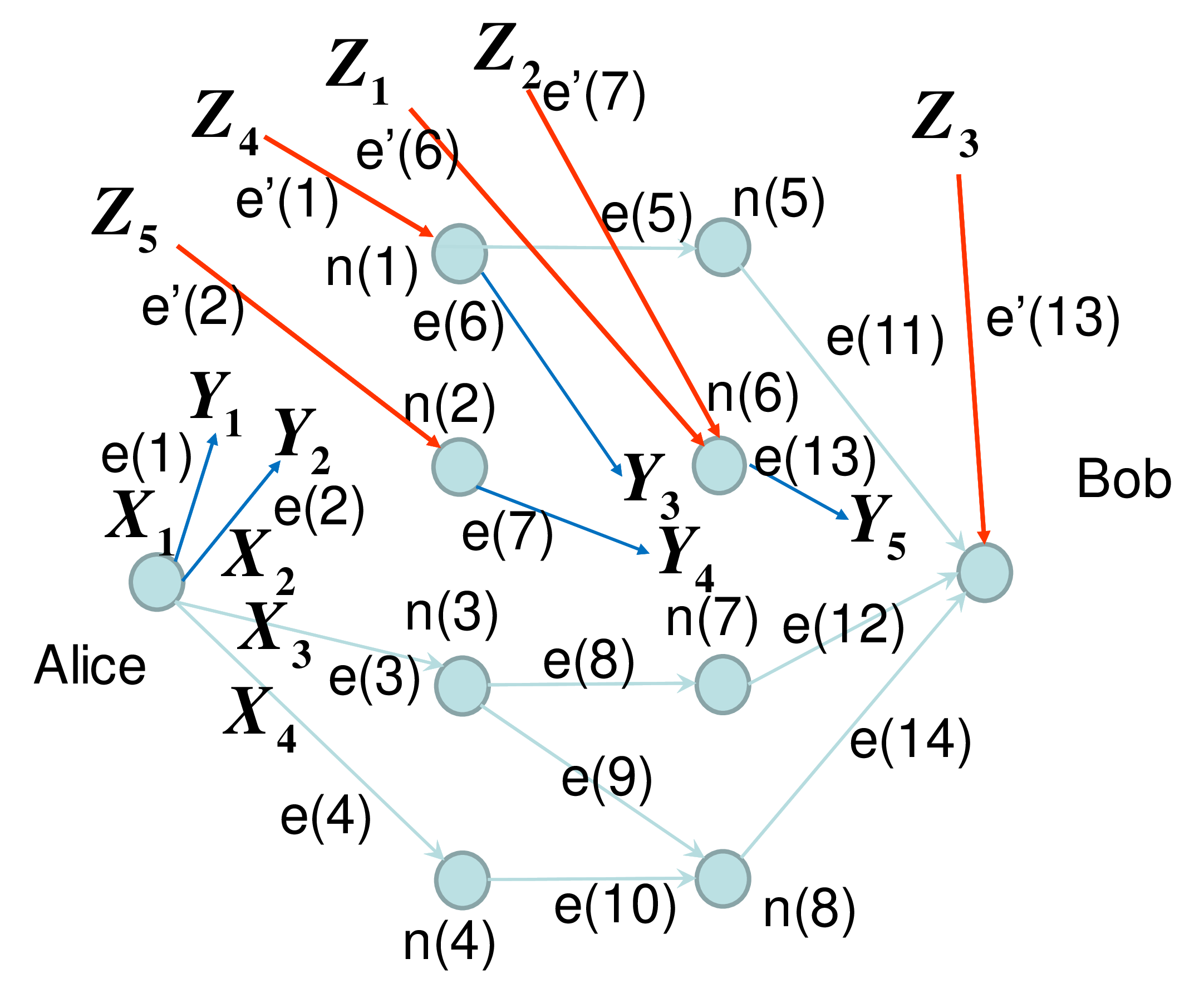}
\end{center}
\caption{Network of Subsubsection \ref{F1Ex} with wiretap and replacement attack.
The edges $e'(1)$, $e'(2)$, $e'(6)$, $e'(7)$,
and $e'(13)$ are the edges to inject the replaced information.}
\Label{F3B}
\end{figure}%

\newpage

\subsection{Finite-length setting and reduction theorem}\Label{S2-5}
Now, we consider the $n$-transmission setting, where Alice uses the same network $n$ times to send a message to Bob.
Alice's input variable (Eve's added variable) is given as 
a matrix $X^n \in\FF_q^{m_{{3}} \times n}$ (a matrix $Z^n \in\FF_q^{m_{5} \times n}$),
and Bob's (Eve's) received variable is given as 
a matrix $Y_B^n\in\FF_q^{m_{{4}} \times n}$ (a matrix $Y_E^n\in\FF_q^{m_{6} \times n}$).
We assume that the topology and dynamics of the network 
and the edge attacked by Eve
do not change during $n$ transmissions.
Their relation is given as
\begin{align}
Y_B^n&=K_B X^n+ H_B Z^n, \Label{F1n} \\
Y_E^n&=K_E X^n+ H_E Z^n. \Label{F2n}
\end{align}
Then, we assume that Eve's strategy $\alpha^n$
is given as a function from $Y_E^n$ to $Z^n$.
We extend the uniqueness condition to the $n$-transmission version.
\begin{definition}
For any value of $K_E x^n$,
there uniquely exists $y^n \in \FF_q^{m_6 \times n} $ such that
\begin{align}
y^n= K_E x^n+ H_E \alpha^n(y).\Label{Uni2}
\end{align}
This condition is called the {\it uniqueness condition}.
\end{definition}

Here, there are two possibilities to define the time ordering of the transmission.
In the first case, 
while $X^n \in \FF_q^{m_{{3}} \times n}$ is composed of $n$ column vectors,
the transmission of the $i$-th column vector is 
performed in the network after the transmission of the $i-1$-th column vector. 
Hence, Eve is allowed to decide the attack $\alpha^n$ based on the previous memory.
In the second case, $n$ transmissions of the $j$-th edge is performed
after $n$ transmissions of the $j-1$-th edge.
In this way, possible strategies $\alpha^n$ depends on this choice of the time ordering.
However, in both cases, Eve's strategy $\alpha^n$ needs to satisfy the uniqueness condition, which can be shown in the same way as Lemma \ref{LL2}.
Hence, we only impose the uniqueness condition.

We formulate a code to discuss the secrecy.
Let ${\cal M}$ and ${\cal L}$ be the message set and the set of values of the scramble random number,
which is often called the private randomness. 
Then, an encoder is given as a function $\phi_n$ from ${\cal M} \times {\cal L} $
to $\FF_q^{m_{{3}} \times n}$, and the decoder is given as $\psi_n$ from $\FF_q^{m_{{4}} \times n}$ to ${\cal M}$.
That is, the decoder does not use the scramble random number $L$ because 
it is not shared with the decoder. 
Our code is the pair $(\phi_n,\psi_n)$, and is denoted by $\Phi_n$.
Then, we denote the message and  the scramble random number as $M$ and $L$. 
The cardinality of ${\cal M}$ is called the size of the code and is denoted by $|\Phi_n|$.
More generally, when we focus on a sequence $\{l_n\}$ instead of $\{n\}$,
an encoder $\phi_n$ is a function from ${\cal M} \times {\cal L} $
to $\FF_q^{m_{{3}} \times l_n}$, and the decoder $\psi_n$ is a function from $\FF_q^{m_{{4}} \times l_n}$ to ${\cal M}$.

Here, we treat $K_B,K_E,H_B$, and $H_E$ as deterministic values, and denote the pairs $(K_B,K_E)$ and $(H_B,H_E)$ by $\bm{K}$ and $\bm{H}$, respectively.
Also, we assume that the matrices $\bm{K}$ and $\bm{H}$ are not changed during transmission.
In the following, we fix $\Phi_n,\bm{K},\bm{H},\alpha^n$.
As a measure of the leaked information, we adopt
the mutual information $I(M; Y_E^n,Z^n)$ between $M$ and Eve's information $Y_E^n$ and $Z^n$.
Since the variable $Z^n$ is given as a function of $Y_E^n$,
we have $I(M; Y_E^n,Z^n)=I(M; Y_E^n)$.
Since the leaked information is given as a function of 
$\Phi_n,\bm{K},\bm{H},\alpha^n$ in this situation, 
we denote it by $I(M;Y_E^n)[\Phi_n,\bm{K},\bm{H},\alpha^n]$.
If we always choose $Z^n=0$, the attack is the same as the passive attack.
This strategy is denoted by $\alpha^n=0$.
When $\bm{K},\bm{H}$ are treated as random variables independent of $M,L$,
the leaked information is given as the expectation of $I(M;Y_E^n)[\Phi_n,\bm{K},\bm{H},\alpha^n]$.
This probabilistic setting expresses the following situation. 
Eve cannot necessarily choose edges to be attacked by herself.
But she knows the positions of the attacked edges,
and chooses her strategy depending on the attacked edges.

\begin{remark}
It is better to remark that there are two kinds of formulations in network coding
even when the network has only one sender and one receiver.
Many papers \cite{Cai2002,Cai06a,Cai06,CN11,CN11b} adopt the formulation, where
the users can control the coding operation in intermediate nodes.
However, this paper adopts another formulation, in which,
the users can control the coding operation only for the input variable $X$ and the output variable $Y_B$ like the papers \cite{JLKHKM,Jaggi2008,JL,Yao2014,KMU,Zhang}.
In the former setting, it is often allowed to employ the private randomness in intermediate nodes.
However, in our setting, 
since no coding operation is allowed in intermediate nodes,
the private randomness is not employed in intermediate nodes.
Remember that the operations in intermediate nodes are linear and are not changed during transmission.
Here, we define the coding operation is considered to be an operation across several alphabets.

In addition, any linear operation over the vector space $\FF_q^{m_3 n}$
is allowed for the encoding process,
our formulation can be regarded as vector linearity. 
\end{remark}

Now, we have the following reduction theorem.
\begin{theorem}[Reduction Theorem]\Label{T1}
When the triplet $(\bm{K},\bm{H},\alpha^n)$
satisfies the uniqueness condition, 
Eve's information $Y_E^n(\alpha^n)$ with strategy $\alpha^n$ can be calculated from 
Eve's information $Y_E^n(0)$ with strategy $0$ (the passive attack),
and $Y_E^n(0)$ is also calculated from $Y_E^n(\alpha^n)$.
Hence, we have the equation
\begin{align}
I(M;Y_E^n)[\Phi_n,\bm{K},0,0]
=
I(M;Y_E^n)[\Phi_n,\bm{K},\bm{H},0]
=
I(M;Y_E^n)[\Phi_n,\bm{K},\bm{H},\alpha^n]. \Label{NCT}
\end{align}
\end{theorem}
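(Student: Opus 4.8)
The plan is to show that, under the uniqueness condition, Eve's active observation $Y_E^n(\alpha^n)$ and her passive observation $Y_E^n(0)$ are deterministic functions of each other, and then to read off the three mutual-information equalities from the invariance of mutual information under such an invertible correspondence. Before touching the active case, I would dispose of the first equality in \eqref{NCT}: with the strategy $\alpha^n=0$ one has $Z^n=0$, so \eqref{F2n} collapses to $Y_E^n=K_E X^n$ irrespective of $H_E$. Thus $Y_E^n$ is literally the same random variable in the models $(\bm{K},0,0)$ and $(\bm{K},\bm{H},0)$, and $I(M;Y_E^n)[\Phi_n,\bm{K},0,0]=I(M;Y_E^n)[\Phi_n,\bm{K},\bm{H},0]$ holds trivially.

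The substance lies in the second equality. Write $Y_E^n(0):=K_E X^n$ for the passive observation, while the active observation $Y_E^n(\alpha^n)$ is by definition the value satisfying the self-referential relation $Y_E^n(\alpha^n)=K_E X^n+H_E\alpha^n\bigl(Y_E^n(\alpha^n)\bigr)$. To recover $Y_E^n(\alpha^n)$ from $Y_E^n(0)$, I would set $K_E x^n=Y_E^n(0)$ and appeal directly to the uniqueness condition \eqref{Uni2}: it guarantees that the equation $y^n=K_E x^n+H_E\alpha^n(y^n)$ has exactly one solution $y^n$, and that solution is precisely $Y_E^n(\alpha^n)$. Hence $Y_E^n(\alpha^n)$ is a well-defined deterministic function $f$ of $Y_E^n(0)$. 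The reverse recovery is the easy direction and needs no uniqueness: from $Y_E^n(\alpha^n)$ Eve computes the injected noise $Z^n=\alpha^n\bigl(Y_E^n(\alpha^n)\bigr)$ and subtracts it off via \eqref{F2n}, obtaining $Y_E^n(0)=Y_E^n(\alpha^n)-H_E\alpha^n\bigl(Y_E^n(\alpha^n)\bigr)$, a deterministic function $g$ of $Y_E^n(\alpha^n)$. By the fixed-point relation, $g\circ f$ is the identity on the image of $K_E$, so $f$ and $g$ realize the claimed mutual calculability of the two observations.

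With the correspondence in hand, the information-theoretic step is routine. Since $Y_E^n(\alpha^n)=f(Y_E^n(0))$, the data-processing inequality gives $I(M;Y_E^n(\alpha^n))\le I(M;Y_E^n(0))$; since $Y_E^n(0)=g(Y_E^n(\alpha^n))$, it gives the reverse inequality, whence $I(M;Y_E^n(\alpha^n))=I(M;Y_E^n(0))$. Recalling that $Z^n$ is a function of $Y_E^n$, so that $I(M;Y_E^n,Z^n)=I(M;Y_E^n)$, this is exactly the second equality $I(M;Y_E^n)[\Phi_n,\bm{K},\bm{H},0]=I(M;Y_E^n)[\Phi_n,\bm{K},\bm{H},\alpha^n]$, completing \eqref{NCT}.

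I expect the forward direction to be the only genuinely delicate step. Passing from the passive variable to the active one is not an algebraic substitution but the resolution of a fixed-point equation in $y^n$ whose right-hand side may be nonlinear through $\alpha^n$; it is exactly here that the uniqueness condition is indispensable, since without it $f$ would fail to be well defined (admitting either no solution or several) and the whole correspondence, hence the equality of leaked information, would collapse. Everything else is either immediate or a standard two-sided application of data processing.
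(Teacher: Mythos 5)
Your proposal is correct and follows essentially the same route as the paper's own proof: the uniqueness condition yields the deterministic map from $Y_E^n(0)$ to $Y_E^n(\alpha^n)$, the subtraction $Y_E^n(\alpha^n)-H_E\alpha^n(Y_E^n(\alpha^n))$ yields the reverse map, and two applications of data processing give the equality, with the first equality in \eqref{NCT} being definitional. The only difference is presentational: you spell out the fixed-point reasoning and the identity $g\circ f=\mathrm{id}$ more explicitly than the paper does.
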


This theorem shows that the information leakage of the active attack with the strategy $\alpha^n$ 
is the same as 
the information leakage of the passive attack.
Hence, to guarantee the secrecy under an arbitrary active attack,
it is sufficient to show secrecy under the passive attack.

\begin{proof}
Since the first equation follows from the definition, 
we show the second equation.
We define two random variables $Y_E^n(0):=K_E X^n$ and 
$Y_E^n(\alpha^n):=K_E X^n+ H_E Z^n$.
Due to the uniqueness condition of $Y_E^n(\alpha^n)$, 
for each $Y_E^n(0)=K_E X^n$,
we can uniquely identify 
$Y_E^n(\alpha^n)$.
Therefore, we have 
$I(M;Y_E^n(0) )  \ge I(M;Y_E^n(\alpha^n))$.
Conversely, since $Y_E^n(0)$ is given as a function of $Y_E^n(\alpha^n)$, $Z^n$, and $H_E$,
we have the opposite inequality.
\end{proof}

\begin{remark}[Number of choices]
To compare passive and active attacks,
we count the number of choices of both attacks.
While the passive attack is characterized by the matrix $K_E$,
the information leaked to Eve in the passive attack depends only on the kernel of the matrix $K_E$.
To characterize the information leaked to Eve,
we consider two matrices to be equivalent when their kernels are the same.
In a passive attack, when we fix the rank of $K_E$ (the dimension of leaked information),
by taking into account the equivalent class,
the number of possible choices is upper bounded by
$q^{m_{2}(m_{{3}}-m_{2})} $. 
With an active attack, this calculation is more complicated.
For simplicity, we consider the case with $n=1$.
To consider the minimum number of choices of $\alpha^n$,
we assume condition \eqref{F14}.
(When we do not make this assumption, the number of choices is larger.)
We do not count the choice for the inputs on the edge $\eta(i)$ with $\eta(i) \ge \zeta(m_{6})$ because it does not affect Eve's information.
Then, even when we fix the matrices $K_E,H_E$,
the number of choices of $\alpha^n$ is 
\begin{align}
q^{\sum_{i: \eta(i) < \zeta(m_{6})} q^{T_i}}, \Label{F17}
\end{align}
where $T_i:= \max \{j | \eta(i) \ge \zeta(j)\}$.
Notice that $T_i=i$ when $E_A=E_E$.
If we count the choice on the remaining edges, we need to multiply $q^{\sum_{i: \eta(i) \ge \zeta(m_{6})} q^{T_i}}$ on \eqref{F17}.
For a generic natural number $n$,
the number of choices of $\alpha^n$ is
\begin{align}
 q^{n\sum_{i: \eta(i) < \zeta(m_{6})} q^{n T_i}} \Label{F18}.
\end{align}
\end{remark}

\begin{remark}
Theorem \ref{T1} discusses the unicast case.
It can be trivially extended to the multicast case
because we do not discuss the decoder.
It can also be extended to the multiple unicast case, 
whose network is composed of several pairs of sender and receiver.
When there are $k$ pairs in this setting, the messages $M$ and the scramble random numbers $L$ have the forms
$(M_1, \ldots, M_k)$ and $(L_1, \ldots, L_k)$.
Thus, we can apply Theorem \ref{T1} to the multiple unicast case.
\end{remark}

\begin{remark}
One may consider that 
Theorem \ref{T1} requires the acyclic condition for the network.
However, this condition is not needed because the statement of this theorem follows from 
the uniqueness condition.
That is, when a cyclic network does not satisfy the condition \eqref{NCT},
it does not satisfy the uniqueness condition.
\end{remark}

\begin{remark}
One may consider the following type of attack for an integer $n$ when Alice sends the $i$-th transmission after 
Bob receives the $i-1$-th transmission.
Eve changes the edge to be attacked in the $i$-th transmission
dependently of the information that Eve obtains in the previous $i-1$ transmissions.
Such an attack was discussed in \cite{SHIOJI} when there is no noise injection.
Theorem \ref{T1} does not consider such a situation because it assumes that Eve attacks the same edges
for each transmission to make consistency with the latter sections (Sections \ref{S3} and \ref{S5}).
However, Theorem \ref{T1} can be applied to this kind of attack in the following way.
That is, we find that Eve's information with noise injection can be simulated by 
Eve's information without noise injection even when the attacked edges are changed in the above way.
When we have $n$ transmission over the graph $(V,E)$, 
we consider the graph $(V_n,E_n)$, where
$V_n:=\{(v,i)\}_{v\in V, 1\le i \le n}$ and $E_n:=\{(e,i)\}_{e\in E, 1\le i \le n}$
and $(v,i)$ and $(e,i)$ express the vertex $v$ and the edge $e$ on the $i$-th transmission, respectively.
Hence, when we apply Theorem \ref{T1} with $n=1$ to the graph $(V_n,E_n)$, we obtain the above statement.
\end{remark}

\section{Asymptotic setting with secrecy and robustness}\label{S3}
Next, under the same assumption as that in Section \ref{S2-5},
we consider the asymptotic setting by taking account of robustness as well as secrecy
while Eve's strategy $\alpha$ is assumed to satisfy the uniqueness condition.
We previously assumed that 
the matrices $K_B$, $K_E$, $H_B$, and $H_E$, i.e., 
the topology and dynamics of the network and the edge attacked by Eve do not change during $n$ transmissions.
Now, we assume that Eve knows these matrices and 
that Alice and Bob know none of them because Alice and Bob often do not know the topology and/nor dynamics of the network and/nor the places of the edges attacked by Eve.
However, due to the limitation of Eve's ability,
we assume that the dimension of the information leaked to Eve and 
the rank of the information injected by Eve are limited to $m_2 $ and $m_1$, respectively.
Indeed, when the original network is given by the graph $(V,E)$ and
Eve eavesdrops at most $m_6'$ edges and injects the noise at most $m_5'$ edges,
we have $m_2 \le m_6'$ and $m_1\le m_5'$.
This evaluation is till valid even in the wiretap and replacement model.
Therefore, it is natural to assume the upper bounds of these dimensions.
(See Remark \ref{R-Dim}.)

When Eve adds the error $Z^n$, there is a possibility that Bob cannot recover the original information $M$.
This problem is called {\it robustness}, and may be regarded as a kind of error correction.
Under the conventional error correction, 
the error $Z^n$ is treated as a random variable subject to a certain distribution.
However, our problem 
is different from the conventional error correction 
because the decoding error probability depends on the strategy $\alpha^n$.
Hence, we denote it by $P_e[\Phi_n,\bm{K},\bm{H},\alpha^n]$.
Then, the following proposition is known.

\begin{proposition}[\protect{\cite{JLKHKM,Jaggi2008,JL,Yao2014}}]\Label{T2}
Assume that 
$m_{2}+m_{1}< m_{0}$.
There exists a sequence of codes $\Phi_{n}$ of block-length $l_n$ on a finite field $\FF_q$
whose message set is $\FF_{q}^{k_{n}}$ such that
\begin{align}
&\lim_{n \to \infty} \frac{k_n}{l_n} = m_{0}-m_{1}\\
&\lim_{n \to \infty} \max_{ \bm{K},\bm{H}} \max_{\alpha^n} 
P_e[\Phi_n, \bm{K},\bm{H},\alpha^n]=0,
\end{align}
where the maximum is taken with respect to
$(K_B,H_B,K_E,H_E) \in 
\FF_q^{m_{{4}} \times m_{{3}}}\times 
\FF_q^{m_{{4}} \times m_{5}}\times 
\FF_q^{m_{6} \times m_{{3}}}\times 
\FF_q^{m_{6} \times m_{5}}$ with \eqref{1-6} and 
\eqref{1-6X}.
Here, there is no restriction for the choice of $m_5$ and $m_6$.
\end{proposition}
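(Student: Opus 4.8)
The plan is to read Proposition \ref{T2} as a universal adversarial error-correction statement and to build the code $\Phi_n$ from three layers: a full-rate linear inner layer that exploits $\rank K_B=m_0$, a rank-metric layer that absorbs the rank-$m_1$ corruption $H_B Z^n$, and a verification layer that exploits the causality of $\alpha^n$ together with the hypothesis $m_1+m_2<m_0$. First I would reduce the end-to-end relation $Y_B^n=K_B X^n+H_B Z^n$ to an effective channel. Since $\rank K_B=m_0$, in the error-free case Bob's observation determines exactly $m_0$ independent $\FF_q$-linear combinations of $X^n$ per use, so the honest part of the link behaves like a min-cut-$m_0$ matrix channel; since $\rank H_B=m_1$, the additive term $H_B Z^n$ is a matrix of rank at most $m_1$, i.e.\ an adversarial rank-$m_1$ error supported in an unknown $m_1$-dimensional column space, while by \eqref{1-6X} Eve observes only a rank-$m_2$ image $K_E X^n$. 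Because Alice and Bob know neither $\bm{K}$ nor $\bm{H}$, I would use a non-coherent (lifted) random linear coding scheme at the source, so that the unknown transfer operation and the unrestricted dimensions $m_5,m_6$ are absorbed automatically and only the rank parameters $m_0,m_1,m_2$ survive in the analysis.

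The second step is the rate calculus. A purely algebraic rank-metric argument (Gabidulin/MRD codes, in the spirit of the subspace-coding framework underlying the cited constructions) \emph{corrects} a rank-$m_1$ error at a cost of $2m_1$ redundancy, yielding only rate $m_0-2m_1$. To reach the cut-set value $m_0-m_1$ I would downgrade the task from correcting the error to \emph{locating and erasing} it, which costs only $m_1$. The hypothesis $m_1+m_2<m_0$ is precisely what makes this safe: since Eve's observation occupies only $m_2$ of the $m_0$ honest dimensions and her error occupies $m_1$, there remain $m_0-m_1-m_2>0$ directions in Bob's effective space that Eve neither sees nor can address. I would plant in those hidden directions a short random verification hash (parity checks with coefficients drawn from the private randomness $L$), carried through the network. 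By Lemma \ref{LL2} the injection is causal, so $\alpha^n$ is a function of strictly earlier observations and, combined with the limited rank of $K_E$, cannot anticipate the hidden hash; hence any corruption that would carry the received word onto a different valid codeword survives the hash check only with probability $O(q^{-l_n})$.

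The third step assembles the asymptotics. For any realized error $H_B Z^n$, Bob list-decodes his observation to the short list of messages whose codewords lie within rank distance $m_1$, then uses the hash to select the genuine one, declaring failure only on a hash collision or on the rare event that the honest rank falls short. The essential point is that I must \emph{not} union bound over the family of strategies, whose cardinality in \eqref{F18} is doubly exponential in $n$: instead, the independence of the hidden hash from Eve's (causal, rank-$m_2$-limited) injection makes the collision probability $O(q^{-l_n})$ for every strategy \emph{pointwise}, so that taking the supremum over strategies and over the admissible $\bm{K},\bm{H}$ constrained by \eqref{1-6}--\eqref{1-6X} preserves the bound. Letting $l_n\to\infty$ then forces $\max_{\bm{K},\bm{H}}\max_{\alpha^n}P_e[\Phi_n,\bm{K},\bm{H},\alpha^n]\to 0$ while $k_n/l_n\to m_0-m_1$.

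The hard part is exactly this independence-versus-adaptivity tension. Reaching $m_0-m_1$ rather than $m_0-2m_1$ requires trading error correction for hash-based detection, and detection is trustworthy only if Eve cannot forge a hash-consistent error. Turning the uniqueness/causal condition (Lemma \ref{LL2}, the order constraint \eqref{F14}) together with $m_1+m_2<m_0$ into a clean statement that $H_B Z^n$ is statistically independent of the hidden verification randomness, uniformly over the doubly-exponential family of adaptive strategies counted in \eqref{F18}, is the crux; once that independence is secured, the achievable rate and the uniform vanishing of the error follow from standard rank-metric and random-coding estimates.
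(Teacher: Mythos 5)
Your high-level architecture---a rate-$(m_0-m_1)$ main layer plus a short secret hash used to verify Bob's decoding, with the hypothesis $m_1+m_2<m_0$ making the hash unreachable by Eve, and a pointwise (per-strategy) collision bound instead of a union bound over the doubly-exponential family \eqref{F18}---is indeed the skeleton of the argument in the cited works and in the paper's appendix. But the proposal has a genuine gap at its center: the step ``plant in those hidden directions a short random verification hash'' is not implementable, because the $m_0-m_1-m_2$ directions that Eve neither sees nor can address are determined by $K_E$ and $H_B$, and in this model Eve knows these matrices while Alice and Bob know neither of them. Alice therefore cannot aim anything at those directions. What the proof actually requires is a \emph{universal} sub-code of asymptotically negligible rate that is simultaneously secret and robust for every admissible $(\bm{K},\bm{H})$; this is precisely Proposition \ref{T2BL} (cited from \cite{JL,Yao2014}), and it is the only place where the hypothesis $m_1+m_2<m_0$ is used. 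The main code is then built on top of that sub-channel (Proposition \ref{T2C}): Alice draws secret seeds $V_1,\dots,V_m$ and a random invertible scramble $U_0$, ships the Vandermonde hash $U_2=MU_1$ through the sub-channel, and Bob recovers $M$ from the hash-consistency equation $U_3\bar{Y}_B^{n}U_1=U_2$. You explicitly flag this independence-versus-adaptivity issue as ``the crux'' and leave it unresolved, so the proposal assumes the very ingredient the hypothesis is needed to deliver. Note also that the independence you need comes from the \emph{secrecy} of the sub-channel carrying the seeds (Eve's entire view is nearly independent of $V_1,\dots,V_m$, whatever her strategy), not from the causality of $\alpha^n$ via Lemma \ref{LL2}; causality/uniqueness is what powers Theorem \ref{T1}, which is a separate matter.

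A second concrete gap concerns the field size. The statement requires codes over the \emph{fixed} field $\FF_q$ with block-length $l_n$, whereas every hash-collision estimate of the type you invoke (Lemma \ref{LJa} gives $(n/q')^m$, not $q^{-l_n}$) needs the alphabet to grow polynomially in $n$, namely $q_n'/n^{m_0+1}\to\infty$. The paper bridges this by regarding $\FF_q$-symbols as symbols of the extension field $\FF_{q'}$ with $q'=q^{t_n}$, $t_n=\lceil (m_0+1)\log n/\log q\rceil$, and block-length $l_n=t_n n$; without this reduction your claimed $O(q^{-l_n})$ failure probability is unsubstantiated. Finally, a difference of route rather than a gap: your list-decode-then-verify step follows the original proofs of \cite{JLKHKM,Jaggi2008,Yao2014}, which need hash length $(m_0-m_1)m_0+1$; the paper's own proof deliberately avoids list decoding and gets away with $m=m_0+1$ hash symbols by solving directly for the decoding matrix $U_3$. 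Granting Proposition \ref{T2BL} and the extension-field reduction, either route completes the proof, but as written your proposal leaves both of its most delicate components unproved.
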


The existing proof of Proposition \ref{T2}
is given as a combination of several results.
Each part of the existing proof is hard to read because it omits the detail derivation. 
Hence, for readers' convenience, 
we give its alternative proof in Appendix \ref{Ap1}, which has an improvement over the existing proof. 
Combining Theorem \ref{T1} and Proposition \ref{T2}, we obtain the following theorem.
\begin{theorem}\Label{T3}
We assume that
$m_{2}+m_{1}< m_{0}$.
There exists a sequence of codes $\Phi_{n}$ 
of block-length $l_n$ on finite field $\FF_q$
whose message set is $\FF_q^{k_n}$ such that
\begin{align}
&\lim_{n \to \infty} \frac{k_n}{l_n} = m_{0}-m_{1}-m_{2}\\
&\lim_{n \to \infty} \max_{ \bm{K},\bm{H}} \max_{\alpha^n} 
P_e[\Phi_n,\bm{K},\bm{H},\alpha^n]=0 \Label{H3-181} \\
&
\max_{ \bm{K},\bm{H}} \max_{\alpha^n} 
I(M;Y_E^n)[\Phi_n,\bm{K},\bm{H},\alpha^n]=0, \Label{H3-182}
\end{align}
where the maximum is taken in the same way as with Proposition \ref{T2}.
\end{theorem}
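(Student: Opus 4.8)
The plan is to build the desired code by composing two layers: an outer secrecy layer (universal hashing / linear scrambling against the eavesdropper) on top of the robustness-achieving code guaranteed by Proposition \ref{T2}. First I would invoke Proposition \ref{T2} to obtain a sequence of codes $\Phi_n'$ of block-length $l_n$ whose message set is $\FF_q^{k_n'}$ with $k_n'/l_n \to m_0 - m_1$ and with decoding error vanishing uniformly over all admissible $(\bm K, \bm H)$ and all strategies $\alpha^n$. This settles the robustness requirement \eqref{H3-181}: whatever secrecy construction I layer on top, I will arrange it so that Bob's decoder first runs the $\Phi_n'$ decoder and thereby recovers the full inner message, from which the actual message $M$ is extracted.

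Next I would reserve a portion of the inner message space of rate $m_2$ for a scramble random number $L$ (the private randomness of the code definition), leaving a net message rate of $(m_0 - m_1) - m_2 = m_0 - m_1 - m_2$, matching the claimed $\lim k_n/l_n$. Concretely, I would choose a linear encoder $\phi_n$ that maps $(M,L)$ into the inner message $\FF_q^{k_n'}$ via a matrix whose structure implements a universal$_2$ hash (an Arıkan/Csiszár-style linear two-universal family, as in the hashing lemma of \cite{bennett95privacy,HILL,hayashi11} used in \cite{Matsumoto2011,Matsumoto2011a}), so that the message component is masked by $L$ against any linear wiretap map $K_E$ of rank at most $m_2$. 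The crucial simplification here is that, by Theorem \ref{T1} (the Reduction Theorem), the leaked information satisfies
\begin{align}
I(M;Y_E^n)[\Phi_n,\bm K,\bm H,\alpha^n]
= I(M;Y_E^n)[\Phi_n,\bm K,0,0],
\end{align}
so I only need to establish the secrecy bound \eqref{H3-182} for the passive attack $\alpha^n = 0$, i.e. against the linear map $Y_E^n = K_E X^n$. This reduces the active-attack secrecy problem to an ordinary linear secure-network-coding / wiretap-II secrecy problem.

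For the passive secrecy estimate I would argue that, since $\rank K_E = m_2$, the eavesdropped variable $Y_E^n$ reveals at most an $m_2 l_n$-dimensional linear image of $X^n$, and a randomly (or explicitly, via \cite{KMU}) chosen linear scrambling of rate $m_2$ makes $M$ exactly independent of that image, yielding $I(M;Y_E^n)=0$ rather than merely vanishing. The universality over all rank-$m_2$ matrices $K_E$ is obtained by taking the maximum over the finitely many kernels, whose number is bounded (as noted in the Number-of-choices remark) by $q^{m_2(m_3 - m_2)}$, so a single code works uniformly; this gives the exact equality $\max_{\bm K,\bm H}\max_{\alpha^n} I(M;Y_E^n)=0$ in \eqref{H3-182}. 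The main obstacle is the simultaneous satisfaction of robustness and secrecy: I must verify that the secrecy scrambling layer does not destroy the decodability of $\Phi_n'$ and that the rate charged to $L$ is exactly $m_2$ and not more, so that the net rate is precisely $m_0 - m_1 - m_2$; the nesting of an invertible scrambling inside the robust code, together with the kernel-counting argument for universality, is where the real care is needed, whereas the active-to-passive reduction and the robustness transfer are immediate from Theorem \ref{T1} and Proposition \ref{T2} respectively.
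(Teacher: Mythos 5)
Your proposal follows the same architecture as the paper's proof: an inner robust code from Proposition \ref{T2}, an outer linear scrambling layer carrying the scramble randomness $L$, a decoder that first runs the inner decoder and then extracts $M$ (so robustness \eqref{H3-181} is inherited), and the use of Theorem \ref{T1} to reduce the secrecy analysis to the passive attack $\alpha^n=0$ together with a union bound over the finitely many wiretap matrices $K_E$. The gap is in the one step you flag as "where the real care is needed" but then resolve by assertion: the claim that a scrambling of rate \emph{exactly} $m_2$, chosen randomly or explicitly via \cite{KMU}, makes $M$ \emph{exactly} independent of $Y_E^n$ for \emph{every} admissible $K_E$ simultaneously. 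Neither route works as stated. For the random choice: exact independence against a fixed $K_E$ amounts to requiring that the image of the message part of the composed map (wiretap matrix composed with inner encoder composed with your embedding of $(M,L)$) be contained in the image of its scramble part; when the scramble dimension $m_2 l_n$ exactly equals the leaked dimension, this is a full-rank condition on an essentially square $m_2 l_n \times m_2 l_n$ matrix, which a random choice satisfies only with probability about $\prod_{i\ge 1}(1-q^{-i})$ --- a constant bounded away from $1$ (roughly $0.29$ for $q=2$), not $1-o(1)$. The union bound over the $\sim q^{m_2(m_3-m_2)}$ wiretap row-spaces therefore fails completely. For the explicit route: the rank-metric constructions of \cite{KMU,SK} give universal exact secrecy against wiretap maps of the form $B X^n$ with $B \in \FF_q^{m_2 \times m_3}$ acting on the packet index, but your secrecy layer never sees such a map; it sees the composition of $K_E$ with the inner robust encoder $\phi_n$ of Proposition \ref{T2}, which is an arbitrary $\FF_q$-linear map of rank at most $m_2 l_n$ without the packet/tensor structure the MRD machinery requires.

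The paper closes exactly this hole with two ingredients missing from your plan. First, the scramble space is taken slightly larger than $m_2 l_n$, namely $\FF_q^{m_2 l_n + \lceil \sqrt{l_n}\rceil}$; this slack is asymptotically free (the rate still tends to $m_0-m_1-m_2$), and it is what makes the leftover-hash bound bite: Proposition \ref{T4} with $H_{1+s}(M|Y_E^{l_n}) \ge (k_n - m_2 l_n)\log q$, followed by Markov's inequality and the union bound over $K_E$, yields leakage at most $q^{-\lceil\sqrt{l_n}\rceil + m_6 m_3 + 1} \to 0$ simultaneously for all $K_E$. Second, the exact equality in \eqref{H3-182} is then obtained by the quantization argument of \cite{Matsumoto2011a}: since the entire code is linear, $I(M;Y_E^{l_n})$ is an integer multiple of $\log q$, so once the bound falls below $\log q$ the leakage is exactly $0$. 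Without the slack your per-matrix failure probability is a constant and the derandomization collapses; without the quantization step you can at best conclude vanishing leakage, not the exact zero that \eqref{H3-182} demands. With both ingredients added, your outline becomes precisely the paper's proof.
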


Before our proof, we prepare basic facts about information-theoretic security. 
We focus on a random hash function $f_R$ from ${\cal X}$ to ${\cal Y}$ with random variable $R$ deciding the function $f_R$.
It is called {\it universal2} when 
\begin{align}
{\rm Pr}\{ f_R(x)=y\}\le \frac{|{\cal Y}|}{|{\cal X}|}
\end{align}
for any $x \in {\cal X}$ and $y \in {\cal Y}$.

For $s\in (0,1]$, we define the conditional R\'{e}nyi entropy $H_{1+s}(X|Z)$ for the joint distribution $P_{XZ}$ as \cite{hayashi11}
\begin{align}
H_{1+s}(X|Z):= \frac{-1}{s}\log \sum_{z \in {\cal Z}} P_Z(z) \sum_{x \in {\cal X}} P_{X|Z}(x|z)^{1+s},
\end{align}
which is often denoted by $H_{1+s}^{\uparrow}(X|Z)$ in \cite{TBH,HW}.
When $X$ obeys the uniform distribution, we have 
\begin{align}
H_{1+s}(X|Z) \ge  \log\frac{  |{\cal X}|}{|{\cal Z}|}. \Label{F15}
\end{align}

\begin{proposition}
\cite{bennett95privacy,HILL}\cite[Theorem 1]{hayashi11}\Label{T4}
\begin{align}
I(f_R(X);Z|R) \le \frac{e^{s \log |{\cal Y}|- H_{1+s}(X|Z)}}{s}
\end{align}
\end{proposition}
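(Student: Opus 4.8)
The plan is to follow the standard three-step route for leftover-hashing bounds expressed through the conditional R\'enyi entropy of order $1+s$, writing $Y:=f_R(X)$ and letting $P_{\mathrm{mix}}$ denote the uniform distribution on ${\cal Y}$. Throughout I use that the hash seed $R$ is chosen independently of the pair $(X,Z)$, so that $P_{Z|R}=P_Z$. First I would reduce the conditional mutual information to an average of divergences against $P_{\mathrm{mix}}$: for each fixed seed $r$ one has $I(Y;Z|R=r)=H(Y|R=r)-H(Y|Z,R=r)\le \log|{\cal Y}|-H(Y|Z,R=r)$, and the right-hand side equals $\sum_z P_Z(z)\,D(P_{Y|Z=z,R=r}\,\|\,P_{\mathrm{mix}})$. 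Averaging over $R$ gives
\begin{align}
I(Y;Z|R)\le E_R\sum_z P_Z(z)\,D(P_{Y|Z=z,R=r}\,\|\,P_{\mathrm{mix}}).
\end{align}

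Second, I would bound each divergence by the order-$(1+s)$ power sum. For a distribution $P$ on ${\cal Y}$ set $\phi(s):=\log\big(|{\cal Y}|^{s}\sum_{y}P(y)^{1+s}\big)$. Since $\phi$ is the logarithm of the moment generating function $E_{Y\sim P}[e^{s\log(|{\cal Y}|P(Y))}]$, it is convex with $\phi(0)=0$ and $\phi'(0)=\log|{\cal Y}|-H(P)=D(P\|P_{\mathrm{mix}})$; hence the tangent-line inequality $\phi(s)\ge s\,\phi'(0)$ together with $\log t\le t-1$ yields
\begin{align}
D(P\|P_{\mathrm{mix}})\le \frac{|{\cal Y}|^{s}\sum_{y}P(y)^{1+s}-1}{s}.
\end{align}
Applying this pointwise in $(z,r)$ and inserting it into the first display gives $I(Y;Z|R)\le \frac1s(|{\cal Y}|^{s}A-1)$, where $A:=E_R\sum_z P_Z(z)\sum_y P_{Y|Z,R}(y|z,r)^{1+s}$.

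Third, I would bound the averaged power sum $A$, which is where the universal2 property enters. Writing $P_{Y|Z,R}(y|z,r)=\sum_{x:f_r(x)=y}P_{X|Z}(x|z)$ and expanding the $(1+s)$-th power, the diagonal terms ($x=x'$ inside a collision class) sum to $\sum_z P_Z(z)\sum_x P_{X|Z}(x|z)^{1+s}=e^{-sH_{1+s}(X|Z)}$, while the off-diagonal terms carry the collision indicator $\mathbf{1}[f_R(x)=f_R(x')]$, whose $R$-expectation is controlled by the universal2 property. The target is
\begin{align}
A\le |{\cal Y}|^{-s}+\sum_z P_Z(z)\sum_x P_{X|Z}(x|z)^{1+s}.
\end{align}
Substituting into $I(Y;Z|R)\le\frac1s(|{\cal Y}|^{s}A-1)$, the constant $1$ cancels against $|{\cal Y}|^{s}\cdot|{\cal Y}|^{-s}$, leaving $\frac{|{\cal Y}|^{s}}{s}\sum_z P_Z(z)\sum_x P_{X|Z}(x|z)^{1+s}$, which by the definition of $H_{1+s}(X|Z)$ is exactly the right-hand side of Proposition~\ref{T4}.

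I expect the main obstacle to be the third step for general $s\in(0,1]$: producing the clean leading term $|{\cal Y}|^{-s}$ rather than a lossier bound such as $|{\cal Y}|^{-1}\sum_x P_{X|Z}(x|z)^{s}$. The naive route (subadditivity of $t\mapsto t^{s}$ on each collision class) is tight only at $s=1$, where the argument collapses transparently to the familiar collision-entropy leftover hash bound $I\le e^{\log|{\cal Y}|-H_2(X|Z)}$; for $s<1$ one needs a sharper convexity/power-mean estimate for the $(1+s)$-th power of a sum over a collision class, so that after averaging over $R$ the off-diagonal contribution collapses precisely to $|{\cal Y}|^{-s}$. This is the delicate inequality on which the whole bound rests, and I would isolate it as a separate lemma before assembling the three displays.
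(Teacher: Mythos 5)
The paper itself contains no proof of Proposition \ref{T4}: it is imported verbatim from \cite[Theorem 1]{hayashi11}, so your attempt can only be measured against that source. Your first two steps are correct and complete, and they match the standard argument: the reduction of $I(f_R(X);Z|R)$ to $E_R\sum_z P_Z(z)\,D(P_{Y|Z=z,R=r}\,\|\,P_{\mathrm{mix}})$ uses only the independence of the seed, and your log-MGF/tangent-line bound $s\,D(P\|P_{\mathrm{mix}})\le\phi(s)\le |{\cal Y}|^{s}\sum_y P(y)^{1+s}-1$ is a clean packaging of the passage from Kullback--Leibler divergence to the order-$(1+s)$ power sum (in \cite{hayashi11} this is done via monotonicity to the R\'enyi divergence plus $\log t\le t-1$, with the same outcome). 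Note also that what your assembly produces, $I\le \frac{1}{s}e^{s(\log|{\cal Y}|-H_{1+s}(X|Z))}$, is the intended statement: the display in Proposition \ref{T4} has dropped parentheses, as one sees from how it is invoked in the proof of Theorem \ref{T3}.

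The genuine gap is the lemma you leave open in the third step, $A\le |{\cal Y}|^{-s}+e^{-sH_{1+s}(X|Z)}$. Your diagnosis is accurate: there is no multinomial ``expansion'' of a $(1+s)$-th power for fractional $s$, and applying subadditivity of $t\mapsto t^s$ all the way down to individual terms yields only the lossy $|{\cal Y}|^{-1}\sum_x P_{X|Z}(x|z)^{s}$. But the sharper estimate you ask for is one application of subadditivity followed by Jensen, and since the whole proposition rests on it, it should be written out. Write $P_{Y|Z,R}(y|z,r)^{1+s}=P_{Y|Z,R}(y|z,r)\cdot P_{Y|Z,R}(y|z,r)^{s}$, so that
\begin{align*}
\sum_y P_{Y|Z,R}(y|z,r)^{1+s}
=\sum_x P_{X|Z}(x|z)\Big(P_{X|Z}(x|z)+\sum_{x'\ne x:\,f_r(x')=f_r(x)}P_{X|Z}(x'|z)\Big)^{s}.
\end{align*}
Apply $(a+b)^s\le a^s+b^s$ \emph{once}, splitting off only the diagonal term $P_{X|Z}(x|z)^{s}$; then, for the remaining term, move $E_R$ inside the $s$-th power using concavity and monotonicity of $t\mapsto t^{s}$ for $s\in(0,1]$:
\begin{align*}
E_R\Big[\Big(\sum_{x'\ne x}\mathbf{1}\{f_R(x')=f_R(x)\}P_{X|Z}(x'|z)\Big)^{s}\Big]
\le \Big(\sum_{x'\ne x}\Pr\{f_R(x')=f_R(x)\}\,P_{X|Z}(x'|z)\Big)^{s}
\le |{\cal Y}|^{-s},
\end{align*}
by the collision bound $\Pr\{f_R(x')=f_R(x)\}\le 1/|{\cal Y}|$ for $x'\neq x$ --- which, incidentally, is the form of the universal2 property actually needed here (the inline condition displayed before Proposition \ref{T4} is itself garbled). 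Summing against $P_{X|Z}(x|z)$ and averaging over $z$ gives exactly your target bound on $A$, and your assembly of the three displays then constitutes a correct proof, essentially identical in substance to the one in \cite{hayashi11}.
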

for $s \in (0,1]$.

\begin{proofof}{Theorem \ref{T3}}
We choose a sequence of codes $\{\Phi_n=(\phi_n,\psi_n)\}$  given in Corollary \ref{T2}.
We fix $ \bar{k}_n:= k_n- m_{2} {l_n}- \lceil \sqrt{l_n}\rceil$.
Now, we choose a universal2 linear surjective random hash function $f_R$ from $\FF_q^{k_n}$ to $\FF_q^{\bar{k}_n}$.

To construct our code, we consider a virtual protocol as follows. 
First, Alice sends a larger message $M$ by using the code $\Phi_n$,
and Bob recovers it.
Second, Alice randomly chooses $R$ deciding the hash function $f_R$ and sends it to Bob via a public channel.
Finally, Alice and Bob apply the hash function $f_R$ to their message, and denote the  result value by $\bar{M}$ so that Alice and Bob share the information  $\bar{M}$ with a probability of close to $1$.

Since the conditional mutual information between
$\bar{M}$ and $Y_E^{l_n}$ depends on $\Phi_n,\bm{K},\bm{H},\alpha^n$,
we denote it by $ I(\bar{M};Y_E^{l_n}|R)[\Phi_n,\bm{K},\bm{H},\alpha^n]$.
Theorem \ref{T1} shows $ I(\bar{M};Y_E^{l_n}|R)[\Phi_n,\bm{K},\bm{H},\alpha^n]=
 I(\bar{M};Y_E^{l_n}|R)[\Phi_n,\bm{K},\bm{H},0]$, which does not depend on $K_B,H_B ,H_E$ and depends only on $K_E$.
Now, we evaluate this leaked information via a similar idea to that reported in \cite{Matsumoto2011}.
Since inequality \eqref{F15} implies that 
$H_{1+s}(M|Y_E^{l_n}) \ge( k_n -{l_n} m_{2}) \log q$,
Proposition \ref{T4} yields
\begin{align}
& I(\bar{M};Y_E^{l_n}|R)[\Phi_n,\bm{K},\bm{H},0]
\le \frac{e^{s (\bar{k}_n\log q - H_{1+s}(M|Y_E^{l_n}))}}{s} \nonumber \\
\le & \frac{q^{s (\bar{k}_n- k_n +{l_n} m_{2})}}{s}
\le \frac{q^{-s \lceil \sqrt{l_n}\rceil }}{s}.
\end{align}
We set $s=1$.
For each matrix $K_E \in \FF_q^{m_{6}\times m_{{3}}}$ satisfying $\rank K_E= m_2$,
Markov inequality guarantees that the inequality
\begin{align}
I(\bar{M};Y_E^{l_n})[\Phi_n,\bm{K},\bm{H},0]|_{R=r}
\le q^{- \lceil \sqrt{l_n}\rceil +c+1} \Label{F20a}
\end{align}
holds at least with probability $1-q^{-c-1}$.
Since the number of matrices $K_E$ satisfying $\rank K_E=m_{2} $ is upper bounded by $ q^{m_{6}m_{{3}}}$,
there exists a matrix $K_E \in \FF_q^{m_{6}\times m_{{3}}}$ such that
$\rank K_E= m_2$ and \eqref{F20a} does not hold
at most with probability $q^{m_{6}m_{{3}}} q^{-c-1}$.
Hence, \eqref{F20a} holds for any matrix $K_E \in \FF_q^{m_{6}\times m_{{3}}}$ satisfying $\rank K_E= m_2$
at least with probability $1-q^{m_{6}m_{{3}}} q^{-c-1}$.
Letting $c$ be $m_{6}m_{3}$, we have
\begin{align}
I(\bar{M};Y_E^{l_n})[\Phi_n,\bm{K},\bm{H},0]|_{R=r}
\le q^{- \lceil \sqrt{l_n}\rceil +m_{6}m_{{3}}+1} \Label{F20}
\end{align}
for any matrix $K_E \in \FF_q^{m_{6}\times m_{{3}}}$ satisfying $\rank K_E= m_2$
at least with probability $1-\frac{1}{q}$.
Therefore, there exists a suitable hash function $f_r$ such that
\begin{align*}
I(\bar{M};Y_E^{l_n})[\Phi_n,\bm{K},\bm{H},0]|_{R=r}
\le q^{- \lceil \sqrt{l_n}\rceil +m_{6}m_{{3}}+1},
\end{align*}
which goes to zero as $n$ goes to infinity because 
$m_{6}m_{{3}}+1$ is a constant.
Since the code is linear,
Eve observes a subspace of input information
$\FF_q^{\bar{k}_n} $.
Hence, the amount of leaked information is an integer times of
$\log q$.
Hence, as discussed in  \cite{Matsumoto2011a},
when $l_n$ is sufficiently large, there exists a suitable hash function $f_r$ such that
\begin{align*}
I(\bar{M};Y_E^{l_n})[\Phi_n,\bm{K},\bm{H},0]|_{R=r}
=0.
\end{align*}

Now, we return to the construction of real codes.
We choose the sets ${\cal M}$ and ${\cal L}$ as
$\FF_q^{\bar{k}_n}$ and $\FF_q^{m_{2} {l_n}+ \lceil \sqrt{l_n}\rceil}$, respectively.
Since the linearity and the surjectivity of $f_r$ implies that
$|f_r^{-1}(x)|=q^{m_{2} {l_n}+ \lceil \sqrt{l_n}\rceil}$ for any element $x\in {\cal M} $,
we can define the invertible function $\bar{f}_r$ from ${\cal M}\times {\cal L}$
to the domain of $f_r$, i.e., $\FF_q^{k_n} $ such that $\bar{f}_r^{-1} (f_r^{-1}(x))=\{x\}\times {\cal L}$
for any element $x\in {\cal M} $.
This condition implies that $f_r \circ \bar{f}_r(x,y)=x$ for $(x,y)\in {\cal M}\times {\cal L}$.
Then, we define our encoder as $\bar{\phi}_n:= \phi_n \circ \bar{f}_r$,
and our decoder as $\bar{\psi}_n:= f_r\circ \psi_n $.
The sequence of codes $\{(\bar{\phi}_n,\bar{\psi}_n)\}$
satisfies the desired requirements.
\end{proofof}

\begin{remark}\Label{R-Dim}
If we replace the condition \eqref{1-6X} by the condition
\begin{align}
\rank H_B\le m_{1}, ~ \rank K_E\le m_{2} ,
\Label{1-6X2}
\end{align}
the Proposition \ref{T2} and Theorem \ref{T3} still hold due to the following reason.
For $(K_B,H_B,K_E,H_E)$ to satisfy \eqref{1-6} and \eqref{1-6X2}, 
there exists $(K_B',H_B',K_E',H_E')$ to satisfy \eqref{1-6} and \eqref{1-6X} such that 
$P_e[\Phi_n, \bm{K},\bm{H},\alpha^n] \le P_e[\Phi_n, \bm{K}',\bm{H}',\alpha^n]$
and
$I(M;Y_E^n)[\Phi_n,\bm{K},\bm{H},\alpha^n]\le
I(M;Y_E^n)[\Phi_n,\bm{K}',\bm{H}',\alpha^n]$.
Hence,
the Proposition \ref{T2} and Theorem \ref{T3} still hold under this modification.
\end{remark}

\begin{remark}[Efficient code construction]
We discuss an efficient construction of our code from a code 
$(\phi_n,\psi_n)$ given in Corollary \ref{T2} with $q=2$.
A modified form of the Toeplitz matrices is also shown to be 
a universal2 linear surjective hash function, 
which is given by a concatenation $(T(S), I)$ of the 
$(m_{2} l_n+ \lceil \sqrt{l_n}\rceil) \times \bar{k}_n$ Toeplitz matrix $T(S)$ and 
the $\bar{k}_n \times \bar{k}_n$ identity matrix $I$ \cite{Hayashi-Tsurumaru},
where $S$ is the random seed used to decide the Toeplitz matrix and belongs to $\FF_2^{k_n-1}$.
The (modified) Toeplitz matrices are particularly useful in practice, because there exists an efficient multiplication algorithm using the fast Fourier transform algorithm with complexity $O(l_n\log l_n)$. 

When the random seed $S$ is fixed, 
the encoder for our code is given as follows.
By using the scramble random variable $L \in \FF_2^{m_{2} l_n+ \lceil \sqrt{l_n}\rceil)}$,
the encoder $\bar{\phi}_n$ is given as $\phi_{n}
\Big( 
\Big(
\begin{array}{cc} 
I & - T(S) \\
0 & I
\end{array}
\Big)
\Big(
\begin{array}{c} 
M \\
L
\end{array}
\Big)
\Big)$
because 
$
(I,T(S))
\Big(
\begin{array}{cc} 
I & - T(S) \\
0 & I
\end{array}
\Big)
= (I, 0)$.
(The multiplication of Toeplitz matrix $T(S)$ can be performed as a part of a circulant matrix. 
For example, the reference \cite[Appendix C]{Hayashi-Tsurumaru}
provides a method to give a circulant matrix.).
A more efficient construction for univeral2 hash function is discussed in \cite{Hayashi-Tsurumaru}.
Hence, the decoder $\bar{\psi}_n$ is given as $Y_B^{l_n} \mapsto (I,T(S)) \psi_{n}(Y_B^{l_n})$.
\end{remark}

\begin{remark}
Here, we clarify the difference between our results and the setting of the preceding papers \cite{KMU,Yao2014,Zhang,Rashmi}, which consider correctness and secrecy.
Their secrecy analysis is different from our analysis
although the code construction in \cite{KMU,Yao2014,Zhang} does not depend on the concrete form of 
matrices $K_B, K_E, H_B, H_E$, which is similar to our code construction.

While the papers \cite{SK,Yao2014} considered correctness when the error exists,
it discusses the secrecy only when there is no error.
Similarly, the paper \cite{Rashmi} considers a different active adversary model, in which,
it discusses
the node-repair and data-reconstruction operations
even in the presence of such an attack
while the model of passive eavesdroppers in the paper \cite{Rashmi} 
discusses the secrecy with respect to the message to be transmitted.
Indeed, the papers \cite{SK,Yao2014} provided a statement similar to Theorem \ref{T3}.
However, it showed only Eq. \eqref{H3-181} and 
$\lim_{n \to \infty} \max_{ \bm{K},\bm{H}} 
I(M;Y_E^n)[\Phi_n,\bm{K},\bm{H},0]=0$
instead of \eqref{H3-182}
by combining Proposition \ref{T2B} and the result of the paper \cite{SK}. 
To show \eqref{H3-182}, we need to employ Theorem \ref{T1}.
If we do not apply Theorem \ref{T1} in step \eqref{F20} in our proof of Theorem \ref{T3},
we have to multiply the number of choices of strategy $\alpha^n$.
As a generalization of \eqref{F17}, this number is given in \eqref{F18},
which grows up double-exponentially. 
Hence, our proof of Theorem \ref{T3} does not work without the use of Theorem \ref{T1}.

While the papers \cite[Proposition 5]{KMU}\cite{Zhang} consider the secrecy when the error exists,
it addresses the amount of leaked information only when the eavesdropper does not know the information of the noise.
That is, they evaluate the mutual information between $ M$ and $Y_E^n $.
However, our analysis evaluates the leaked information when the eavesdropper knows the information of the noise.
That is, we address the mutual information between $ M$ and the pair $(Y_E^n,Z^n) $.
\end{remark}

\section{Asymptotic setting with secrecy}\Label{S5}
Next, we consider the case when only the secrecy is imposed and the robustness is not imposed.
In this case, we impose the correctness of the case with passive attack instead of the robustness.
That is, we impose the following condition.
\begin{align}
\lim_{n \to \infty} \max_{ \bm{K}} P_e[\Phi_n, \bm{K},0,0]
=\lim_{n \to \infty} \max_{ \bm{K},\bm{H}} P_e[\Phi_n, \bm{K},\bm{H},0]=0.
\Label{H3-18K}
\end{align}
Here, as the secrecy, we impose the following condition.
\begin{align}
 \max_{ \bm{K},\bm{H}} \max_{\alpha^n} 
I(M;Y_E^n)[\Phi_n,\bm{K},\bm{H},\alpha^n]=0\Label{47-2} .
\end{align}
Here, both maximums are taken with respect to
$(K_B,K_E) \in 
\FF_q^{m_{{4}} \times m_{{3}}}\times 
\FF_q^{m_{{2}} \times m_{3}}$ with \eqref{1-6}.
We notice that the situation of the correctness \eqref{H3-18K}
is different from the situation of the secrecy \eqref{47-2}.
The correctness \eqref{H3-18K} addresses only the case with passive attack,
but the secrecy \eqref{47-2} addresses the cases with active attack.
The following is the reason why we consider this setting.

Consider that Alice and Bob can communicate with each other by using a public channel, which allows Alice and Bob to communicate with each other without any error, but
the secrecy is not guaranteed.
In this case, when Alice and Bob share a sufficient number of secret random variables,
they can communicate with each other securely.
To share such secret random variables,
they can send them via the secure network coding.
Now, we consider this problem in an asymptotic setting, where
the secrecy condition \eqref{H3-182} is definitely required.
However, robustness \eqref{H3-181} is not necessary 
because they can check whether or not the transmitted random number is correct 
by using an error verification test with a public channel after the transmission \cite[Section VIII]{Fung} \cite[Step 4 of Protocol 2]{H17}.
Hence, due to the error verification,
it is sufficient to impose condition \eqref{H3-18K} instead of \eqref{H3-181}.
Indeed, it is not easy to check whether $\bm{H}$ and $\alpha^n$ are $0$
even when the error verification test is passed
because there is a possibility that the error caused by $\bm{H}$ and $\alpha^n$ 
can be corrected by the code $\Phi_n$.
Since we cannot ignore the possibility that $\bm{H}$ and $\alpha^n$ are not $0$,
we cannot relax the secrecy condition \eqref{47-2}.
This setting appears when we consider quantum key distribution, as explained in Section \ref{S6}.
We use the following theorem to analyze this problem.

\begin{theorem}\Label{T3B}
There exists a sequence of codes $\Phi_{n}$ 
of block-length $l_n$ on finite field $\FF_q$
whose message set is $\FF_q^{k_n}$ such that conditions \eqref{H3-18K} and \eqref{47-2} and
\begin{align}
&\lim_{n \to \infty} \frac{k_n}{l_n} = m_{0}-m_{2} \Label{47-1}
\end{align}
holds.
\end{theorem}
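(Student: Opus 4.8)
The plan is to follow the proof of Theorem~\ref{T3} essentially verbatim, changing only the code from which we start. In Theorem~\ref{T3} the starting point is the robust code of Proposition~\ref{T2}, whose rate is $m_{0}-m_{1}$ precisely because it must decode correctly even when Eve injects rank-$m_{1}$ noise. Here that robustness is replaced by the far weaker requirement \eqref{H3-18K}, which only asks for correct decoding under the passive attack $Z^{n}=0$ (note that the strategy $\alpha^{n}=0$ forces $Z^{n}=0$, so the two error probabilities in \eqref{H3-18K} coincide and both reduce to decoding $Y_{B}^{l_{n}}=K_{B}X^{l_{n}}$ with $\rank K_{B}=m_{0}$). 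Such a code exists at the full rate $m_{0}$; it is furnished by Proposition~\ref{T2B}, or equivalently by Proposition~\ref{T2} specialized to $m_{1}=0$, in which case $H_{B}=0$ and Bob's output is untouched by any injected noise. This gives a sequence $\Phi_{n}=(\phi_{n},\psi_{n})$ of block-length $l_{n}$ with message set $\FF_{q}^{k_{n}}$, $k_{n}/l_{n}\to m_{0}$, satisfying \eqref{H3-18K}.

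Onto this code I would graft the privacy-amplification step of Theorem~\ref{T3} unchanged. Put $\bar{k}_{n}:=k_{n}-m_{2}l_{n}-\lceil\sqrt{l_{n}}\rceil$ and choose a universal2 linear surjective hash function $f_{R}$ from $\FF_{q}^{k_{n}}$ to $\FF_{q}^{\bar{k}_{n}}$, writing $\bar{M}:=f_{R}(M)$. Since $\rank K_{E}\le m_{2}$, inequality \eqref{F15} yields $H_{1+s}(M|Y_{E}^{l_{n}})\ge(k_{n}-m_{2}l_{n})\log q$, and Proposition~\ref{T4} with $s=1$ bounds the passive leakage $I(\bar{M};Y_{E}^{l_{n}}|R)[\Phi_{n},\bm{K},\bm{H},0]$ by $q^{-\lceil\sqrt{l_{n}}\rceil}$. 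A Markov-inequality argument together with a union bound over the at most $q^{m_{2}m_{3}}$ matrices $K_{E}\in\FF_{q}^{m_{2}\times m_{3}}$ then produces a single hash seed $r$ for which this bound holds simultaneously for every $K_{E}$; because the leaked information is an integer multiple of $\log q$, it is in fact exactly $0$ once $l_{n}$ is large enough, just as in \cite{Matsumoto2011a}.

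The decisive point is that this zero-leakage statement is about the passive attack, whereas the secrecy requirement \eqref{47-2} must hold against every active strategy $\alpha^{n}$ obeying the uniqueness condition. This is exactly where the Reduction Theorem (Theorem~\ref{T1}) enters: it equates $I(\bar{M};Y_{E}^{l_{n}})[\Phi_{n},\bm{K},\bm{H},\alpha^{n}]$ with $I(\bar{M};Y_{E}^{l_{n}})[\Phi_{n},\bm{K},0,0]$, so vanishing passive leakage upgrades automatically to \eqref{47-2}. I would then assemble the real code as in Theorem~\ref{T3}: using the linearity and surjectivity of $f_{r}$, define an invertible map $\bar{f}_{r}$ from ${\cal M}\times{\cal L}$ onto $\FF_{q}^{k_{n}}$ with ${\cal M}=\FF_{q}^{\bar{k}_{n}}$ and ${\cal L}=\FF_{q}^{m_{2}l_{n}+\lceil\sqrt{l_{n}}\rceil}$ satisfying $f_{r}\circ\bar{f}_{r}(x,y)=x$, and set $\bar{\phi}_{n}:=\phi_{n}\circ\bar{f}_{r}$, $\bar{\psi}_{n}:=f_{r}\circ\psi_{n}$. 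The decoder $\bar{\psi}_{n}$ inherits the passive-attack correctness of $\psi_{n}$, so \eqref{H3-18K} holds, and $\bar{k}_{n}/l_{n}\to m_{0}-m_{2}$ gives the rate \eqref{47-1}.

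The one genuine obstacle to flag is the mismatch between the two halves of the argument: correctness is demanded only against passive attacks, yet secrecy must survive arbitrary active attacks, and these opposite demands are reconciled solely through Theorem~\ref{T1}. A secondary point worth checking is that discarding robustness really does raise the attainable starting rate to $m_{0}$ uniformly over all $K_{B}$ of rank $m_{0}$ --- that is, that the passive-correctness code underlying Proposition~\ref{T2B} is genuinely universal in $K_{B}$, since Alice and Bob know none of these matrices in advance.
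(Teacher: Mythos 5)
Your proposal is correct and is essentially the paper's own route: the paper proves Theorem \ref{T3B} by specializing Theorem \ref{T3} to $m_1=0$ (stated as Corollary \ref{T2E}, which yields rate $m_0-m_2$ with passive correctness and passive secrecy) and then invoking Theorem \ref{T1} to upgrade the passive secrecy to the active-attack condition \eqref{47-2}. Your argument merely inlines the proof of Theorem \ref{T3} at $m_1=0$ --- Proposition \ref{T2} giving the rate-$m_0$ passively correct universal code, followed by the universal2-hashing privacy amplification with the Markov/union-bound seed selection --- instead of citing it as a black box, so the two proofs coincide in substance.
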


From the definition, we see that
$P_e[\Phi_n, \bm{K},0,0]=P_e[\Phi_n, \bm{K},\bm{H},0]$.
Also, note that $P_e[\Phi_n, \bm{K},0,\alpha^n]$ does not depend on $K_E$.
Further, the rate $m_{0}-m_{2}$ is asymptotically optimal, 
i.e., there is no code surpassing the rate $m_{0}-m_{2}$,
which follows from the converse part of the conventional wire-tap channel \cite{wyner75,csiszar78}.

To show the above theorem, as a special case of Theorem \ref{T3} with $m_1=0$, 
we prepare the following corollary.

\begin{corollary}\Label{T2E}
There exists a sequence of codes $\Phi_{n}$ 
of block-length $l_n$ on finite field $\FF_{q}$
whose message set is $\FF_{q}^{k_{n}}$ such that
\begin{align}
&\lim_{n \to \infty} \frac{k_{n}}{l_n} = m_{0}-m_{2}\\
&
\max_{ \bm{K}}
I(M;Y_E^n)[\Phi_n,\bm{K},0,0]=0,\\
&\lim_{n \to \infty} \max_{ \bm{K}} 
P_e[\Phi_n, \bm{K},0,0]=0,
\end{align}
where the maximum is taken with respect to
$(K_B,K_E) \in 
\FF_q^{m_{{4}} \times m_{{3}}}\times 
\FF_q^{m_{{2}} \times m_{3}}$ under the condition \eqref{1-6}.
\end{corollary}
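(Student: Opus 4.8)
The plan is to derive the corollary directly from Theorem~\ref{T3} by specializing the parameter $m_1$ to $0$. First I would substitute $m_1=0$ throughout Theorem~\ref{T3}: the hypothesis $m_2+m_1<m_0$ collapses to $m_2<m_0$, and the rate conclusion $\lim_n k_n/l_n=m_0-m_1-m_2$ becomes $\lim_n k_n/l_n=m_0-m_2$, which is exactly the rate claimed in the corollary. I would also note the consistency check that $m_1=\rank H_B=0$ forces $H_B=0$, so that Bob's received signal is $Y_B^n=K_B X^n$, unaffected by any injection; in particular the passive instance $\bm{H}=0$, $\alpha^n=0$ (taking $H_E=0$ as well) lies in the domain of the maxima appearing in Theorem~\ref{T3}.

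Next I would read off the two remaining requirements of the corollary as weakenings of the two error conclusions of Theorem~\ref{T3}. The passive leakage $I(M;Y_E^n)[\Phi_n,\bm{K},0,0]$ is the single instance $\bm{H}=0$, $\alpha^n=0$ sitting inside the double maximum of \eqref{H3-182}; since mutual information is nonnegative and \eqref{H3-182} asserts that this maximum vanishes, the instance vanishes too, yielding $\max_{\bm{K}} I(M;Y_E^n)[\Phi_n,\bm{K},0,0]=0$. In the same way the passive error probability $P_e[\Phi_n,\bm{K},0,0]$ is one instance inside the double maximum of \eqref{H3-181}, so $\max_{\bm{K}} P_e[\Phi_n,\bm{K},0,0]$ is dominated by the quantity in \eqref{H3-181} and therefore tends to zero. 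Hence the very sequence of codes furnished by Theorem~\ref{T3} already satisfies all three conditions of the corollary, and no new construction is needed.

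The only step requiring a moment of care --- and the closest thing here to an obstacle --- is the domain over which the maxima are taken. The corollary fixes $K_E\in\FF_q^{m_2\times m_3}$, i.e., it sets $m_6=m_2$ and thereby allows $\rank K_E\le m_2$, rather than imposing the rank equality \eqref{1-6X}. To cover this slightly larger family of eavesdropping matrices I would invoke Remark~\ref{R-Dim}, which guarantees that Theorem~\ref{T3} remains valid when \eqref{1-6X} is relaxed to the inequality \eqref{1-6X2}. With that observation the argument is complete: the entire content of the corollary is contained in Theorem~\ref{T3}, and the proof amounts to specializing its parameters and simply discarding the robustness and active-attack strength that the corollary does not require.
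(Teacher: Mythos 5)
Your proposal is correct and follows essentially the same route as the paper, which obtains Corollary~\ref{T2E} precisely as the special case of Theorem~\ref{T3} with $m_1=0$, with the required passive-attack bounds read off by restricting the maxima in \eqref{H3-181} and \eqref{H3-182} to the instance $\bm{H}=0$, $\alpha^n=0$. Your appeal to Remark~\ref{R-Dim} to cover matrices $K_E\in\FF_q^{m_2\times m_3}$ of rank strictly less than $m_2$ is exactly the device the paper itself supplies for that purpose.
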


Combining Corollary \ref{T2E} and Theorem \ref{T1},
we obtain Theorem \ref{T3B}.

\if0
\begin{proofof}{Corollary \ref{T2E}}
We construct our code using a code given in Proposition \ref{T2D}.
In the encoding,
Alice attaches the initial $m_0$ vectors $I^{m_0}:=(\delta_{i,j})_{1\le, i,j\le m_0}$ 
to the original encoded information $X^n$.
In the decoding, Bob's initial $m_0$ vectors form a matrix $K_B$.
Then, Bob applies $K_B^{-1}$ to the remaining vectors,
and applies the original decoder to the obtained vectors.
This code has the same performance as the original code.
Since the size of the overhead is constant,
this code satisfies the desired performance.
\end{proofof}
\fi

Here, we compare existing results with Corollary \ref{T2E}.
As a similar result to Corollary \ref{T2E}, the following proposition is known.
Since Corollary \ref{T2E} does not require the assumptions $m_3=m_4=m_0 $ and $K_B=I$,
Corollary \ref{T2E} is slightly advantageous.
Hence, Theorem \ref{T3B} is a stronger statement than the following existing statement.

\begin{proposition}[\protect{\cite[Theorem 7]{Matsumoto2011a}},\cite{KMU}]\Label{T2D}
We assume that $m_3=m_4=m_0 $ and $K_B=I$.
There exists a sequence of codes $\Phi_{n}$ 
of block-length $n$ on finite field $\FF_{q}$
whose message set is $\FF_{q}^{k_{n}}$ such that
\begin{align}
&\lim_{n \to \infty} \frac{k_{n}}{n} = m_{0}-m_{2}\\
&\lim_{n \to \infty} \max_{ \bm{K}}
I(M;Y_E^n)[\Phi_n,\bm{K},0,0]=0,\\
&\lim_{n \to \infty} 
P_e[\Phi_n, \bm{K},0,0]=0,
\end{align}
where the maximum is taken with respect to
$K_E \in \FF_q^{m_{{2}} \times m_{0}}$.
\end{proposition}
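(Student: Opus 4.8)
The plan is to exploit the fact that under the hypotheses $m_3=m_4=m_0$ and $K_B=I$ Bob's channel is noiseless: in the passive-attack setting ($\bm{H}=0$, $\alpha^n=0$) we have $Y_B^n=K_B X^n=X^n$, so Bob recovers the full input exactly and the decoding error will be $0$ for every $n$ by construction. Thus the only real content is the secrecy statement, and I would obtain it by privacy amplification with a universal$_2$ linear hash function applied to the whole input space $\FF_q^{m_0 n}$, in exactly the same spirit as the leakage estimate carried out in the proof of Theorem \ref{T3}.

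Concretely, I would set the message dimension $k_n:=(m_0-m_2)n-\lceil\sqrt{n}\rceil$ and the scramble dimension $r_n:=m_0 n-k_n$, and choose a universal$_2$ linear surjective random hash function $f_R\colon\FF_q^{m_0 n}\to\FF_q^{k_n}$. Taking $M$ and $L$ uniform and the encoder a bijection makes $X^n$ uniform on $\FF_q^{m_0 n}$; since $Y_E^n=K_E X^n$ ranges over at most $q^{m_2 n}$ values, inequality \eqref{F15} gives $H_{1+s}(X^n|Y_E^n)\ge (m_0-m_2)n\log q$. Proposition \ref{T4} with $s=1$ then yields
\begin{align}
I(f_R(X^n);Y_E^n|R)\le q^{k_n-(m_0-m_2)n}\le q^{-\lceil\sqrt{n}\rceil},
\end{align}
which tends to $0$. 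Because $f_R$ is linear and surjective, each fibre $f_r^{-1}(x)$ has exactly $q^{r_n}=|\mathcal{L}|$ elements, so I can build an invertible encoder $\bar{f}_r\colon\mathcal{M}\times\mathcal{L}\to\FF_q^{m_0 n}$ with $f_r\circ\bar{f}_r(M,L)=M$, set $\phi_n:=\bar{f}_r$ and $\psi_n:=f_r$, and read off $M$ from $Y_B^n=X^n$ with zero error. The rate is then $k_n/n=(m_0-m_2)-\lceil\sqrt{n}\rceil/n\to m_0-m_2$ as required.

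The hard part will be making the secrecy bound hold simultaneously for every admissible wiretap map, i.e.\ for all $K_E\in\FF_q^{m_2\times m_0}$ of rank $m_2$, so that $\max_{\bm{K}}I(M;Y_E^n)$ (not merely the average over $R$) vanishes. I would handle this by derandomisation: for each fixed $K_E$ Markov's inequality shows that the seed-conditioned leakage exceeds $q^{-\lceil\sqrt{n}\rceil+c}$ with probability at most $q^{-c}$, and since the number of matrices $K_E\in\FF_q^{m_2\times m_0}$ is bounded by the constant $q^{m_0 m_2}$, a union bound with $c$ chosen slightly larger than $m_0 m_2$ produces a single seed $r$ for which the bound holds for all $K_E$ at once. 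This is the same union-bound step used at \eqref{F20} in the proof of Theorem \ref{T3}, and it is the only place where universality over arbitrary linear observations---rather than mere coordinate projections---enters; note also that the proposition asks only for the limit of the leakage to be zero, so the integrality refinement that forces the leakage to be exactly $0$ is not needed here. Collecting the three pieces yields the rate $m_0-m_2$, the uniform-in-$\bm{K}$ vanishing leakage, and the vanishing (indeed zero) decoding error.
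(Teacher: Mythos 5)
Your proof is correct, but note that the paper never proves Proposition \ref{T2D} at all: it is quoted as a known external result from \cite[Theorem 7]{Matsumoto2011a} and \cite{KMU}, and serves only as a benchmark against which Corollary \ref{T2E} (obtained in the paper as the $m_1=0$ special case of Theorem \ref{T3}) is shown to be slightly stronger. What you have actually done is re-derive the cited result from the paper's own toolkit: your argument is precisely the privacy-amplification layer of the paper's proof of Theorem \ref{T3} --- inequality \eqref{F15}, Proposition \ref{T4} with $s=1$, then Markov's inequality plus a union bound over the at most $q^{m_0 m_2}$ matrices $K_E\in\FF_q^{m_2\times m_0}$ to extract one seed valid for all of them --- with the inner robust code of Proposition \ref{T2} replaced by the identity map. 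That replacement is exactly what the hypotheses $m_3=m_4=m_0$, $K_B=I$ and the restriction to passive attacks ($\bm{H}=0$, $\alpha^n=0$) permit, since Bob's channel is then noiseless; it is also why your decoding error is exactly zero rather than merely vanishing. Your two side remarks are also sound: the bound $|{\cal Z}|\le q^{m_2 n}$ requires no rank condition on $K_E$, so the maximum over all of $\FF_q^{m_2\times m_0}$ is covered, and the integrality refinement of \cite{Matsumoto2011a} (which the paper invokes in the proof of Theorem \ref{T3} to force the leakage to be exactly zero) is dispensable here because the proposition asserts only a vanishing limit. In short, your route buys a self-contained proof where the paper offers a citation, at the modest cost of establishing the limit form of secrecy rather than the exact-zero form that Corollary \ref{T2E} provides.
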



\section{Application to network quantum key distribution}\Label{S6}
In this section, to realize long distance communication with quantum key distribution,
using the result in Section \ref{S5},
we consider a network of quantum key distribution as follows.
The authorized sender, Alice, is connected to the authorized receiver, Bob,
via the network given by the graph $({V}, {E})$ with $|E|=k$.
A linear operation is fixed in each node so that 
we have the relation $Y_B=K_B X$ with Alice's input $X$ and Bob's output $Y_B$.
Then, if secure information transmission is available on each edge,
secure communication from Alice to Bob can be realized.
For every edge $(u,v)\in E$,
the distant nodes $u$ and $v$ generate secure common keys by quantum key distribution.
That is, $k$ pairs of secure keys are generated by quantum key distribution.
In the following, 
we discuss how we can make secure message transmission from Alice to Bob
by using these $k$ pairs of secure keys with public channels.
This kind of secure communication is called network quantum key distribution.

First, we consider the case when all nodes are authenticated.
In this case, Alice can securely send her message $X$ to Bob in the following way.
Let $X_i$ be the random variable to be transmitted on the $i$-th edge.
Let $Z_i$ be the secure keys generated in the $i$-th edge by quantum key distribution.
When $X_i$ is directly transmitted, this information transmission is not secure.
To realize security, $Y_i:=X_i+Z_i$ is transmitted on the $i$-th edge, instead.
Then, a secure transmission in each edge is realized.
Hence, due to the above relation $Y_B=K_B X$, 
secure communication from Alice to Bob can be realized.

However, it is very difficult to guarantee security when a part of nodes are occupied by Eve.
Such a model is often called a node adversary model  
while the model introduced in Section \ref{S2} is called an edge adversary model.
The main problem with network quantum key distribution is 
realization of secure communication from Alice to Bob 
under a node adversary model. 
To investigate the security in the node adversary model, 
we convert a given node adversary model to a special case of the edge adversary model as in \cite{TJBK}.
In an edge adversary model,
Eve wiretaps and contaminates the information only on the edges $E_E$.
To apply the model to the current situation, we consider that 
all the edges linked to the nodes occupied by Eve are wiretapped and controlled by Eve.
When these occupied nodes communicate each other, Eve's attack is active attack.
That is, analysis for active attack is essential.
Therefore, we can apply Theorem \ref{T3} to the security analysis of 
the direct transmission of secret message via network quantum key distribution.
In quantum key distribution, it is usual to assume that Alice and Bob share 
secure random numbers whose lengths are asymptotically negligible in block-length $n$
because the asymptotically negligible keys are needed for 
authentication for the public channel.
In this case, to generate secure keys with length $O(n)$, 
we can employ Theorem \ref{T3B},
where the asymptotically negligible keys are used for an error verification test.

For example, we consider the network given in Fig. \ref{F1B}, which has nodes
$v(1), \ldots, v(8)$ as intermediates nodes.
Fig. \ref{F3} expresses the information on each edge in this  network.
This network connects Alice and Bob with rank 4.
The ranks of $K_E$ and $H_B$ of typical cases
are summarized in Table \ref{TB2}.

\begin{table}[htb]
  \begin{center}
    \caption{Ranks}\Label{TB2}
    \begin{tabular}{|c|c|c|} \hline
Nodes  & rank $K_E$ & rank $H_B$ \\ \hline 
$v(1)$ & 1 & 2 \\ \hline 
$v(2)$ & 1 & 1 \\ \hline 
$v(6)$ & 2 & 1 \\ \hline 
$v(2) \& v(5)$ & 2 & 2 \\ \hline 
$v(2) \& v(6)$ & 2 & 1 \\ \hline 
$v(1) \& v(3)$ & 2 & 4 \\ \hline 
$v(1) \& v(2)$ & 2 & 3 \\ \hline 
$v(1) \& v(6)$ & 2 & 2 \\ \hline 
$v(1) \& v(8)$ & 3 & 3 \\ \hline 
$v(6) \& v(8)$ & 4 & 2 \\ \hline 
    \end{tabular}
  \end{center}
\end{table}

When the number of nodes occupied by Eve is limited to 1,
the ranks of $K_E$ and $H_B$ are upper bounded by $2$.
In the latter case, Theorem \ref{T3B} guarantees that
Alice can securely transmit a random number with rank 2 per single use of the network.
In the former case, since $4-2-2=0$, Theorem \ref{T3} cannot guarantee that Alice  securely transmits her message to Bob.

As another example, we consider the network given in Fig. \ref{F4}, in which, 
the nodes connect the next nodes and the nodes after the next.
Assume that we have pairs of secret keys in the network of Fig. \ref{F4}.
We suppose that $v(1)$ intends to communicate with $v(8)$ securely.
They make the network as
$v(1) \to v(12)\to v(10)\to v(8)$,
$v(1) \to v(11)\to v(9)\to v(8)$,
$v(1) \to v(3)\to v(5)\to v(7)\to v(8)$,
$v(1) \to v(2)\to v(4)\to v(6)\to v(8)$, which connects $v(1)$ and $v(8)$ with rank 4.
When Eve occupies one intermediate node,
the ranks of $K_E$ and $H_B$ are one.
In the latter case, Theorem \ref{T3B} guarantees that
Alice in $v(1)$ can securely transmit a random number with rank 3 per single use of the network.
In the former case, 
Theorem \ref{T3} guarantees that Alice in $v(1)$ securely transmits her message to Bob with rank 2 per single use of the network.

\begin{figure}[htbp]
\begin{center}
\includegraphics[scale=0.37]{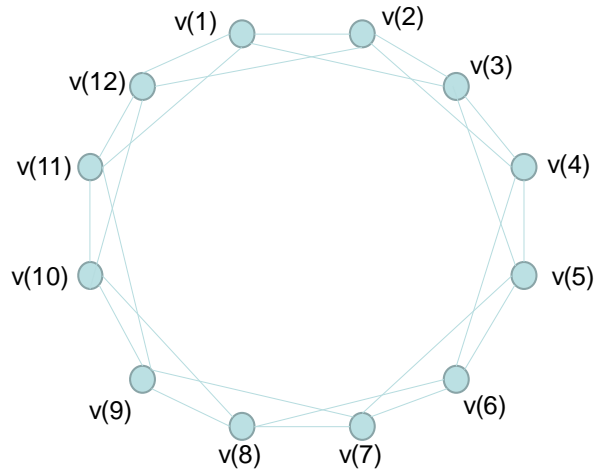}
\end{center}
\caption{Cyclic Network}
\Label{F4}
\end{figure}%

When Eve occupies two intermediate nodes, the ranks of $K_E$ and $H_B$ are at most two.
In the latter case, Theorem \ref{T3B} guarantees that
Alice in $v(1)$ can securely transmit a random number with rank 2 per single use of the network.
In the former case, 
Theorem \ref{T3} cannot guarantee that Alice in $v(1)$ securely transmits her message to Bob.
This method can be generalized to the case when Alice and Bob are $v(i)$ and $v(j)$
with $|i-j|\ge 2 (\bmod~ 12)$.

Indeed, this idea can be generalized to the cyclic network even when the number of nodes is odd. 
Further, this network can be generalized to the following network of quantum key distribution with two integers $k>l>0$.
The set of nodes is given as $\{v(i)\}_{i=1}^k$,
and the set of edges is given as $\{(v(i), v(j)) \}_{|i-j| \le l (\bmod~ k)}$.
Now, we set Alice and Bob as $v(i)$ and $v(j)$ with $|i-j|\ge l (\bmod~ k)$.
Then, they can make $2l$ paths connecting $v(i)$ and $v(j)$ without duplication in the intermediate nodes.
Then, even when $2l-1$ nodes are occupied by Eve,
Alice and Bob can securely share a secret random number due to Theorem \ref{T3B}.

\section{Application to multiple multicast network} \Label{S10}
We consider how to apply our result to
a multiple multicast network with $a$ senders and $ \sum_{i=1}^a b_i$ receivers, in which,
the senders and the receivers are labeled as $i$ and $(i,j)$ with $1\le i\le a$ and $1\le j \le b_i $, respectively.
Sender $i$ intends to securely sends the message to Receiver $(i,j)$.
That is, Sender $i$ wants to keep secrecy for Receiver $(i',j)$ with $i'\neq i$.
In the one-time use of the network, 
Sender $i$ sends $m_{3,i}$ symbols $\vec{X}_i$ of $\FF_q$ via $m_{3,i}$ channels
and
Receiver $(i,j)$ receives $m_{4,i,j}$ symbols $\vec{Y}_{i,j}$of $\FF_q$ via $m_{4,i,j}$ channels.
Without loss of generality, we can assume that 
$m_{4,i,j}$ does not depend on $j$ and is simplified to $m_{4,i}$ due to the following reason.
When $m_{4,i,j} < \max_{j'} m_{4,i,j'}$,
we can consider that Receiver $(i,j)$ receives symbol $0$ via $ \max_{j'} m_{4,i,j'}- m_{4,i,j} $ channels.
If the codes in network is designed perfectly, we have no cross line nor no information leakage to unintended receivers.
In this section, we consider the case with small amount of cross line and information leakage to unintended receivers due to errors on the design of network.

We assume that these senders and receivers are connected via network composed of linear operations.
Then, using matrices $K_{i,j;i'}$, we can describe their relations as
\begin{align}
\vec{Y}_{i,j}= \sum_{i'=1}^{a} K_{i,j;i'} \vec{X}_{i'}.
\end{align}
While the senders transmit their information repeatedly, 
we assume that the coefficient matrices $K_{i,j;i'}$ do not change.
We assume that receivers do not  collude to recover the message from senders.
Now, we apply the model \eqref{E3} to
 the secure communication transmission from Sender $i$ to Receiver $(i,j)$.
When we consider information leakage to Receiver $(i'',j'')$ with $i''\neq i$,
we substitute $\vec{X}_i$, $(\vec{X}_{i'})_{i'\neq i}$, $\vec{Y}_{i,j}$ and $\vec{Y}_{i'',j''}$
into $X$, $Z$, $Y_B$, and $Y_E$, respectively. 
We assume that the rank of information crossed from other senders is $m_{1,i,j}$
and the rank of leaked information to Receiver $(i'',j'')$ is $m_{2,i;i'',j''}$.
We introduce the maximum ranks 
$m_{0,i}:= \max_{j}\rank K_{i,j;i}$, 
$m_{1,i}:= \max_{j}m_{1,i,j}$, and  $m_{2,i} := \max_{i'',j''} m_{2,i;i'',j''}$.
Sender $i$ and Receiver $(i,j)$ are assumed to know only the integers
$m_{0,i}$, $m_{1,i}$, $m_{2,i}$, $m_{3,i}$, $m_{4,i}$ and have no other knowledge for the network structure.
We choose our code by applying Theorem \ref{T2B} to
the case with 
$m_0=m_{0,i}$,
$m_1=m_{1,i}$,
$m_2=m_{2,i}$,
$m_3=m_{3,i}$, and 
$m_4=m_{4,i}$.
Since the code does not depend on the choice of $j$ and $i'',j''$, this code works well in this situation.

\section{Non-linear codes in one hop relay network}\Label{S4}
\subsection{Summary for non-linear codes in the one hop relay network given in Fig. \ref{F1}}\Label{S4-1}
In this section, we focus on the imperfect security, i.e.,
the property that Eve cannot recover the original message with probability one \cite{CN11,CN11b}.
It is known that there exists a linear imperfectly secure code 
over a finite field $\FF_q$ of sufficiently large prime power $q$
when Eve may access a subset of channels that does not contain a cut between Alice and Bob 
even when the linear code does not employ private randomness in the intermediate nodes \cite{Bhattad}.
Theorem \ref{T1} guarantees that 
such a linear code is still imperfectly secure even for active attack over the same network.
However, it is not clear whether there exists such a linear imperfectly secure code over a finite field $\FF_p$ of a prime $p$.
To see how crucial the linearity condition is in Theorem \ref{T1},
we consider this problem over the one hop relay network given in Fig. \ref{F1} with 
edges $E=\{e(1),e(2),e(3),e(4)\}$
only in the single shot setting, i.e., the case when
the integer $n$ defined in Section \ref{S2-5} is $1$, in other words, the sender sends only one element of $\ZZ_d$,
which is called the scalar linearity when $\ZZ_d$ is a finite field $\FF_p$ with prime $p$ \cite[Section I]{DFZ}\footnote{In contrast, 
the linear setting is called the vector linearity when the integer $n$ defined in Section \ref{S2-5} is greater than $1$.
In fact, the paper \cite{Bhattad} discussed this kind of imperfectly secure code
in the case with $n=1$ while it chooses large $q$. 
In contrast, the paper \cite{SK} discussed a similar imperfectly secure code construction
by increasing $n$ (the vector linearity) while it did not increase the size of $q$.
The paper \cite{CCMG} extended this type of vector linearity setting 
 of imperfectly secure codes to the case with multi-source multicast.}.
That is,
we consider the transmission of the message $M$ in $\ZZ_d$ by using the one hop relay network given in Fig. \ref{F1}
when the information at the edges
is given as an element of $\ZZ_d$.
Here, $d$ is an arbitrary natural number, and it is not necessarily a prime number.
In contrast, Theorem \ref{T1} holds with an arbitrary $n$, which is called the vector linearity when $\ZZ_d$ is a finite field $\FF_p$ with prime $p$.
Also, private randomness is allowed in the sender, but
no private randomness is allowed in the intermediate nodes.
Therefore, 
we are allowed to choose an arbitrary deterministic function
$\varphi$ from $\ZZ_d^2$ to $\ZZ_d^2$
as our coding operation on the intermediate node.
Our encoder in the source node is given as a stochastic map $\phi$
from $\ZZ_d$ to $\ZZ_d^2$,
and our decoder is given as a deterministic function
$\psi$ from $\ZZ_d^2$ to $\ZZ_d$.
Eve is allowed to attack two edges of $E$ except for the pairs $\{e(1),e(2)\}$ and $\{e(3),e(4)\}$.
In this section, 
we call the triplet $(\phi,\varphi,\psi)$ a code over the one hop relay network (Fig. \ref{F1}).

We have two attack models, the passive attack and the active attack.
In the passive attack, Eve can eavesdrop two edges, but cannot change the information on the attacked edge.
In the active attack, 
Eve can insert another information on the attacked edge in the first group $\{e(1),e(2)\}$,
and eavesdrop one edge in the second group $\{e(3),e(4)\}$.
That is, we consider the active attack with $E_A=E_E$.
Here, Eve cannot change the edge to be attacked by using the information 
on the attacked edge in the first group $\{e(1),e(2)\}$.
When a code satisfies the following two conditions in the respective models,
the code is called imperfectly secure in the respective models.
Otherwise, it is called insecure in the respective models.
In the following conditions, the information on the edge $e(i)$ is written as $Y_i$.

\begin{description}
\item[(B1)] (Recoverability) 
Bob can recover the message $M$ 
from $Y_3 $ and $Y_4$ when Eve does not make any replacement.

\item[(B2)] (Secrecy)
No active attack $\tilde{\psi}$ from $\ZZ_d^3$ to $\ZZ_d$ 
satisfies one of the following conditions.
\begin{align}
\tilde{\psi}(Y_1,Y_1',Y_3)=M,\quad
\tilde{\psi}(Y_1,Y_1',Y_4)=M, \\
\tilde{\psi}(Y_2,Y_2',Y_3)=M,\quad
\tilde{\psi}(Y_2,Y_2',Y_4)=M,
\end{align}
where $Y_1'$ and $Y_2'$ are the information replaced by Eve at the edges $e(1)$ and $e(2)$,
and $Y_3$ and $Y_4$ are the information at the edges $e(3)$ and $e(4)$.
This kind of secrecy is called imperfect security \cite{CN11,CN11b}.
\end{description}

Under the above formulation, 
we compare the case of linear codes with the case of non-linear codes.
Since the message is an element of $\ZZ_d$,
the linearity in this problem can be regarded as scalar linearity
when $d$ is a prime number.
We have the following three theorems.

\begin{theorem}\Label{TT5}
There is no secure linear code even for passive attack when $d$ is a prime $p$.
\end{theorem}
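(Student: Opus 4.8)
The plan is to recast the problem in terms of linear functionals on the joint space of the message and Alice's private randomness, and then to derive a contradiction from the three requirements---recoverability, the cut structure imposed by the relay, and secrecy against all four admissible passive attacks---using only elementary linear algebra over $\FF_p$.

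First I would set $V=\FF_p\times\FF_p^{r}$ for the space of pairs $(M,L)$, where $L$ is Alice's private randomness, and introduce the message functional $\mu\colon(M,L)\mapsto M$. Linearity of the encoder $\phi$ means the edge values $Y_1,Y_2$ are given by linear functionals $y_1,y_2$ on $V$, and linearity of the relay map $\varphi$ forces the outgoing functionals $y_3,y_4$ to lie in $W:=\mathrm{span}(y_1,y_2)$. The crucial translation is that, because every map is $\FF_p$-linear and $(M,L)$ has full support, $M$ is recoverable with probability one from a set of observed edge values exactly when $\mu$ lies in the span of the corresponding functionals. Recoverability (B1) then reads $\mu\in\mathrm{span}(y_3,y_4)\subseteq W$, which in particular gives $\mu\neq 0$; the passive-attack analogue of secrecy (B2) reads $\mu\notin\mathrm{span}(y_i,y_j)$ for each admissible eavesdropping pair $i\in\{1,2\}$, $j\in\{3,4\}$.

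Second I would close the argument by a short case analysis on $\dim W$. If $\dim W\le 1$, then $\mu\in W$ with $\mu\neq 0$ forces $W=\mathrm{span}(\mu)$, so some $y_i$ with $i\in\{1,2\}$ is a nonzero multiple of $\mu$, and the pair $(y_i,y_3)$ already recovers $M$, contradicting secrecy. If $\dim W=2$, then for the pair $(y_1,y_3)$ secrecy forces $y_3\in\mathrm{span}(y_1)$ (otherwise $y_1,y_3$ would span all of $W\ni\mu$), and symmetrically the pair $(y_2,y_3)$ forces $y_3\in\mathrm{span}(y_2)$; since $y_1,y_2$ are independent, $\mathrm{span}(y_1)\cap\mathrm{span}(y_2)=\{0\}$, whence $y_3=0$, and likewise $y_4=0$. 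But then $\mathrm{span}(y_3,y_4)=\{0\}$ cannot contain $\mu\neq 0$, contradicting recoverability. Both cases fail, so no secure linear code exists.

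The step that needs the most care, and the one I would isolate as a lemma before anything else, is the equivalence between "recovery with probability one" and "membership in the span of the observed functionals." This is precisely where primality of $d$ is used: over the field $\FF_p$ the kernel characterization $\ker(y_i)\cap\ker(y_j)\subseteq\ker(\mu)\iff\mu\in\mathrm{span}(y_i,y_j)$ is exactly linear duality, and it is also where the full-support hypothesis on $(M,L)$ enters. I would also dispose of a possible affine offset in the encoder by the standard remark that publicly known constants can be subtracted, so the homogeneous-linear analysis loses no generality. Once this reduction is in place, every remaining step is a one-line span computation, so I anticipate no further obstacle.
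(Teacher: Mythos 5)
Your proof is correct, but it takes a genuinely different route from the paper's. The paper argues by \emph{simulation}: writing the relay operation as a $2\times 2$ matrix $A$ over $\FF_p$ and splitting on $\rank A$, it shows that in every case some admissible pair of eavesdropped edges determines Bob's entire view $(Y_3,Y_4)$ --- for $\rank A\le 1$ because one output edge carries all the output information, and for $\rank A=2$ by inverting a nonzero entry of $A$ to recover $Y_2$ from $(Y_1,Y_3)$ (this inversion is where primality enters there) --- so recoverability (B1) hands the message directly to Eve and secrecy fails outright. You instead \emph{dualize}: your key lemma converts both (B1) and the passive-attack version of (B2) into span-membership statements about the message functional $\mu$, and the contradiction comes from dimension counting on $W=\mathrm{span}(y_1,y_2)$, with the twist that in your $\dim W=2$ case it is recoverability that collapses ($y_3=y_4=0$) rather than secrecy. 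The trade-offs: your route needs the encoder itself to be linear and $(M,L)$ to have full support (both legitimate readings of the theorem's hypothesis of a ``linear code''), and in exchange it cleanly isolates exactly where the field structure of $\FF_p$ is used, namely the duality $\ker y_i\cap\ker y_j\subseteq\ker\mu \iff \mu\in\mathrm{span}(y_i,y_j)$; the paper's simulation argument uses only linearity of the intermediate node, with no assumption on the encoder or on distributions, so it in fact proves the slightly stronger statement that no code --- even one with a nonlinear, stochastic encoder --- can be imperfectly secure once the relay operation is $\FF_p$-linear, and it is closer in spirit to the reduction philosophy of Theorem \ref{T1} (Eve's view dominates Bob's view).
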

\begin{proof}
Due to the linearity of our code, 
we can choose 
a $2 \times 2$ matrix $A$ on $\FF_p$ such that
the relations $ Y_3= A_{1,1}Y_1 +A_{1,2}Y_2$
and
$ Y_4= A_{2,1}Y_1 +A_{2,2}Y_2$
holds when there is no attack.
When the rank of $A$ is 1, 
the information of $Y_3$ is the same as that of $Y_4$.
When Eve eavesdrops $Y_3$, Eve obtains all information that the receiver gets.
Then, Eve recovers the message if the receiver recovers the message.

Next, we assume that the rank of $A$ is 2.
The vector $(A_{1,1},A_{1,2})$ is linearly independent of 
at least one of $(1,0)$ and $(0,1)$.
Assume that it is linearly independent of $(1,0)$, for simplicity.
When Eve eavesdrops $e(3)$ and $e(1)$,
Eve can recover the information $Y_4$ from $Y_3$ and $Y_1$.
Hence, Eve obtains all information that the receiver gets.
\end{proof}

\begin{theorem}\Label{TT6}
When $d=2$, 
there exists a imperfectly secure linear code for passive attack, but
there exists no imperfectly secure linear code for active attack.
\end{theorem}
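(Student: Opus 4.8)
The plan is to treat the two clauses separately. For the existence clause, recall that Theorem~\ref{TT5} already rules out any linear code being secure even under a passive attack when $d=2$, so any witness must be genuinely non-linear, and I would simply exhibit one. Take the encoder that secret-shares the message, $Y_1=L$ and $Y_2=M+L$ with a uniform private bit $L$ (hence $M=Y_1+Y_2$), and let the relay apply the non-linear map $Y_3=Y_1Y_2$, $Y_4=Y_1+Y_2+Y_1Y_2$ (the logical AND and OR, the product term making it non-linear). Recoverability~(B1) is immediate: $M=1$ forces $(Y_3,Y_4)=(0,1)$ while $M=0$ gives $(Y_3,Y_4)\in\{(0,0),(1,1)\}$, so Bob decodes by $\psi(0,1)=1$ and $\psi(0,0)=\psi(1,1)=0$.

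For the passive-secrecy half of this clause I would verify~(B2) for each of the four admissible wiretap pairs $\{e(1),e(3)\}$, $\{e(1),e(4)\}$, $\{e(2),e(3)\}$, $\{e(2),e(4)\}$ by exhibiting, for each, one observed value consistent with both $M=0$ and $M=1$. Listing the four equiprobable scenarios $(M,L)$ renders this mechanical: for example $(Y_1,Y_3)=(0,0)$ occurs both at $(M,L)=(0,0)$ and at $(M,L)=(1,0)$, and a parallel coincidence appears for each remaining pair. Since every pair admits an ambiguous observation, no function $\tilde\psi$ of the two eavesdropped symbols can equal $M$ with certainty, which is exactly imperfect security in the passive model.

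For the non-existence clause I would show that every code satisfying~(B1) is broken by some active attack. Write the fixed relay as $(Y_3,Y_4)=(f(Y_1,Y_2),g(Y_1,Y_2))$ and put $h:=\psi\circ\varphi$, so that~(B1) reads $M=h(Y_1,Y_2)$ on the encoder's support. As $M$ takes both values, $f$ and $g$ cannot both be constant there, so at least one of them, say $f$, is non-constant on $\FF_2^2$. The crux is a purely binary fact: a non-constant $f$ must fail to be constant along some row or some column, and over $\FF_2$ such a restriction $f(c,\cdot)$ or $f(\cdot,c)$ is automatically a bijection; this is what lets Eve turn an injection into a readout, and it is precisely the step that collapses when $q\ge 3$, accounting for the different ternary behavior.

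Concretely, if $f(c,\cdot)$ is a bijection Eve attacks $e(1)$: she eavesdrops the original $Y_1$, injects the constant $Y_1'=c$, reads $Y_3=f(c,Y_2)$, inverts to obtain $Y_2$, and outputs $\tilde\psi=h(Y_1,Y_2)=M$; if instead $f(\cdot,c)$ is a bijection the mirror attack on $e(2)$ (inject $Y_2'=c$, read $Y_3=f(Y_1,c)$, recover $Y_1$) again yields $M$, and the analogous attacks eavesdropping $e(4)$ and reading $g$ cover the case where $f$ is constant and only $g$ is non-constant. In each case an admissible single-shot active strategy reconstructs the original pair $(Y_1,Y_2)$, hence $M$, so~(B2) fails. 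I expect the difficulty to be organizational rather than deep: one must confirm that Eve's injected constant, which may drive the relay input off the encoder's support, is still a legitimate probe because the relay is a fixed total function, and that this readout trick genuinely breaks down over larger fields, which is what the subsequent ternary construction will exploit.
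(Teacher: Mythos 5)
Your proposal is correct; the interesting divergence from the paper is in the non-existence clause. (Your reading of the statement is also the intended one: the word ``linear'' in it is a slip, as Theorem~\ref{TT5} and Table~\ref{non-linear} make clear, so the witness must be non-linear and the non-existence claim concerns all codes.) For the existence clause you essentially reproduce the paper's construction: your relay $(Y_1Y_2,\;Y_1+Y_2+Y_1Y_2)$ coincides with \eqref{Eq1}--\eqref{Eq2} after the relabelings $Y_2\mapsto Y_2+1$ and $Y_4\mapsto Y_4+1$, and where the paper certifies passive imperfect security by computing $I(M;Y_i,Y_j)=\tfrac{1}{2}$ in Appendix~\ref{B1}, you exhibit an ambiguous observation for each of the four wiretap pairs; both suffice. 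For the non-existence clause, however, the paper argues through a structural uniqueness result (Lemma~\ref{T6}): any code with recoverability (C1) and passive secrecy (C2) is, up to per-edge bijections, exactly this counterexample, which the explicit injections $Y_1'=1$ (reading $e(3)$) and $Y_1'=0$ (reading $e(4)$) then break. You instead prove the stronger, more direct statement that recoverability alone dooms every code: one of $f,g$ must be non-constant on the encoder's support, a non-constant function on $\FF_2^2$ has a non-constant --- hence bijective --- row or column, and freezing the other input lets Eve invert the relay output, reconstruct the true pair $(Y_1,Y_2)$ (which lies in the support, where $h(Y_1,Y_2)=M$), and read off $M$. This avoids the paper's enumeration of the sixteen Boolean functions, dispenses with the passive-secrecy hypothesis entirely, and isolates precisely the binary obstruction: over $\ZZ_d$ with $d\ge 3$ a non-constant row need not be injective, which is exactly the freedom the anti-Latin squares of Theorem~\ref{TT7} exploit. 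What the paper's longer route buys is the classification itself, which it wants independently (it is invoked in the conclusion and motivates the anti-Latin-square concept). Your two flagged points of care --- that the injected constant may drive the relay off the encoder's support but the relay is a total deterministic map, and that $h=M$ need only hold on the support, where Eve's reconstructed pair lives --- are exactly the right ones, and your attack template specializes to the paper's attacks (i) and (ii) when applied to the counterexample.
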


Since we are allowed to use arbitrary matrices $\bm{K} $ and $\bm{H}$ in Theorem \ref{T1} of the single shot setting,
the security of active attack in the linear codes
can be reduced to the security of passive attack in the linear codes.
Notice that Theorem \ref{T1} can be extended to the case when $\FF_q$ is replaced by $\ZZ_d$.
Theorem \ref{TT6} shows the existence of the non-linear code
whose security for active attack cannot be reduced to 
the security for passive attack.
Hence, such a non-linear code can be regarded as 
a counterexample of Theorem \ref{T1} without the linearity condition.

\begin{theorem}\Label{TT7}
When $d\ge 3$, 
there exists a imperfectly secure linear code even for active attack.
\end{theorem}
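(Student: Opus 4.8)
The plan is to write down an explicit linear code over $\ZZ_d$ and to verify (B1) and (B2) by hand. For the source I use the one-time-pad encoder $\phi: M \mapsto (Y_1, Y_2) = (M+L, L)$, where $L$ is a uniform private random symbol of $\ZZ_d$; this is a linear stochastic map. For the intermediate node I use a fixed linear relay $\varphi(Y_1,Y_2) = (Y_3,Y_4)$ given by a $2\times 2$ matrix $A$ over $\ZZ_d$, and Bob decodes by first applying $A^{-1}$ to $(Y_3,Y_4)$ to recover $(Y_1,Y_2)$ and then forming $M = Y_1 - Y_2$. The entire construction hinges on the choice of $A$.

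I require of $A$ two properties that look contradictory: (i) $\det A$ is a unit of $\ZZ_d$, so that $A$ is invertible and (B1) holds; and (ii) every entry of $A$ is a non-unit (a zero divisor) of $\ZZ_d$. Over a prime field these two demands cannot be met simultaneously — the only non-unit is $0$, which would force $A = 0$ — and this is exactly the mechanism behind Theorem \ref{TT5}. When $d$ has two coprime factors $d = d_1 d_2$ with $d_1, d_2 > 1$, however, the isomorphism $\ZZ_d \cong \ZZ_{d_1} \times \ZZ_{d_2}$ lets me satisfy both: I take $A$ anti-diagonal with unit entries modulo $d_1$ and diagonal with unit entries modulo $d_2$. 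Then $\det A$ is a unit modulo $d_1$ and modulo $d_2$, hence a unit modulo $d$, while each entry vanishes modulo one of the two factors and is therefore a non-unit modulo $d$. The smallest instance is $d = 6$ with $A = \left(\begin{smallmatrix} 2 & 3 \\ 3 & 2 \end{smallmatrix}\right)$, for which $\det A \equiv 1$ and $Y_4 - Y_3 = M$.

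To verify (B2) I would treat the four active patterns uniformly. Take the attack that reads $Y_1 = M+L$ and injects $Y_1'$ on $e(1)$, while eavesdropping $e(3)$; the relay then produces $Y_3' = A_{11} Y_1' + A_{12} L$. Since Eve chose $Y_1'$, the term $A_{11} Y_1'$ is known to her, so her net gain from $Y_3'$ is the value $A_{12} L$. Because $A_{12}$ is a non-unit, the map $L \mapsto A_{12} L$ is non-injective on $\ZZ_d$, so several values of $L$ are consistent with her observation; as $M = Y_1 - L$, several values of $M$ survive and no $\tilde\psi$ can output $M$ with certainty. The remaining three patterns are identical in spirit: reading $Y_2 = L$ and eavesdropping $e(3)$ (resp. $e(4)$) reduces, after subtracting the known injected term, to learning $A_{11} M$ (resp. $A_{21} M$), and reading $Y_1$ and eavesdropping $e(4)$ reduces to learning $A_{22} L$. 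In each case the offending quantity is a non-unit multiple of the unknown, so property (ii) is precisely what defeats the single-edge reconstruction that always succeeds over a field in Theorem \ref{TT5}.

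The step I expect to be the main obstacle is the simultaneous fulfilment of (B1) and (B2): recoverability forces $A$ invertible while secrecy forces all entries of $A$ to be zero divisors, and these coexist only when $d$ carries at least two distinct prime factors, which is where the CRT construction of the relay is essential. I would therefore make this ``linear anti-Latin square'' — a relay each of whose coordinates is non-injective in each argument yet whose pair is jointly bijective — the centerpiece of the proof, and derive the four active cases as a single corollary of (ii). The residual prime-power moduli admit no such linear $A$, and for them the same anti-Latin-square requirement must instead be realised by a non-linear relay.
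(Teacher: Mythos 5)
There is a genuine gap: your construction does not cover any prime power $d\ge 3$, and in particular it does not cover $d=3$, the very first case that Theorem \ref{TT7} asserts. Your relay requires a $2\times 2$ matrix $A$ over $\ZZ_d$ whose determinant is a unit while all four entries are zero divisors, and, as you yourself note, this forces $d$ to have two coprime nontrivial factors: over $\ZZ_{p^k}$ every non-unit entry lies in the ideal $(p)$, so $\det A \in (p^2)+(p^2)\subset (p)$ would be a non-unit, contradicting invertibility. Hence your linear argument proves the statement only for $d\in\{6,10,12,14,15,\dots\}$, and for the infinitely many prime powers $d=3,4,5,7,8,9,\dots$ you defer to ``a non-linear relay'' without constructing one. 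That deferred construction is precisely the substance of the paper's proof: the word ``linear'' in the statement is a slip (compare Table \ref{non-linear} and the remark after Lemma \ref{LOF} that condition (B2') forbids a linear intermediate operation in the finite-field case), and the paper proceeds by reducing (B1)--(B2), under the same encoder $Y_1=M+L$, $Y_2=L$, to the existence of a decodable pair of anti-Latin squares (Lemma \ref{LOF}), and then exhibiting such pairs explicitly for \emph{every} $d\ge 3$: the examples \eqref{Ex1}--\eqref{Ex2} and the systematic constructions of $(\varphi^{(3)})^{-1}(k)$, $(\varphi^{(4)})^{-1}(k)$ for odd $d=2\ell+1$ and even $d=2\ell$ with $\ell\ge 2$ in Subsubsection \ref{S3-5-1}. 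Without an analogue of those constructions, your proposal establishes a strictly weaker statement.

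The part you did prove is correct and is in fact a nice complement to the paper. With $A=\bigl(\begin{smallmatrix}2 & 3\\ 3 & 2\end{smallmatrix}\bigr)$ over $\ZZ_6$, invertibility of $A$ gives (B1), and in each of the four attack patterns Eve's view reduces, after subtracting the injected term she herself chose, to a zero-divisor multiple of the unknown ($3L$, $2L$, $2M$, or $3M$); the fibers of such a map are cosets of a nontrivial kernel, all equally likely under the uniform $L$ (resp.\ $M$), so no deterministic $\tilde\psi$ recovers $M$ and (B2) holds. In the paper's language you have shown that a \emph{linear} relay can realize a decodable pair of anti-Latin squares over $\ZZ_d$ whenever $d$ is not a prime power (the rows and columns of $2i+3j$ and $3i+2j$ do have duplicates), sharpening the paper's observation that linearity is excluded only in the finite-field case. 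But the mechanism is structurally incapable of handling prime powers, so the missing cases cannot be patched by a different choice of $A$; they require the paper's combinatorial (non-linear) constructions or something equivalent.
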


\begin{table}[htpb]
  \caption{Summary for one hop relay network (Fig. \ref{F1}) with single shot setting}
\Label{non-linear}
\begin{center}
  \begin{tabular}{|l|c|c|} 
\hline
Code & passive attack & active attack\\
\hline
\hline
linear code when $d$ is prime & insecure & insecure \\
\hline
linear code over $\FF_q$ with &
\multirow{2}{*}{imperfectly secure}& 
\multirow{2}{*}{imperfectly secure} \\
sufficiently large prime power && \\
\hline
non-linear code with $d=2$ & imperfectly secure & insecure \\
\hline
non-linear code with $d\ge 3$ & imperfectly secure & imperfectly secure  \\
\hline
  \end{tabular}
\end{center}
\end{table}

As a summary, we have Table \ref{non-linear}, which shows the  importance of non-linear codes in the prime case.
In fact, this analysis depends on the property of single shot case, i.e., the case with $n=1$.
The next paper \cite{CaiH} discusses the analysis with $n>1$, which contains the vector linearity.

\subsection{Analysis with $d=2$}
\subsubsection{Counterexample of Theorem \ref{T1} without the linearity condition}
First, to show Theorem \ref{TT6},
we give a special example of our code, in which,
the intermediate node performs a non-linear operation as
\begin{align}
Y_3&:= Y_1(Y_2+Y_1)=Y_1(Y_2+1), \Label{Eq1}\\
Y_4&:=(Y_1+1)(Y_2+Y_1)
=(Y_1+1)Y_2. \Label{Eq2}
\end{align}

To send the binary information $M \in \FF_2$,
we prepare the binary uniform scramble random variable $L\in \FF_2$.
We consider the following code.
The encoder $\phi$ is given as 
\begin{align}
Y_1:=L, \quad
Y_2:=M+L. \Label{Eq3}
\end{align}
The decoder $\psi$ is given as $\psi(Y_3,Y_4):=Y_3+Y_4$.
Since $Y_3$ and $Y_4$ are given as follows under this code;
\begin{align}
Y_3= LM, \quad
Y_4=LM+M,
\end{align}
the decoder can recover $M$ whatever the value of $L$.

Now, we consider the leaked information for the passive attack.
As shown in Appendix \ref{B1},
the mutual information and the $l_1$ norm security measure
of these cases are calculated as
\begin{align}
 & I(M;Y_1,Y_3)= I(M;Y_1,Y_4) \nonumber \\
  =& I(M;Y_2,Y_3)=  I(M;Y_2,Y_4)=\frac{1}{2},\Label{E9}\\
 & d_1(M|Y_1,Y_3)= d_1(M|Y_1,Y_4) \nonumber \\
  =& d_1(M|Y_2,Y_3)=  d_1(M|Y_2,Y_4)=\frac{1}{2},\Label{F10}
\end{align}
where the $l_1$ norm security measure $d_1(X|Y)$ is defined as
$d_1(X|Y):=\sum_y \sum_{x}
|\frac{1}{|{\cal X}|}P_Y(y) -P_{XY}(xy)|$
by using the cardinality $|{\cal X}|$ of the set of outcomes of the variable $X$.
In this subsection, we choose $2$ as the base of the logarithm.

\begin{description}
\item[(i)]
When $E_A=E_E=\{e(1),e(3)\}$,
Eve replaces $Y_1$ by $1$. Then, $I(M;Y_1,Y_3)=1 $ because $Y_3+Y_1+1=M$.

\item[(ii)]
When $E_A=E_E=\{e(1),e(4)\}$,
Eve replaces $Y_1$ by $0$. Then, $I(M;Y_1,Y_4)=1 $ because $Y_4+Y_1=M$.

\item[(iii)]
When $E_A=E_E=\{e(2),e(3)\}$ or $\{e(2),e(4)\}$,
Eve has no advantageous active attack.
\end{description}

When Eve is allowed to use the above passive attack, \eqref{E9} shows that the code is secure in the sense of (B1).
Therefore, we obtain the first part of Theorem \ref{TT6}.
Since this code is insecure under the above active attack,
this example is a counterexample of Theorem \ref{T1} without linearity.


\begin{remark}
As another encoder, we can consider 
\begin{align}
Y_1:=M+L, \quad
Y_2:=L.
\end{align}
Replacing $M+L$ by $L$, 
the analysis can be reduced to the presented analysis.
When the message $M$ is not leaked to $e(1)$ or $e(2)$ 
and $M$ can be recovered,
the code is essentially the same as our code as follows.

Assume that the information $Y_2$ on $e(2)$ is independent of $M$.
Then, we denote it by $L$.
In order that $Y_1$ is independent and $M$ can be recovered from $Y_1$ and $L$,
$Y_1$ needs to be $M+L$ or $M+L+1$.
\end{remark}

In this model, Eve can completely contaminate the message $M$.
When Eve takes choice (i), and replaces $Y_3$ by $Y_3+1$,
Bob's decoded message is $M+1$.
Under choice (ii), Eve can totally contaminate the message $M$ in a similar way.

\subsubsection{Uniqueness of network code given in \eqref{Eq1} and \eqref{Eq2}}\label{S2-5}
The previous subsubsection provided an example where Eve's active attack improves her performance.
To show the second part of Theorem \ref{TT6}, 
we need to show the following lemma.


\begin{lemma}\Label{T6}
Assume that a code $(\phi,\varphi,\psi)$ satisfies the following conditions.
Let $Y_1$ and $Y_2$ be the random variable generated by the encoder 
$\phi$ when $M$ is subject to the uniform distribution.
We assume that the random variables $(Y_3,Y_4):=\varphi (Y_1,Y_2)$
satisfies the following conditions.
\begin{description}
\item[(C1)] The relation $\psi (Y_3,Y_4)=M$ holds.
\item[(C2)] There is no deterministic function $\tilde{\psi}$
from $\FF_2^2$ to $\FF_2$ satisfying one of the following conditions.
\begin{align}
\tilde{\psi}(Y_1,Y_3)=M,\quad
\tilde{\psi}(Y_1,Y_4)=M, \\
\tilde{\psi}(Y_2,Y_3)=M,\quad
\tilde{\psi}(Y_2,Y_4)=M.
\end{align}
\end{description}
Then, there exist functions
$f_1,f_2,f_3,f_4$ on $\FF_2$ such that
$Y_i':=f_i(Y_i)$ is given in \eqref{Eq1}, \eqref{Eq2}, and \eqref{Eq3}
with a scramble random variable $L$ while the variable $L$ might be correlated with $M$.
\end{lemma}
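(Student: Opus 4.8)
The plan is to turn the statement into a finite problem on the four points of $\FF_2^2$, using that the only freedom offered by the conclusion is a coordinatewise relabelling $f_i\colon \FF_2\to\FF_2$ of each edge. Since $M$ is uniform and, by \textbf{(C1)}, $M=\psi(\varphi(Y_1,Y_2))$ is a deterministic function of the incoming pair, I would write $M=g(Y_1,Y_2)$ for a fixed $g$ on the support $S\subseteq\FF_2^2$ of $(Y_1,Y_2)$. Writing $\varphi=(\varphi_3,\varphi_4)$, I would first reformulate the hypotheses combinatorially: \textbf{(C1)} says that any two cells of $S$ with distinct $g$-values are separated by $(\varphi_3,\varphi_4)$, while \textbf{(C2)} says that for each of the four pairs an ambiguity survives, e.g.\ for $(Y_1,Y_3)$ there is a value of $Y_1$ over which two cells of $S$ carry the same $\varphi_3$ but differ in $g$ (and analogously, with rows replaced by columns and with $\varphi_3$ replaced by $\varphi_4$, for the other three pairs).

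Next I would pin down the support. If $|S|\le 3$ one checks that either $M$ collapses to a function of a single edge, or the ambiguity forced by \textbf{(C2)} is localised (through the corner cell common to the unique full row and full column) on a pair of cells with different $g$-values that \textbf{(C1)} must separate; this is a direct clash, so $S$ must be all of $\FF_2^2$. With all four cells present I would then classify $g$. A $g$ depending on only one of $Y_1,Y_2$ is excluded at once, since $M$ would be recoverable from that edge alone, contradicting \textbf{(C2)}; and an ``AND-type'' $g$ (one with a genuine $Y_1Y_2$ term, up to affine change) is excluded because it forces the single row carrying the $(Y_1,Y_3)$-ambiguity to coincide with the one carrying the $(Y_1,Y_4)$-ambiguity, so that two cells with different $g$ are left unseparated, contradicting \textbf{(C1)}. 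Hence $g$ is of XOR type, $g(Y_1,Y_2)=f_1(Y_1)+f_2(Y_2)$ for suitable flips $f_1,f_2$; setting $L:=f_1(Y_1)$ gives $Y_1'=L$ and $Y_2'=M+L$, i.e.\ \eqref{Eq3}, with $L$ possibly correlated with $M$.

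It remains to identify $\varphi$. Working in the coordinates $a:=f_1(Y_1)$, $b:=f_2(Y_2)$, so that $M=a+b$, I would rewrite the four secrecy requirements as ``$\varphi_3$ is constant on some row'', ``$\varphi_4$ is constant on some row'', ``$\varphi_3$ is constant on some column'', and ``$\varphi_4$ is constant on some column''. These force each of $\varphi_3,\varphi_4$ to be either constant or a genuinely bilinear $\alpha_0+\alpha_1 a+\alpha_2 b+ab$; the constant option is killed because separation would then rest on a single affine function, which \textbf{(C1)} cannot satisfy. For the bilinear forms the four separation inequalities of \textbf{(C1)} collapse to the two relations $\alpha_1+\beta_1=1$ and $\alpha_2+\beta_2=1$ on the coefficients of $\varphi_3,\varphi_4$. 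Solving this small system and absorbing the constants $\alpha_0,\beta_0$ into the output flips $f_3,f_4$ pins $(\varphi_3,\varphi_4)$ down, up to the network's exchange of the two parallel output edges, to the normal form $Y_3'=Y_1'(Y_2'+Y_1')$ and $Y_4'=(Y_1'+1)(Y_2'+Y_1')$, i.e.\ \eqref{Eq1}--\eqref{Eq2}.

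The main obstacle is not any single computation but the simultaneous bookkeeping in the last two steps: I must keep the \emph{existential} secrecy constraints and the \emph{injectivity-type} recoverability constraints in play at once, and carefully track the residual relabelling freedom (the flips $f_1,\dots,f_4$ together with the exchange of the two parallel edges in each group). The delicate point is exactly that a passive-secure code is rigid only up to these relabellings, so the heart of the argument is verifying that every admissible coefficient solution is carried onto the stated normal form by an allowed choice of the $f_i$.
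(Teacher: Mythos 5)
Your proposal is correct and, at the top level, follows the same decomposition as the paper's proof: write $M=g(Y_1,Y_2)$, classify $g$, then pin down $\varphi=(\varphi_3,\varphi_4)$. But the two key steps are executed by genuinely different arguments. The paper proceeds by double enumeration: it lists all sixteen functions on $\FF_2^2$ (its displays \eqref{D1}--\eqref{C4}), reduces to the two cases $M=Y_1+Y_2$ and $M=Y_1Y_2$, and then in each case runs through candidate pairs $(\varphi_3,\varphi_4)$ drawn from the four product functions, killing the AND case only at the end because no pair there is decodable (its Step (3)). You instead (i) make the support of $(Y_1,Y_2)$ an explicit issue and force it to be all of $\FF_2^2$ --- a point the paper silently assumes when it treats $M$ as a globally defined Boolean function; (ii) eliminate AND-type $g$ \emph{before} touching $\varphi$, by noting that the $(Y_1,Y_3)$- and $(Y_1,Y_4)$-ambiguities required by (C2) must both sit in the unique row where $g$ is nonconstant, leaving that row unseparated and contradicting (C1) --- a short structural replacement for the paper's exhaustion; and (iii) settle the XOR case by algebraic normal form, where the four secrecy requirements force each of $\varphi_3,\varphi_4$ to be constant or to contain the $ab$ term, constants are excluded, and (C1) collapses to $\alpha_1+\beta_1=1$ and $\alpha_2+\beta_2=1$; I verified these coefficient relations and they are exactly right, so this cleanly replaces the pairwise enumeration in the paper's Step (2).

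One caveat, which you share with the paper rather than introduce: the system $\alpha_1+\beta_1=\alpha_2+\beta_2=1$ also admits, after absorbing constants into output flips, the solution $(\varphi_3,\varphi_4)=(Y_1Y_2,(Y_1+1)(Y_2+1))$ and its edge-swapped version. The paper disposes of this case by ``replacing $Y_2$ by $Y_2+1$'', and you by invoking ``the exchange of the two parallel output edges''; neither is sufficient. That code does satisfy (C1) and (C2), and an input flip does bring $(\varphi_3,\varphi_4)$ to the form \eqref{Eq1}--\eqref{Eq2}, but the encoder then reads $Y_2'=(M+1)+L$ with $L:=Y_1'$, so \eqref{Eq3} holds only after the further relabelling $M\mapsto M+1$; this relabelling is produced neither by the maps $f_1,\dots,f_4$, nor by the permitted correlation of $L$ with $M$, nor by exchanging $e(3)$ and $e(4)$ (as one checks directly, since $f_3$ would have to reproduce $Y_1Y_2+Y_1$ or $Y_1Y_2+Y_2$ as a function of $Y_1Y_2$ alone). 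So, exactly like the paper's own argument, yours establishes the lemma only up to a complement of the message. This is an imprecision in the lemma's formulation, harmless for its role in the paper (every code satisfying (C1) and (C2) is insecure under active attack), rather than a defect peculiar to your proof.
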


Since 
the number of edges to be attacked
is the same as the transmission rate from Alice to Bob,
no linear code works in this scheme.
Hence, we need to introduce a non-linear coding operation in the intermediate node. Lemma \ref{T6} shows that such a non-linear coding operation is limited to 
\eqref{Eq1} and \eqref{Eq2}.

The combination of Lemma \ref{T6} and the discussion in Section \ref{S4}
implies that there is no code over the one hop relay network (Fig. \ref{F1})
to guarantee the secrecy for an active attack.
However, this theorem assumes a deterministic coding operation on the intermediate node.
If we are allowed to use a randomized operation on the intermediate node 
in a similar way to an encoder,
we can construct a code whose secrecy holds even against Eve's active attack
in this situation as follows.
In the first step, we employ the code given in \eqref{Eq3}.
Using another scramble random variable $L'$,
the intermediate node performs the following coding operation:
\begin{align}
Y_3:=Y_1+Y_2+L'=M_L', \quad
Y_4:=L'.
\end{align}
Then, Eve cannot recover the message $M$ from 
$(Y_1,Y_3)$, $(Y_1,Y_4)$, $(Y_2,Y_3)$, nor $(Y_2,Y_4)$.
This example shows that the deterministic condition for $\varphi$ is crucial in Lemma \ref{T6}.

\begin{proofof}{Lemma \ref{T6}}

\noindent{\bf Step (1):}\quad
To satisfy condition (C1),
we need to recover the message $M$ from $(Y_1,Y_2)$ from a deterministic function $f$.
Functions from $\FF_2^2$ to $\FF_2$ are classified as follows.
\begin{align}
&Y_1,~ Y_1+1,Y_2,~ Y_2+1,0,1\Label{D1}\\
&Y_1+Y_2, ~ Y_1+Y_2+1, \Label{D2}\\
&Y_1Y_2,
(Y_1+1)(Y_2+1),
(Y_1+1)Y_2,
Y_1(Y_2+1),
\Label{C3}\\
&
Y_1Y_2+1,
(Y_1+1)(Y_2+1)+1,
(Y_1+1)Y_2+1,
Y_1(Y_2+1)+1.
\Label{C4}
\end{align}
The cases in \eqref{D1} are non-secure or
does not satisfy (C1).
The cases in \eqref{D2} are reduced to the case $M=Y_1+Y_2$.
The cases in \eqref{C3} and \eqref{C4} 
are reduced to the case $M=Y_1Y_2$.
Hence, we consider only these two cases.

\noindent{\bf Step (2):}\quad
Now, we consider the case $M=Y_1+Y_2$.
When $Y_3$ or $Y_4$ is given as a non-zero linear function of $Y_1$ and $Y_2$,
we denote the random variable as $Y_*$.
Hence, $Y_1$ or $Y_2$ is linearly independent of $Y_*$.
We denote the linearly independent variable as $Y_{**}$.
When Eve eavesdrops $Y_*$ and $Y_{**}$, 
she can recover $Y_1$ and $Y_2$ and so she can also recover $M$.
To satisfy condition (C2), we need to avoid such an attack, 
which requires both $Y_3$ and $Y_4$ to be non-linear functions of $(Y_1,Y_2)$.
They are given as two of the functions given in \eqref{C3} and \eqref{C4}.
Since any function in \eqref{C4} is deterministically given from 
a function given in \eqref{C3},
we consider only functions in \eqref{C3}.
Under this constraint,
if and only if $(Y_3,Y_4)$ are given as 
the pair 
$(Y_1Y_2,(Y_1+1)(Y_2+1))$ or $(Y_1(Y_2+1),(Y_1+1)Y_2)$,
we can recover $M=Y_1+Y_2$ from $Y_3$ and $Y_4$.
The latter case is the same as \eqref{Eq1} and \eqref{Eq2}. 
In the former case, we obtain \eqref{Eq1} and \eqref{Eq2}
by replacing $Y_2$ by $Y_2+1$.

\noindent{\bf Step (3):}\quad
Now, we consider the case where $M=Y_1Y_2$.
For the same reason as with Step (2), 
condition (C2) requires 
both $Y_3$ and $Y_4$ to be non-linear functions of 
$(Y_1,Y_2)$.
Thus, we consider only functions in \eqref{C3}.
For secrecy, i.e., to satisfy (C2), we cannot use $Y_1Y_2$.
Hence, we need to choose two from
$(Y_1+1)(Y_2+1),(Y_1+1)Y_2$, and $Y_1(Y_2+1)$.
However, no two of them can recover $M$.
To observe this fact, we consider cases with $(Y_1+1)Y_2 $ and $Y_1(Y_2+1)$.
In these cases, when $(Y_1,Y_2)=(0,0)$ or $(1,1)$,
both values are zero.
That is, we cannot distinguish $(0,0)$ and $(1,1)$.
Hence, we cannot recover $M$ from $(Y_1+1)Y_2 $ and $Y_1(Y_2+1)$, i.e.,
condition (C1) does not hold.
We can show this fact in other pairs in the same way.
Therefore, there is no operation satisfying the required conditions
when $M=Y_1Y_2$.
\end{proofof}

\subsection{Analysis with $d \ge 3$}\label{S3-5}
\subsubsection{Construction of imperfectly secure code for active attacks}\label{S3-5-1}
To show Theorem \ref{TT7}, we construct a secure network coding against any active attack 
on the one hop relay network given in Fig. \ref{F1}
when the message and the information at the edges
are given as an element of $\ZZ_d$.
Here, we define our code $(\phi,\varphi,\psi)$
in the same way as in Subsection \ref{S2-5}.
That is, the coding operation $\varphi$ on the intermediate node
has no additional scramble random variable. 

It show Theorem \ref{TT7}, it is sufficient to construct a code to satisfy the conditions (B1) and (B2) given in Subsection \ref{S4-1}.
Since it is not so easy to check the conditions (B1) and (B2),
we seek equivalent conditions.
For simplicity, 
we employ a scramble variable $L$ taking values in $\ZZ_d$.
Hence, we assume that the encoder $\phi$ in the source node is given as 
a pair of functions $(\phi^{(1)},\phi^{(2)})$
that maps two random variables $(M,L)$
to the two variables $(Y_1,Y_2)$.
That is, the encoder $\phi$ forms a function from $\ZZ_d^2$ to itself.
Now, we fix the function $\phi$ as follows
\begin{align}
Y_1= \phi^{(1)}(M,L):= M+L,\quad 
Y_2= \phi^{(2)}(M,L):=L.\label{con1}
\end{align}

For a coding operation $\varphi$, we define 
the functions $\varphi^{(3)}$ and $\varphi^{(4)}$ as 
\begin{align}
(\varphi^{(3)}(i,j),\varphi^{(4)}(i,j)):=\varphi(i,j).
\end{align}
Then, we regard 
the functions $\varphi^{(3)}$ and $\varphi^{(4)}$ as matrices as follows,
\begin{align}
\varphi^{(3)}_{i,j}:=\varphi^{(3)}(i,j),\quad
\varphi^{(4)}_{i,j}:=\varphi^{(4)}(i,j).
\end{align}
To discuss condition (B2),
we introduce an anti-Latin square.
A matrix $a_{i,j}$ on $\ZZ_d$ is called an {\it anti-Latin square}
when each row and each column have duplicate elements as
\begin{align}
& \left(
\begin{array}{ccc}
1 & 0 & 0 \\
0 & 0 & 2 \\
1 & 2 & 2
\end{array}
\right),\quad
\left(
\begin{array}{ccc}
1 & 0 & 1 \\
1 & 2 & 1 \\
0 & 2 & 0
\end{array}
\right),\Label{Ex1}\\
& \left(
\begin{array}{cccc}
1 & 0 & 3 & 3 \\
0 & 0 & 2 & 3 \\
1 & 1 & 3 & 2 \\
0 & 2& 2 & 1
\end{array}
\right),\quad
\left(
\begin{array}{cccc}
2 & 2 & 1 & 0\\
0 & 3 & 3 & 1 \\
0 & 3 & 3 & 0 \\
1 & 1 & 2 & 2 
\end{array}
\right), \Label{Ex2}
\end{align}
which is the opposite requirement to a Latin square.
Therefore, we have the following lemma.

\begin{lemma}\Label{LOF}
When the encoder $\phi$ satisfies condition \eqref{con1},
conditions (B1) and (B2) are rewritten as
\begin{description}
\item[(B1')] 
For each $a \in \ZZ_d$ and
$m\neq m'\in \ZZ_d$,
the relation $
\Xi_{a,m}(\varphi^{(3)},\varphi^{(4)})
\cap
\Xi_{a,m'}(\varphi^{(3)},\varphi^{(4)})
=\emptyset$ holds,
where $\Xi_{a,m}(\varphi^{(3)},\varphi^{(4)}):=
\varphi^{(4)} (\{(i,i+m)| \varphi^{(3)}_{i,i+m}=a\})$.

\item[(B2')] 
The matrices $\varphi^{(3)}$ and $\varphi^{(4)}$ are anti-Latin squares.
\end{description}
\end{lemma}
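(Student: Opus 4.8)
The plan is to fix the encoder \eqref{con1}, so that $Y_1=M+L$ and $Y_2=L$, and to exploit the resulting identity $M=Y_1-Y_2$. Under this encoder, as $L$ ranges over $\ZZ_d$ the inputs $(Y_1,Y_2)$ presented to the intermediate node for a fixed message trace out a diagonal of constant index-difference, which I align with the set $\{(i,i+m)\}$ appearing in the definition of $\Xi_{a,m}(\varphi^{(3)},\varphi^{(4)})$. With this dictionary fixed, I would prove the two equivalences (B1)$\Leftrightarrow$(B1') and (B2)$\Leftrightarrow$(B2') separately: the first governs Bob's decoder, the second governs Eve's active attack.

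First I would treat recoverability. Since $\varphi$ is deterministic and Bob sees only $(Y_3,Y_4)=(\varphi^{(3)}(Y_1,Y_2),\varphi^{(4)}(Y_1,Y_2))$, condition (B1) holds exactly when distinct messages never produce a common output pair, i.e. the set $\{(\varphi^{(3)}(i,i+m),\varphi^{(4)}(i,i+m))\}_{i}$ and its analogue for $m'$ are disjoint for all $m\neq m'$. I would then note that such a collision can be detected coordinate by coordinate: a common pair $(a,c)$ exists precisely when $c$ lies in both $\Xi_{a,m}(\varphi^{(3)},\varphi^{(4)})$ and $\Xi_{a,m'}(\varphi^{(3)},\varphi^{(4)})$, since $\Xi_{a,m}$ collects exactly the $\varphi^{(4)}$-values attained on the $m$-diagonal where $\varphi^{(3)}=a$. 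Ranging over every $a$ and every pair $m\neq m'$ then yields (B1'); this step is essentially unfolding the definition of deterministic decodability.

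The substantive part is (B2)$\Leftrightarrow$(B2'), which I would establish by analyzing the four active attacks one at a time. For the attack on $e(1),e(3)$: Eve reads $Y_1=M+L$, injects a value $Y_1'$, and eavesdrops $Y_3=\varphi^{(3)}(Y_1',Y_2)=\varphi^{(3)}(Y_1',L)$; since she already knows $M+L$, recovering $M$ is equivalent to recovering $L$ from $Y_3$. A constant injection $Y_1'=b$ makes $L\mapsto\varphi^{(3)}(b,L)$ the $b$-th row of the matrix $\varphi^{(3)}$, so the attack succeeds exactly when some row of $\varphi^{(3)}$ is injective (has all-distinct entries). Dually, $e(2),e(3)$ fixes the second argument and forces Eve to recover $Y_1$, probing the columns of $\varphi^{(3)}$, while $e(1),e(4)$ and $e(2),e(4)$ probe the rows and columns of $\varphi^{(4)}$. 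Hence (B2) holds if and only if every row and every column of both $\varphi^{(3)}$ and $\varphi^{(4)}$ contains a repeated element, which is precisely the anti-Latin square condition (B2').

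The main obstacle will be justifying that this row/column-injectivity criterion is exactly correct in both directions and against adaptive strategies. The ``attack succeeds'' direction is immediate, since an injective row or column yields a perfect decoder; for the ``attack fails'' direction I would argue that a repeated entry forces two distinct values of $L$ (resp.\ of $Y_1$) to yield the same eavesdropped symbol, so that, conditioned on Eve's observation, two distinct messages remain equiprobable and certain recovery is impossible. I would also need to rule out that adaptivity or randomization helps: because the injected symbol is a function of Eve's own observation on the attacked first-group edge, the triple $(Y_1,Y_1',Y_3)$ carries no more information than $(Y_1,Y_3)$, and for each fixed value of $Y_1$ the conditional law of $L$ is uniform, so a per-observation choice of injection can do no better than committing to a single injective row. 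Assembling the four cases then delivers the stated equivalence.
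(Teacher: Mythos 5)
Your proposal is correct and follows essentially the same route as the paper's own proof: (B1)$\Leftrightarrow$(B1') by unfolding decodability on the constant-difference diagonals, and (B2)$\Leftrightarrow$(B2') by observing that Eve's injection selects a row (or column) of $\varphi^{(3)}$ or $\varphi^{(4)}$, so perfect secrecy is equivalent to every row and column containing a duplicate. You spell out the four attack cases, the converse (ambiguity) direction, and the irrelevance of adaptive injections, which the paper's terse proof leaves implicit, but the underlying argument is identical.
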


\begin{proof}
We have the equivalence between conditions (B1) and (B1')
because (B1') means that the pair of $\varphi^{(3)}(i,j)$ and 
$\varphi^{(4)}(i,j)$ uniquely identifies the difference $m=j-i$.

Next, we show the equivalence between (B2) and (B2').
Assume that Eve eavesdrops and contaminates $Y_1$ and eavesdrops $Y_3$.
Choosing the replaced information $Y_1'$, 
Eve can choose a row of the matrix $\varphi^{(3)}$.
To prevent Eve from recovering $M$ perfectly,
all the rows of the matrix $\varphi^{(3)}$ need to have duplicate elements.
Hence, to satisfy condition (B2),
both matrices $\varphi^{(3)}$ and $\varphi^{(4)}$ need to satisfy this duplication requirement  for all rows and columns.
\end{proof}

Due to Lemma \ref{LOF}, to show Theorem \ref{TT7},
it is sufficient to construct a pair of anti-Latin squares to satisfy conditions (B1') and (B2').
While it is trivial to find anti-Latin squares, they need to satisfy condition (B1') as well.
Condition (B2') forbids a linear operation on the intermediate node in the finite field case.
A pair of anti-Latin squares is called decodable when it satisfies condition (B1').
That is, a decodable pair of anti-Latin squares gives a code on the one hop relay network given in Fig. \ref{F1} satisfying conditions (B1) and (B2).
Lemma \ref{T6} says that there is no decodable pair of $2 \times 2$ anti-Latin squares.
Fortunately, Eq. \eqref{Ex1} (Eq. \eqref{Ex2})
is a decodable pair of  $3 \times 3$ ($4 \times 4$) anti-Latin squares. 

However, we can systematically construct decodable pairs of anti-Latin squares.
The following are pairs of  anti-Latin squares for $d=3,5,7$:
\begin{align}
&\left(
\begin{array}{ccc}
0 & 1 & 0 \\
1 & 1 & 2 \\
0 & 2 & 2
\end{array}
\right),\quad
\left(
\begin{array}{ccc}
0 & 2 & 2 \\
0 & 1 & 0 \\
1 & 1 & 2 
\end{array}
\right),\Label{Ex3}
\\
&\left(
\begin{array}{ccccc}
0 & 1 & 2 & 0 &0 \\
1 & 1 & 2 & 3 & 1\\
2 & 2 & 2 & 3 & 4 \\
0 & 3 & 3 & 3 & 4 \\
0 & 1 & 4 & 4 & 4 
\end{array}
\right),\quad
\left(
\begin{array}{ccccc}
0 & 3 & 3 & 3 & 4 \\
0 & 1 & 4 & 4 & 4 \\
0 & 1 & 2 & 0 &0 \\
1 & 1 & 2 & 3 & 1\\
2 & 2 & 2 & 3 & 4 
\end{array}
\right),\Label{Ex5}
\\
&\left(
\begin{array}{ccccccc}
0 & 1 & 2 & 3 & 0 & 0 &0 \\
1 & 1 & 2 & 3 & 4 & 1 &1 \\
2 & 2 & 2 & 3 & 4 & 5 &2 \\
3 & 3 & 3 & 3 & 4 & 5 & 6 \\
0 & 4 & 4 & 4 & 4 & 5 & 6 \\
0 & 1 & 5 & 5 & 5 & 5 & 6 \\
0 & 1 & 2 & 6 & 6 & 6 & 6 
\end{array}
\right),\nonumber \\
&\left(
\begin{array}{ccccccc}
0 & 4 & 4 & 4 & 4 & 5 & 6 \\
0 & 1 & 5 & 5 & 5 & 5 & 6 \\
0 & 1 & 2 & 6 & 6 & 6 & 6 \\
0 & 1 & 2 & 3 & 0 & 0 &0 \\
1 & 1 & 2 & 3 & 4 & 1 &1\\
2 & 2 & 2 & 3 & 4 & 5 &2 \\
3 & 3 & 3 & 3 & 4 & 5 & 6 
\end{array}
\right).\Label{Ex7}
\end{align}
These constructions are generalized to the case with a general odd number
$d=2\ell+1$ as follows.
The functions 
$\varphi^{(3)}$ 
and
$\varphi^{(4)}$ 
are defined as
\begin{align}
(\varphi^{(3)})^{-1}(k)
&:=
\left\{
\begin{array}{l}
(k,k-\ell),(k,k-\ell+1),\ldots, \\
(k,k-1),(k,k),(k-1,k),\ldots,\\
(k-\ell+1,k),
(k-\ell,k)
\end{array}
\right\} \\
(\varphi^{(4)})^{-1}(k)
&:=
\left\{
\begin{array}{l}
(k+\ell,k-\ell),(k+\ell,k-\ell+1),\\
\ldots, (k+\ell,k-1),\\
(k+\ell,k),(k+\ell-1,k),\\
\ldots,
(k+1,k),
(k,k)
\end{array}
\right\} .
\end{align}
Then, we have
\begin{align}
\varphi^{(4)}(k,k-\ell)&=k-\ell,\\
\varphi^{(4)}(k,k-\ell+1)&=k-\ell+1,\\
\vdots \\
\varphi^{(4)}(k,k-1)&= k-1\\
\varphi^{(4)}(k,k)&=k\\
\varphi^{(4)}(k-1,k)&=k+\ell\\
\vdots\\
\varphi^{(4)}(k-\ell+1,k)&=k+2\\
\varphi^{(4)}(k-\ell,k)&=k+1,
\end{align}
which satisfy condition (B1').
Hence, the functions $\varphi^{(3)}$ and $\varphi^{(4)}$ give a pair of  anti-Latin squares. 

Next, we consider the even case with $d \ge 4$.
The following are pairs of  anti-Latin squares for $d=4,6,8$:
\begin{align}
&\left(
\begin{array}{cccc}
0 & 1 & 3 & 3 \\
0 & 1 & 2 & 0 \\
1 & 1 & 2 & 3 \\
0 & 2& 2 & 3
\end{array}
\right),\quad
\left(
\begin{array}{cccc}
0 & 0 & 1 & 0 \\
1 & 1 & 1 & 2 \\
3 & 2 & 2 & 2 \\
3 & 0& 3 & 3
\end{array}
\right), \Label{Ex4}
\\
&\left(
\begin{array}{cccccc}
0 & 1 & 2 & 5 & 5 &5\\
0 & 1 & 2 & 3 & 0&0\\
1 & 1 & 2 & 3 &  4&1\\
2 & 2& 2 & 3 & 4 & 5\\
0 & 3 & 3 &3 & 4 & 5\\
0 & 1  & 4 & 4 & 4 & 5 
\end{array}
\right),\quad
\left(
\begin{array}{cccccc}
1 & 1 & 1 & 2 & 3 & 1\\
2 & 2 & 2 & 2 & 3& 4\\
5 & 3& 3 & 3 & 3 & 4\\
5 & 0 & 4 &4 & 4 & 4\\
5 & 0  & 1 & 5 & 5 & 5 \\
0 & 0 & 1 & 2 & 0 & 0
\end{array}
\right), \Label{Ex6}
\\
&\left(
\begin{array}{cccccccc}
0 & 1 & 2 & 3 & 7 &7 & 7 &7\\
0 & 1 & 2 & 3 & 4&0 & 0& 0\\
1 & 1 & 2 & 3 &  4&5 & 1 & 1\\
2 & 2& 2 & 3 & 4 & 5 & 6 & 2\\
3 & 3 & 3 &3 & 4 & 5 & 6 & 7\\
0 & 4  & 4 & 4 & 4 & 5 & 6&  7 \\
0 & 1  & 5 & 5 & 5 & 5 & 6&  7 \\
0 & 1  & 2 & 6 & 6 & 6 & 6&  7 
\end{array}
\right),\nonumber \\ 
&\left(
\begin{array}{cccccccc}
2 & 2 & 2 & 2 &  3&4 & 5 & 2\\
3 & 3& 3 & 3 & 3 & 4 & 5 & 6\\
7 & 4 & 4 &4 & 4 & 4 & 5 & 6\\
7 & 0  & 5 & 5 & 5 & 5 & 5 & 6 \\
7 & 0  & 1 & 6 & 6 & 6 & 6&  6 \\
7 & 0  & 1 & 2 & 7 & 7 & 7&  7 \\
0 & 0 & 1 & 2 & 3 &0 & 0 &0\\
1 & 1 & 1 & 2 & 3& 4 & 1& 1\\
\end{array}
\right). \Label{Ex8}
\end{align}
These constructions are generalized to the case with a general even number
$d=2\ell$ with $\ell \ge 2$ as follows.
The functions 
$\varphi^{(3)}$ 
and
$\varphi^{(4)}$ 
are defined as
\begin{align}
(\varphi^{(3)})^{-1}(k)
&:=
\left\{
\begin{array}{l}
(k+1,k-\ell+1),\\
(k+1,k-\ell+2),\ldots, \\
(k+1,k-1),(k+1,k),\\
(k,k),(k-1,k),
\ldots,\\
(k-\ell+2,k),
(k-\ell+1,k)
\end{array}
\right\} \\
(\varphi^{(4)})^{-1}(k)
&:=
\left\{
\begin{array}{l}
(k-\ell+1,k-\ell+2),\\
(k-\ell+2,k-\ell+2),
\ldots,\\ 
(k,k-\ell+2),\\
(k+1,k-\ell+2),\\
(k+1,k-\ell+1),
\ldots,\\
(k+1,k-2\ell+4),\\
(k+1,k-2\ell+3)
\end{array}
\right\} .
\end{align}
Then, we have
\begin{align}
\varphi^{(4)}(k+1,k-\ell+1)&=k+\ell,\\
\varphi^{(4)}(k+1,k-\ell+2)&=k+\ell+1,\\
\vdots \\
\varphi^{(4)}(k+1,k-1)&= k+2\ell-2\\
\varphi^{(4)}(k+1,k)&=k+\ell-1\\
\varphi^{(4)}(k,k)&=k+\ell-2\\
\varphi^{(4)}(k-1,k)&=k+\ell-3\\
\vdots\\
\varphi^{(4)}(k-\ell+2,k)&=k \\
\varphi^{(4)}(k-\ell+1,k)&=k-1,
\end{align}
which satisfy condition (B1').
Hence, the functions $\varphi^{(3)}$ and $\varphi^{(4)}$ give a pair of  anti-Latin squares. 

In summary, since these examples work with $d\ge 3$, 
we have proven Theorem \ref{TT7}, i.e., 
there exists a secure code over the active attacks on the one hop relay network (Fig. \ref{F1})
when $d\ge 3$.

Furthermore, when $\varphi$ is given by these pairs of anti-Latin squares, 
Bob can decode $L$ as well as $M$
while the code given by \eqref{Ex1} nor \eqref{Ex2} cannot.
That is, these systematic constructions work well whenever the encoder
$\phi=(\phi^{(1)},\phi^{(2)})$ is a one-to-one function 
on $\ZZ_d^2$
to satisfy the condition
$
\{i|\exists j, ~\phi^{(1)}{(i,j)}=k\}=
\{i|\exists j, ~\phi^{(2)}{(i,j)}=k\}=
\ZZ_d$ for any $k$.


\subsubsection{Leaked information of our code for passive attacks}
Next, we discuss the leaked information under the above code under passive attacks
in a way similar to \eqref{E9} and \eqref{F10}.

Since for $i=1,2$ and $j=3,4$, the pair $M ,Y_i$ decides $Y_1,Y_2$,
we have $H(Y_j|M Y_i)=0$.
Since $Y_i$ is independent of $M$,
we have
\begin{align}
&I(M;Y_i Y_j)
= H(M)-H(M|Y_iY_j) \nonumber \\
=& H(M|Y_i)-H(M|Y_iY_j)
=I(M; Y_j|Y_i)\nonumber \\ 
=& H(Y_j|Y_i)-H(Y_j|M Y_i)
= H(Y_j|Y_i).
\end{align}
When $d$ is odd,
we have
\begin{align}
&H(Y_j|Y_i)= H(Y_j|Y_i=y_i) \nonumber \\
=& 
\frac{d+1}{2}\cdot
\frac{1}{d}\log \frac{2d}{d+1}
+
\frac{d-1}{2}\cdot
\frac{1}{d}\log d
\end{align}
for any $y_i$ with $i=1,2$ and $j=3,4$.

When $d\ge 4$ and $d$ is even,
we have
\begin{align}
&H(Y_3|Y_2)=H(Y_3|Y_2=y_2)
= H(Y_4|Y_1)= H(Y_4|Y_1=y_1)\nonumber \\
=& 
\frac{d+2}{2} \cdot
\frac{1}{d}\log \frac{2d}{d+2}
+
\frac{d-2}{2}\cdot
\frac{1}{d}\log d \\
&H(Y_3|Y_1)=H(Y_3|Y_1=y_1)= H(Y_4|Y_2)
= H(Y_4|Y_2=y_2)\nonumber \\
=& 
\frac{1}{2}\log 2
+
\frac{1}{2}\log d 
\end{align}
for any $y_1,y_2$.
In summary, when $d$ is large, we have
\begin{align}
I(M;Y_i Y_j) = \frac{1}{2}\log d +\frac{1}{2}\log 2
+ O(\frac{\log d}{d}).\Label{F29-6}
\end{align}

\subsubsection{Lower bound of leaked information for passive attacks}\Label{S3-6}
Next, to show the optimality of the code defined in Subsubsection \ref{S3-5-1}, 
we show that the amount in \eqref{F29-6} is close to the minimum leaked information
under a certain condition when $d$ is large.
To derive a lower bound, we consider the following conditions for our code.

\begin{description}

\item[(D1)]
The coding operation on the intermediate node is deterministic.

\item[(D2)]
Alice can use a scramble random variable $L$.
\end{description}
Since our encoder is given as a stochastic map $\phi$
from $\ZZ_2$ to $\ZZ_2^2$ in Subsection \ref{S4-1},
condition (D2) is a more restrictive condition for our encoder.
Then, we have the following theorem.

\begin{lemma}\label{TD1}
Any network code satisfies 
the inequality
\begin{align}
I(M; Y_i Y_3)+I(M; Y_i Y_4) \ge 2 H(M) - \log d.
\end{align}
for $i=1,2$.
\end{lemma}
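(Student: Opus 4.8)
The plan is as follows. Fix $i=1$; the case $i=2$ is entirely symmetric, with the roles of $Y_1$ and $Y_2$ interchanged. Since $I(M;Y_1 Y_j)=H(M)-H(M|Y_1 Y_j)$ for $j=3,4$, the asserted inequality is equivalent to the single statement
\begin{align}
H(M|Y_1 Y_3)+H(M|Y_1 Y_4)\le \log d,
\end{align}
so the whole problem reduces to upper-bounding this sum of equivocations by $\log d$.

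The key structural inputs are that the relay is deterministic (condition (D1)), so that $(Y_3,Y_4)=\varphi(Y_1,Y_2)$ is a function of $(Y_1,Y_2)$, and that recoverability (condition (B1)) gives $M=\psi(Y_3,Y_4)$, i.e.\ $M$ is a function of $(Y_3,Y_4)$. The naive route --- treating $Y_3$ and $Y_4$ as functions of the full pair $(Y_1,Y_2)$ --- only yields the bound $2\log d$, which is too weak; this mismatch is the main obstacle. The fix is to condition on $Y_1$ first. Conditioned on $Y_1=y_1$, the variables $Y_3=\varphi^{(3)}(y_1,Y_2)$ and $Y_4=\varphi^{(4)}(y_1,Y_2)$ are functions of the single variable $Y_2$, and hence so is $M=\psi(Y_3,Y_4)$. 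Thus, within the conditional distribution given $Y_1=y_1$, all three of $M,Y_3,Y_4$ are deterministic functions of $Y_2$, while $M$ is itself a function of $(Y_3,Y_4)$.

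The core step is then the following purely information-theoretic inequality: if $M,U,V$ are deterministic functions of a random variable $W$ and $M$ is a deterministic function of $(U,V)$, then $H(M|U)+H(M|V)\le H(W)$. I would prove this by the chain rule: since $M=f(U,V)$, one has $H(MV|U)=H(V|U)$ and $H(MU|V)=H(U|V)$, which give $H(M|U)=H(V|U)-H(V|M,U)$ and $H(M|V)=H(U|V)-H(U|M,V)$; dropping the two nonnegative equivocations and using $H(U|V)+H(V|U)\le H(U)+H(V|U)=H(U,V)\le H(W)$ finishes it. Applying this with $W=Y_2$, $U=Y_3$, $V=Y_4$ inside the conditioning on $Y_1=y_1$ yields $H(M|Y_3,Y_1=y_1)+H(M|Y_4,Y_1=y_1)\le H(Y_2|Y_1=y_1)$.

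Finally I would average over $y_1$: the left-hand side becomes $H(M|Y_1 Y_3)+H(M|Y_1 Y_4)$, and the right-hand side becomes $H(Y_2|Y_1)\le H(Y_2)\le \log d$, the last bound holding because $Y_2$ takes values in $\ZZ_d$. This establishes the displayed reduction and hence the lemma for $i=1$; conditioning on $Y_2$ instead gives $i=2$. I expect the only delicate point to be the bookkeeping that keeps all entropies inside the conditioning on $Y_1$ (so that $Y_3,Y_4$ genuinely are functions of $Y_2$ alone), since it is precisely this conditioning that turns the loose $2\log d$ into the sharp $\log d$.
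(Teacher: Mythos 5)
Your proof is correct and is essentially the paper's argument in different packaging: both rest on the two structural facts that $M$ is a function of $(Y_3,Y_4)$ (recoverability) and that $(Y_3,Y_4)$ is a function of $(Y_1,Y_2)$ (deterministic relay), bound each equivocation $H(M|Y_1Y_3)$, $H(M|Y_1Y_4)$ by a cross conditional entropy of $Y_4$ resp.\ $Y_3$, merge the two bounds via the chain rule into $H(Y_3Y_4|Y_1)$, and finish with $H(Y_3Y_4|Y_1)\le H(Y_2|Y_1)\le \log d$. The paper carries this out directly in terms of conditional entropies (without your pointwise conditioning on $Y_1=y_1$ and without isolating a standalone lemma), but the chain of inequalities is the same.
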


Lemma \ref{TD1} shows that
\begin{align}
\max_{i,j}I(M; Y_i Y_j) \ge H(M)- \frac{1}{2}\log d,\Label{NHT}
\end{align}
where the maximum is chosen from $i=1,2$ and $j=3,4$.
That is, to realize $\max_{i,j}I(M; Y_i Y_j)=0$, 
the message $M$ needs to satisfy 
\begin{align}
H(M)\le \frac{1}{2}\log d.
\end{align}
When $M$ is the uniform random variable, \eqref{NHT} can be rewritten as 
\begin{align}
\max_{i,j}I(M; Y_i Y_j) \ge \frac{1}{2}\log d.
\end{align}
This lower bound is almost equal to the RHS of \eqref{F29-6} when $d$ is large.

\begin{proof}

Since $M$ is decoded by $Y_3 Y_4$,
\begin{align*}
& 
H(M| Y_i Y_3)
\le H(Y_3Y_4 | Y_i Y_3)
= H(Y_4 | Y_i Y_3)
 \le H(Y_4| Y_i ).
\end{align*}
Similarly, we have $H(M| Y_i Y_4) \le H(Y_3| Y_i Y_4)$ by replacing $Y_3$ and $Y_4$.
Let $i'$ be the integer $1$ or $2$ that is different from $i$.
Combining them, we have
\begin{align*}
& H(M| Y_i Y_3)+H(M| Y_i Y_4)
 \le H(Y_4| Y_i )+H(Y_3| Y_i Y_4) \\
 =& H(Y_3 Y_4| Y_i ) 
 \stackrel{(a)}{\le}
 H(Y_i Y_{i'}| Y_i ) \\
=& H(Y_{i'}| Y_i ) \le \log d,
\end{align*}
where $(a)$ follows from the fact that 
$Y_3 Y_4$ is decided by $Y_1 Y_2$.
Thus, we obtain
\begin{align*}
&I(M; Y_i Y_3)+I(M; Y_i Y_4) \\
=& 2 H(M) - (H(M| Y_i Y_3)+H(M| Y_i Y_4)) \\
\ge & 2 H(M) - \log d.
\end{align*}
\end{proof}

\section{Conclusion}\Label{SCon}
We have discussed how sequential error injection affects the information leaked to Eve.
As the result, we have shown that 
there is no improvement when 
the network is composed of linear operations.
However, when the network contains non-linear operations,
we have found a counterexample that improves the information obtained by Eve.
Moreover, as Theorem \ref{T3},
we have shown the achievability of the asymptotic rate 
$m_{0}-m_{1}-m_{2}$ for a linear network under the secrecy and robustness conditions
when the transmission rate from Alice to Bob is $m_{0}$,
the rate of noise injected by Eve is $m_{1}$,
and the rate of information leakage to Eve is $m_{2}$.
The converse part of this rate is an interesting open problem.
In addition, 
as Theorem \ref{T3B},
we have discussed the secrecy and the asymptotic transmission rate 
when Eve has a possibility to inject noise into the network.

We have also discussed security over such active attacks on codes with a non-linear operation on the intermediate node in the one hop relay network (Fig. \ref{F1}).
In the binary case, when we impose our code to a certain security condition without an active attack,
as shown in Section \ref{S2-5}, our code is limited to the non-linear code given in Section \ref{S4}.
Unfortunately, the non-linear code given in Section \ref{S4} is insecure under active attacks.
To meet this kind of security condition,
the coding operation on the intermediate node needs to be non-linear.
To characterize this kind of security, we have introduced a new concept an ``anti-Latin square'', which is an opposite concept to a Latin square.
That is, such a secure code can be given as a decodable pair of anti-Latin squares while the concept of ``decodable'' is also introduced in Section \ref{S3-5}.
We have also shown the existence of a decodable pair of 
$d \times d$ anti-Latin squares when $d \ge 3$.
This fact shows that 
there exists a secure code over active attacks in the sense described in Section \ref{S4}
except for the binary case.

Further, we have applied our results to network quantum key distribution.
Then, we have clarified what type of network will enable us to realize secure long distance communication based on short distance quantum key distribution.
However, when we consider only the case given in Fig. \ref{F4},
we can employ a classical (non-quantum) secret sharing protocol \cite{Shamir} instead of network coding 
because all of communications of this case are routing.
In particular, cheater-identifiable secret sharing against rushing cheaters \cite{PW91,RAX,XMT1,XMT2,HK}
enables us to share secure keys without using public channels or prior shared randomness.

In this way, this paper has discussed the application of secure network coding
to a network model whose communications on the edges 
are realized by quantum key distribution.
Replacing the role of quantum key distribution by physical layer security,
we can consider a secure network based on physical layer security.
In particular, 
we can use secure wireless communication \cite{LPS,BC11,SC,Trappe,Zeng,WX,H17}
as a typical form of physical layer security,
which provides us with a secure network based on secure wireless communication.
A crucial weak point of physical layer security is the possibility that the eavesdropper might break the assumption of the model.
Such an attack might be realized in the following cases.
(1) The eavesdropper concentrates his/her resources on one point.
(2) The eavesdropper luckily encounters a situation that the assumption is broken.
When we combine physical layer security and secure network coding in the above way,
to eavesdrop our information, 
the eavesdropper needs to break the model of physical layer security in multiple communication channels.
In case (1), to realize this condition, the eavesdropper has to 
distribute his/her resources, which increases the difficulty of eavesdropping.
For case (2), 
the eavesdropper must be lucky in multiple communication channels,
and this probability is very small.
In this way, this kind of combination is particularly useful.

\section*{Acknowledgments}
MH and NC are very grateful to Dr. Wangmei Guo
and Mr. Seunghoan Song
for helpful discussions and comments.
The studies reported here were supported in part by 
the JSPS Grant-in-Aid for Scientific Research 
(C) No. 16K00014, (B) No. 16KT0017, (A) No.17H01280, 
(C) No. 17K05591, 
the Okawa Research Grant,
and Kayamori Foundation of Informational Science Advancement.

\appendices

\section{Proof of Proposition \ref{T2}}\Label{Ap1}

To show Proposition \ref{T2}, 
we regard any element of the finite field $\FF_q$ as
an element of a $t$-dimensional algebraic extension $\FF_{q'}$ 
of the finite field $\FF_q$, where $q'=q^t$.
The matrices $K_B,H_B,K_E,H_E$ on $\FF_q$ can be regarded as matrices on $\FF_{q'}$.
By choosing $l_n:=t n$, the matrices $X^{l_n}$, $Y_B^{l_n}$, $Y_E^{l_n}$, $Z^{l_n}$ on $\FF_q$ 
are converted to matrices ${X'}^{n}$, ${Y_B'}^{n}$, ${Y_E'}^{n}$, ${Z'}^{n}$ on $\FF_{q'}$,
which also satisfy \eqref{F1n} and \eqref{F2n} by regarding 
the same matrices $K_B,H_B,K_E,H_E$ on $\FF_q$ as matrices on $\FF_{q'}$.
Then, the following proposition is known.

\begin{proposition}[\cite{JL,JLKHKM,Jaggi2008,Yao2014}]
\Label{T2B}
We assume the following two conditions for $m_2, m_1, m_0$ and a sequence of prime power $q_{n}'$.
The inequality $m_{2}+m_{1}< m_{0}$ holds.
The size $q_{n}'$ of the finite field increases such that 
$\frac{q_{n}'}{{n}^{m_0+1}} \to \infty$. 
Then, there exists a sequence of codes $\Phi_{n}$ 
of block-length $n$ on finite field $\FF_{q_{n}'}$
whose message set is $\FF_{q_{n}'}^{k_{n}'}$ such that
\begin{align}
&\lim_{n \to \infty} \frac{k_{n}'}{n} = m_{0}-m_{1}\Label{eqH4-27C}
\\
&\lim_{n \to \infty} \max_{ \bm{K},\bm{H}} \max_{\alpha^n} 
P_e[\Phi_{n}, \bm{K},\bm{H},\alpha^n]=0 \Label{eqH4-27},
\end{align}
where the maximum is taken in the same way as with Proposition \ref{T2}.
\end{proposition}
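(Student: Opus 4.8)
The plan is to prove Proposition \ref{T2B} by a random-coding argument over the large field $\FF_{q_n'}$, designed so that the growth condition $q_n'/n^{m_0+1}\to\infty$ forces the generic linear-algebraic events guaranteeing decodability to hold with overwhelming probability. The starting point is that by \eqref{F1n} Bob observes $Y_B^n=K_BX^n+H_BZ^n$, where $\rank K_B=m_0$ and the corruption $H_BZ^n$ is confined to the $\le m_1$-dimensional column space of $H_B$. Since neither Alice nor Bob knows any of the matrices, I would first convert the non-coherent problem into a coherent one by prepending a short block of structured (identity-like) header columns to $X^n$: because $K_B$ and $H_B$ are fixed across the $n$ uses, Bob reads off their realized action on the headers and thereby estimates the effective channel and the error space. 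The header costs only $o(n)$ columns, so it does not disturb the limiting rate \eqref{eqH4-27C}, and the payload can be taken from a random \emph{linear} ensemble, which is what the downstream hashing step in the proof of Theorem \ref{T3} requires.

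The essential structural point, and the reason the hypothesis reads $m_1+m_2<m_0$ rather than $2m_1<m_0$, is that the adversary is only partially sighted: by \eqref{F2n} she chooses $Z^n$ as a function of her $\le m_2$-dimensional view $Y_E^n=K_EX^n+H_EZ^n$. I would therefore allocate the $m_0$ usable dimensions so that $m_1$ of them absorb the injected corruption and the remaining $m_0-m_1$ carry the message, while the $m_0-m_1-m_2\ge 1$ dimensions that Eve never observes serve as implicit shared randomness between Alice and Bob. Because the adversary cannot reproduce this hidden component in her rank-$\le m_1$ injection, she cannot align her error with any codeword difference $K_B(X'^n-X^n)$; Bob's decoder is then a minimum-rank / consistency rule that outputs the unique message admitting a rank-$\le m_1$ explanation of $Y_B^n$ consistent with the recovered header.

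For the error analysis I would fix $\bm K,\bm H$ and show that a decoding error forces a specific linear-algebraic coincidence among the random generator, the realized rank-$\le m_1$ error, and Eve's $\le m_2$-dimensional view, namely the vanishing of a nonzero polynomial in the random entries. By a Schwartz--Zippel estimate each such coincidence has probability $O(1/q_n')$. The delicate point is the count: although the Remark after Theorem \ref{T1} shows the raw number of strategies $\alpha^n$ is \emph{double-exponential} in $n$ (see \eqref{F18}), the error depends on $\alpha^n$ only through the realized rank-$\le m_1$ error and Eve's realized observation, so I would collapse the strategy space to a summary whose relevant cardinality is polynomial in $n$, of order $n^{m_0+1}$. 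A union bound then yields total error $O(n^{m_0+1}/q_n')$, which tends to $0$ precisely because $q_n'/n^{m_0+1}\to\infty$; this is exactly the exponent appearing in the hypothesis. Since the ensemble-averaged error vanishes, a standard derandomization extracts a deterministic code $\Phi_n$ satisfying \eqref{eqH4-27C} and \eqref{eqH4-27}.

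I expect the main obstacle to lie entirely in the previous paragraph and to be twofold. First, extracting the factor-of-two improvement, rate $m_0-m_1$ rather than the generic error-correction rate $m_0-2m_1$, genuinely requires the hidden-dimension device and hence a sharp accounting of exactly how much of $X^n$ escapes Eve's $m_2$-dimensional observation; the inequality $m_1+m_2<m_0$ must be invoked tightly rather than as slack. Second, and this is where I expect the improvement over the existing proof to sit, one must legitimately reduce the adaptive adversary, whose strategy count \eqref{F18} is double-exponential, to the polynomial family of linear-algebraic events above \emph{without} forfeiting the worst-case guarantee; controlling the adaptivity (the dependence of $Z^n$ on $Y_E^n$ through $\alpha^n$, constrained by the uniqueness condition) while keeping the union bound of size $o(q_n')$ is the crux.
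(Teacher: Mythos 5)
Your proposal contains genuine gaps, the most serious being the header/pilot device. Against the active adversary of Proposition \ref{T2B}, Eve's injection $H_B Z^n$ acts on \emph{every} column of the block, including any identity-like header columns you prepend; she can therefore corrupt the pilots themselves, so Bob's ``estimate'' of the channel and of the error space is adversarially chosen, and a consistency decoder anchored to a corrupted header inherits the omniscient-adversary penalty (rate $m_0-2m_1$) --- exactly the factor of two you are trying to avoid. The paper's proof of Proposition \ref{T2C} uses no channel estimation at all: Alice draws $m_0+1$ secret Vandermonde seeds $V_1,\ldots,V_{m_0+1}$, sends them together with the hash value $U_2=MU_1$ over a low-rate secret channel, and Bob simply solves the hash-consistency equation \eqref{4-27Y}, $U_3\bar{Y}_B^{n}U_1=U_2$, outputting $U_3\bar{Y}_B^{n}$; existence and correctness of the solution are pure linear algebra (Lemma \ref{L4-27}) under the events (F1) and (F2). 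Because the seeds never reach Eve, her adaptively chosen $\hat{Z}^{n}$ is \emph{independent} of $U_1$, and this independence --- not any collapse of the strategy space --- is what neutralizes the double-exponential count \eqref{F18}.

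Two further concrete errors. First, your union-bound accounting is wrong: there is no polynomial-size ``summary'' of strategies of cardinality $n^{m_0+1}$. The paper's union bound runs over all $z\in\FF_{q'}^{m_0}$ --- exponentially many events in the field size --- each a hash collision of probability at most $(n/q')^{m_0+1}$ by Lemma \ref{LJa}, and the quantity $n^{m_0+1}$ arises only as the product $q'^{m_0}\cdot(n/q')^{m_0+1}=n^{m_0+1}/q'$, which vanishes by the hypothesis $q_n'/n^{m_0+1}\to\infty$. Second, your mechanism for shared secrecy --- ``the $m_0-m_1-m_2$ dimensions that Eve never observes'' --- is not implementable: Alice and Bob do not know $\bm{K},\bm{H}$, and the guarantee \eqref{eqH4-27} is a maximum over all admissible choices, so no fixed allocation of dimensions is invisible to every Eve. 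The paper resolves this by a clean two-step reduction: Proposition \ref{T2BL} (from \cite{JL,Yao2014}) securely and reliably delivers an asymptotically negligible secret using the full strength of $m_1+m_2<m_0$, and Proposition \ref{T2C}, which needs only $m_0>m_1$, then achieves rate $m_0-m_1$ given that secret. Both ingredients would have to be supplied in precise form before your Schwartz--Zippel and derandomization steps could be carried out.
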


The optimality of the rate $m_0-m_{1}$ was also shown 
under the condition \eqref{eqH4-27} in \cite[Sections VI \& VII]{Yao2014}. 
By choosing $t_{n}=\lceil \frac{ (m_0+1)\log n}{\log q}\rceil$
and $l_{n}:= t_{n} {n}$,
Proposition \ref{T2B} implies Proposition \ref{T2}.
Hence, we need to explain how to show Proposition \ref{T2B}. 

Combining the results in \cite{JL,JLKHKM,Jaggi2008,Yao2014},
we can construct a sequence of codes to satisfy \eqref{eqH4-27C} and \eqref{eqH4-27}.
More precisely, the papers \cite[Section IX]{JLKHKM}\cite[Section VIII]{Jaggi2008}
constructed a sequence of codes to satisfy \eqref{eqH4-27C} and \eqref{eqH4-27}
under  the condition $m_{2}+2 m_{1}< m_{0}$.
This is because the condition $m_{2}+2 m_{1}< m_{0}$ is stronger than 
the condition $m_{2}+m_{1}< m_{0}$, which is the assumption of Proposition \ref{T2B}.
To show \eqref{eqH4-27} under the weaker condition $m_{2}+m_{1}< m_{0}$, 
the papers 
by Jaggi, Langberg, Katti, Ho, Katabi, M\'{e}dard, and Effros
\cite[Section VII]{JLKHKM}\cite[Section VI]{Jaggi2008}\cite[Section IV-C]{Yao2014}
constructed a sequence of codes to satisfy \eqref{eqH4-27C} and \eqref{eqH4-27},
when Alice can send Bob secret information whose size is asymptotically negligible 
in comparison with $n$, in the following way.

\begin{proposition}[\cite{JL,JLKHKM,Jaggi2008}]
\Label{T2C}
We assume the following three conditions.
The inequality $m_0>m_1$ holds.
Alice can send Bob secret information whose size is asymptotically negligible 
in comparison with $n$.
The size $q_{n}'$ of the finite field increases such that 
$\frac{q_{n}'}{{n}^{m_0+1}} \to \infty$.
Then, there exists a sequence of codes $\Phi_{n}$ 
of block-length $n$ on finite field $\FF_{q_{n}'}$
whose message set is $\FF_{q_{n}'}^{k_{n}'}$ 
such that the relations \eqref{eqH4-27} and \eqref{eqH4-27C} hold. 
\end{proposition}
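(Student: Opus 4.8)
The plan is to reproduce the shared-secret achievability scheme of Jaggi, Langberg, Katti, Ho, Katabi, M\'{e}dard, and Effros \cite{JLKHKM,Jaggi2008} in the matrix formulation \eqref{F1n}, \eqref{F2n}. First I would fix the effective picture at Bob: his observation is $Y_B^n = K_B X^n + H_B Z^n$, where $\im K_B$ has dimension $m_0$ and every column of the corruption $H_B Z^n$ lies in the fixed subspace $\im H_B$ of dimension at most $m_1$ (fixed because $H_B$ does not change during the $n$ uses). Thus the task is end-to-end correction of an additive corruption confined to an unknown $m_1$-dimensional ``error subspace'' on a channel whose signal space has dimension $m_0$. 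Since Alice and Bob know only the ranks, the code must succeed uniformly over all admissible $(\bm{K},\bm{H})$; I would make the effective transfer map recoverable at Bob by a standard lifting/training device of vanishing rate, as in subspace coding.

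The heart of the construction is to spend the asymptotically negligible shared secret on a family of hash (parity-check) functions applied to Alice's payload. Alice arranges her $k_n' \approx (m_0-m_1)n$ information symbols together with a small number of redundant check symbols that are secret-seeded hashes of the payload; Bob, who also holds the secret, can test the received data for consistency. The decisive structural fact is that the adversary chooses $Z^n$ as a function $\alpha^n$ of $Y_E^n$ but never sees the secret, so any nonzero corruption it injects is, up to the negligible leakage carried by the transmitted check symbols, statistically independent of the hash seed. I would then show by a Schwartz--Zippel / polynomial-identity estimate that a fixed nonzero error configuration evades Bob's consistency test only if the secret seed is a root of a prescribed nonzero polynomial of degree $O(n)$, an event of probability $O(n/q_n')$.

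I expect the probability analysis to be the main obstacle, for two reasons. First, adaptivity: because $Y_E^n$ depends on Alice's check symbols, the adversary's choice is weakly correlated with the secret, and one must verify that this leakage is small enough not to spoil the independence used above. Second, the union bound: decoding succeeds only if a collection of linear-independence and hash-consistency conditions all hold, and I would bound the number of these conditions by a polynomial in $n$ of degree $m_0$. Multiplying by the per-condition failure probability $O(n/q_n')$ gives an overall decoding-error probability $O(n^{m_0+1}/q_n')$, which tends to $0$ exactly under the hypothesis $q_n'/n^{m_0+1}\to\infty$; this is precisely where the field-growth condition is consumed, and it explains the exponent $m_0+1$.

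Finally, once the consistency test has localized the (at most $m_1$-dimensional) error subspace with high probability, Bob projects $Y_B^n$ onto a complement of that subspace within $\im K_B$ and inverts the resulting full-rank system to recover the payload. The projection removes at most $m_1$ of the $m_0$ available signal dimensions, and since $m_0>m_1$ the remaining $m_0-m_1$ dimensions carry Alice's message, giving $\lim_n k_n'/n = m_0-m_1$, i.e.\ \eqref{eqH4-27C}, while the vanishing failure probability gives \eqref{eqH4-27}. The negligible size of the shared secret makes its rate overhead asymptotically zero, so the achieved rate is unaffected.
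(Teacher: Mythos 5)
Your high-level strategy (spend the negligible shared secret on a polynomial hash, prove correctness by a Schwartz--Zippel estimate, and consume the hypothesis $q_n'/n^{m_0+1}\to\infty$ in a union bound) is the right family of ideas, but your realization has a gap at exactly the point you flag as ``the main obstacle,'' and it is not a technicality. You transmit the check symbols \emph{in band}, together with the payload, so Eve can observe them; you then need the injected corruption to be independent of the hash seed ``up to negligible leakage,'' and you leave this unverified. It cannot be waved away: Proposition \ref{T2C} places no upper bound on the rank $m_2$ of Eve's observation (only $m_0>m_1$ is assumed, and the maximum in \eqref{eqH4-27} ranges over all $K_E$), so Eve may see every transmitted symbol, including all of your check symbols, and may also corrupt them. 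Each observed check value is a polynomial constraint on the seed, so the seed is not even approximately independent of Eve's view, and the independence underlying your Schwartz--Zippel step fails. The paper's construction eliminates this issue structurally rather than analytically: \emph{both} the seed $V_1,\dots,V_m$ \emph{and} the hash values $U_2=MU_1$ are placed in the out-of-band secret. Since $m=m_0+1$ is constant, $U_2$ is an $(m_0-m_1)\times(m_0+1)$ matrix, i.e.\ $O(1)$ symbols, which fits in the asymptotically negligible secret; then $Z^n=\alpha^n(Y_E^n)$ is a function of $(M,U_0,\bm{K},\bm{H})$ alone and is \emph{exactly} independent of the seed, so no leakage analysis is needed.

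Two further deviations from the paper's argument are worth noting. First, your decoder (``localize the at most $m_1$-dimensional error subspace from the consistency test, then project onto a complement inside $\im K_B$'') is not what the hash delivers: a pass/fail consistency test does not localize a subspace, and the transversality you implicitly assume (that the error space meets the image of the signal space trivially) is not automatic --- $\im H_B$ may well lie inside $\im K_B$. The paper instead has Bob solve the linear system $U_3\bar{Y}_B^{n}U_1=U_2$ by Gaussian elimination and output $U_3\bar{Y}_B^{n}$; correctness of \emph{any} solution follows from the linear-algebra Lemma \ref{L4-27}, Bob never needs to know or learn $K_B$ (so no lifting/training headers are required), and the needed transversality, condition (F1), is manufactured by a random invertible precoder $U_0$ that confines the signal to a random $(m_0-m_1)$-dimensional subspace --- a component absent from your sketch. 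Second, your probability accounting (``polynomially many in $n$ conditions, each failing with probability $O(n/q_n')$'') does not correspond to an actual event decomposition; in the paper the union is over all $q_n'^{\,m_0}$ vectors $z$, each failing with probability $(n/q_n')^{m_0+1}$ by Lemma \ref{LJa}, and this is precisely why the hash must use $m=m_0+1$ independent seed variables. Your numbers match the final bound $n^{m_0+1}/q_n'$, but the exponent $m_0+1$ is not explained by your decomposition --- it arises from beating a union over the field, not over $n$.
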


Then, under the weaker condition $m_{2}+ m_{1}< m_{0}$,
as the following proposition, the papers \cite[Section III]{JL}\cite[Section V]{Yao2014} provide a protocol for secure transmission of random variables with an asymptotically negligible length $k_{n}$ in comparison with $n$, which is the requirement in Proposition \ref{T2C}.

\begin{proposition}[\protect{\cite[Section III]{JL},\cite[Section V]{Yao2014}}]\Label{T2BL}
We assume the inequality $m_{2}+m_{1}< m_{0}$.
Then, there exists a sequence of codes $\Phi_{n}$ of block-length $n$ 
whose message set is $\FF_{q}^{k_{n}}$ such that
\begin{align}
&\lim_{n \to \infty} k_{n}= \infty \Label{eqH4-27CY}
\\
&\lim_{n \to \infty} \max_{ \bm{K},\bm{H}} \max_{\alpha^n} 
P_e[\Phi_{n}, \bm{K},\bm{H},\alpha^n]=0 \Label{eqH4-27Y},\\
&\lim_{n \to \infty} \max_{ \bm{K},\bm{H}} 
I(M;Y_E^{n})[\Phi_{n},\bm{K},\bm{H},0]=0, \Label{H3-182Y}
\end{align}
where the maximum is taken in the same way as Proposition \ref{T2B}.
\end{proposition}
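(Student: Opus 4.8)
The plan is to separate the two requirements — secrecy (the third limit, which concerns only the passive attack $\alpha^n=0$) and robustness against every active strategy (the first two limits) — and to exploit the large amount of slack afforded by the fact that $k_n$ may grow arbitrarily slowly, e.g.\ $k_n=\lceil\log n\rceil$, so that the transmission rate $k_n/n$ tends to $0$. First I would fix the encoder to send the $k_n$ message symbols together with a block of private random symbols, the latter playing two roles: masking the message against Eve, and supplying a hash key that is hidden from her. Over $n$ uses the input space is $\FF_q^{m_3\times n}$, and since $\rank K_B=m_0$ Bob's effective min-cut is $m_0$ per use while Eve observes rank $m_2$ and injects rank $m_1$; the hypothesis $m_2+m_1<m_0$ is precisely what guarantees a positive-dimensional slack $m_0-m_1-m_2>0$ in which a key can be concealed from Eve even after her injection.

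For the secrecy limit I would argue exactly as in the proof of Theorem \ref{T3}. Because the passive leakage (with $Z^n=0$, so $Y_E^n=K_E X^n$) depends only on $K_E$ with $\rank K_E=m_2$, a universal$_2$ linear privacy-amplification step (Proposition \ref{T4} together with the bound \eqref{F15}) drives $I(M;Y_E^n)[\Phi_n,\bm K,\bm H,0]$ to $0$; here the slack $k_n=o(n)$ makes the relevant exponent strongly negative, and a union bound over the at most $q^{m_6 m_3}$ equivalence classes of $K_E$ makes the vanishing uniform in $\bm K,\bm H$. This settles the third limit without any appeal to robustness.

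The substantial part is the first two limits, i.e.\ correctness against \emph{every} causal strategy $\alpha^n$ satisfying the uniqueness condition. The obstacle is that $m_0>m_1+m_2$ is weaker than the condition $m_0>2m_1$ needed for unconditional unique decoding of a rank-$m_1$ adversarial error, so pure algebraic error correction is insufficient when $m_2<m_1$. My plan is therefore the keyed-authentication scheme of Jaggi--Langberg et al.: Alice appends to her coded stream a hash (a low-degree polynomial evaluation) of the transmitted data, keyed by the hidden random symbols described above. Bob list-decodes up to the injected rank $m_1$ and then disambiguates using the hash. Since Eve's view has rank only $m_2<m_0-m_1$, the key retains positive conditional entropy from her standpoint, so any rank-$m_1$ injection she mounts agrees with the true hash only with a probability that, averaged over the hidden key, tends to $0$ as $n\to\infty$; here the growth of the block length (which permits a growing, yet sublinear, number of check symbols) plays the role that field-size growth plays in Proposition \ref{T2C}. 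Taking the maximum over the finitely many $(\bm K,\bm H)$ and over all strategies $\alpha^n$, and applying a union bound, then yields $\max_{\bm K,\bm H}\max_{\alpha^n}P_e\to 0$ while $k_n\to\infty$.

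The step I expect to be hardest is the \emph{uniform} forgery bound: showing that the authentication survives simultaneously against all causal, possibly adaptive injections $\alpha^n$ and all admissible matrix pairs, rather than against one fixed error. I would handle this by first fixing $\bm K,\bm H$, bounding the forgery probability for the worst-case additive error pattern realisable by any $\alpha^n$ (using that the uniqueness condition forces $Z^n$ to be a deterministic function of $Y_E^n$, as in Theorem \ref{T1}, and that the key is independent of Eve's $m_2$-dimensional leakage), and only then taking the maximum over $\alpha^n$ and the union bound over matrices. The secrecy margin $m_0-m_1-m_2>0$ is exactly what keeps every such bound nontrivial.
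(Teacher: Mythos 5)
A preliminary point you could not have known: the paper contains no proof of Proposition \ref{T2BL}. It is imported as a known result from \cite{JL} and \cite{Yao2014}, and the appendix's original contribution is an alternative proof of Proposition \ref{T2C}; the two are then concatenated to yield Proposition \ref{T2B} and hence Proposition \ref{T2}. So your proposal can only be measured against the constructions in the cited works. In spirit it is the same mechanism --- exploit that Eve's rank-$m_2$ view leaves the transmitted randomness largely unknown to her, add nonlinear (hash-type) redundancy so that Bob can reject corrupted candidates, and avoid the fatal union bound over the doubly-exponentially many strategies $\alpha^n$ (cf.\ \eqref{F18}) by conditioning on Eve's view, so that her injection becomes a fixed additive offset; your last paragraph states this reduction correctly, and your privacy-amplification argument for the secrecy limit \eqref{H3-182Y} is sound and matches the paper's proof of Theorem \ref{T3}.

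For the robustness limit \eqref{eqH4-27Y}, however, two steps are genuinely incomplete. First, the in-band key circularity: the authentication key rides the same corrupted channel, so every false candidate in Bob's list carries its own shifted key and shifted tag, and what must be proved is that for every nonzero admissible offset $(\Delta D,\Delta K,\Delta h)$ of rank at most $m_1$, the self-consistency equation $h+\Delta h=f_{K+\Delta K}(D+\Delta D)$ holds with vanishing probability over the hidden randomness conditioned on Eve's view. Your justification --- ``the key retains positive conditional entropy'' --- is much too weak: positive entropy only bounds the forgery probability away from one, whereas you need it to vanish even after multiplying by the list size and by the number of pairs $(\bm{K},\bm{H})$ (a set which is finite only after one replaces $H_B Z^n$ by $\hat{H}_B \hat{Z}^n$ with $m_1$ columns, as the paper does in its proof of Proposition \ref{T2C}, since $m_5,m_6$ are unconstrained). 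Second, the field size: Proposition \ref{T2BL} is over a fixed $\FF_q$, and any Schwartz--Zippel-type forgery bound for a polynomial-evaluation hash is of order $\text{degree}/q$, which does not tend to zero as $n$ grows. The cited constructions, and the paper's own passage from Proposition \ref{T2B} to Proposition \ref{T2}, resolve this by lifting to an extension field $\FF_{q^{t_n}}$ with $t_n\to\infty$ (grouping $t_n$ channel uses into one extension-field symbol, and checking that rank-$m_1$ injections remain rank-limited under the lifting); your assertion that ``the growth of the block length plays the role that field-size growth plays'' is exactly this missing step, asserted rather than proved. Until these two points are filled in, the uniform forgery bound --- which you yourself flag as the hardest step --- is not established.
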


Therefore, attaching the protocol of Proposition \ref{T2BL} to the codes given in 
Proposition \ref{T2C},
we obtain \eqref{eqH4-27} under the weaker condition $m_{2}+m_{1}< m_{0}$.
However, their proof of Proposition \ref{T2C}
is very hard to read because it omits the detail derivation. 
In the following, we give an alternative proof of Proposition \ref{T2C}.

Before our proof of Proposition \ref{T2C},
we prepare two lemmas.
The first lemma can be easily shown by the discussion of linear algebra.
In the following discussion, we simplify $q_{n}'$ to $q'$.

\begin{lemma}\Label{L4-27}
For integers $a_0 \le a_1+a_2,a_1',a_2'$, 
we fix an $a_1$-dimensional subspace $W_1 \subset \FF_{q'}^{a_1'}$ and 
an $a_2$-dimensional subspace $W_2 \subset \FF_{q'}^{a_2'}$.
We assume the following two conditions.
\begin{description}
\item[(E1)]
An $a_0 \times a_1'$ matrix $A_1$
and an $a_0 \times a_2'$ matrix $A_2$
satisfy 
\begin{align}
\Ker A_1|_{W_1} &=\{0\}, \\
\im A_1 \cap \im A_2 &= \{ 0 \},
\end{align}
where $\im (f)$ denotes the image of the function $f$.
\item[(E2)]
We consider a subspace $W_3 \subset W_1 \oplus W_2$.
For vectors
$x_1, \ldots, x_b \in W_1 $
and
$y_1, \ldots, y_b \in W_2$ with $b \ge a_1+a_2$,
$(x_1,y_1), \ldots, (x_b ,y_b) $ span $W_3$.
\end{description}
Then, we have the following statements.
\begin{description}
\item[(E3)]
There exists an $a_1' \times a_0 $ matrix $A_3$ such that
$ A_3 (A_1 x_i+ A_2 y_i)=x_i$ for $i=1, \ldots, b$, i.e., 
\begin{align}
A_3 
\left[A_1~ A_2 \right]
\left[
\begin{array}{cccc}
x_1 & x_2 & \cdots & x_b \\
y_1 & y_2 & \cdots & y_b
\end{array}
\right]
=
\left[
\begin{array}{cccc}
x_1 & x_2 & \cdots & x_b 
\end{array}
\right].
\end{align}
\item[(E4)]
The above matrix $A_3$ satisfies the relation
\begin{align}
 A_3 (A_1 x+ A_2 y)=x \Label{4-27F}
\end{align}
for any $(x,y) \in W_3$.
\end{description}
\end{lemma}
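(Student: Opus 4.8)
The plan is to derive (E4) from (E3) almost for free and to concentrate the real work on (E3), which is a solvability statement for a linear matrix equation in the unknown $A_3$. First I would set up the notation: writing $X := [x_1 \cdots x_b]$ and $Y := [y_1 \cdots y_b]$, the identity required in (E3) is $A_3 [A_1\ A_2]\begin{bmatrix} X \\ Y \end{bmatrix} = X$, that is, $A_3 M = X$ with $M := A_1 X + A_2 Y$ an $a_0 \times b$ matrix whose $i$-th column is $A_1 x_i + A_2 y_i$. Such an equation admits a solution $A_3$ exactly when the right null space of $M$ is contained in that of $X$: indeed, solving $A_3 M = X$ row by row amounts to asking that each row of $X$ lie in the row space of $M$, which over a field is equivalent to $\{c : Mc = 0\} \subseteq \{c : Xc = 0\}$.

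The heart of the argument is the following algebraic fact, which I would isolate at the outset: if $x \in W_1$ and $y \in W_2$ satisfy $A_1 x + A_2 y = 0$, then $x = 0$. This is precisely where both clauses of (E1) enter. From $A_1 x = -A_2 y$ one gets $A_1 x \in \im A_1 \cap \im A_2 = \{0\}$, hence $A_1 x = 0$; then the injectivity of $A_1$ on $W_1$, i.e. $\Ker A_1|_{W_1} = \{0\}$, forces $x = 0$. Applying this with $x = \sum_i c_i x_i \in W_1$ and $y = \sum_i c_i y_i \in W_2$ shows that $Mc = 0$ implies $\sum_i c_i x_i = Xc = 0$, which is exactly the null-space containment needed above. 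Therefore $A_3$ exists and (E3) holds.

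Finally, (E4) follows by linearity together with the spanning hypothesis (E2). Any $(x,y) \in W_3$ can be written as $(x,y) = \sum_i c_i (x_i, y_i)$ since the pairs $(x_i, y_i)$ span $W_3$, and then $A_3(A_1 x + A_2 y) = \sum_i c_i A_3(A_1 x_i + A_2 y_i) = \sum_i c_i x_i = x$ by (E3) applied to each term.

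I do not anticipate a genuine obstacle once the linear-algebra fact ``$A_1 x + A_2 y = 0 \Rightarrow x = 0$ on $W_1 \oplus W_2$'' has been extracted; after that everything reduces to a standard solvability criterion. The only point requiring mild care is the logical role of the hypotheses: the existence of $A_3$ in (E3) rests solely on (E1), whereas the spanning condition (E2) and the bound $b \ge a_1 + a_2$ play no part in existence and serve only to ensure that the finitely many sample pairs $(x_i, y_i)$ actually generate all of $W_3$, which is what promotes the pointwise identity of (E3) to the subspace-wide identity of (E4).
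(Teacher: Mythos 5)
Your proof is correct, and it reaches (E3) by a genuinely different route than the paper. The paper's proof is constructive and never looks at the linear system determined by the $(x_i,y_i)$: from $\Ker A_1|_{W_1}=\{0\}$ it takes a left inverse $A_4 : \im A_1 \to W_1$ of $A_1|_{W_1}$, from $\im A_1 \cap \im A_2 = \{0\}$ it takes a projection $A_5$ of $\FF_{q'}^{a_0}$ onto $\im A_1$ annihilating $\im A_2$, and sets $A_3 := A_4 A_5$ (the paper writes $A_5 A_4$, an evident typo in the order of composition); this $A_3$ satisfies $A_3(A_1 x + A_2 y) = x$ on all of $W_1 \oplus W_2$, so (E3) is immediate. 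You instead establish bare solvability of the finite system $A_3 M = X$ with $M := A_1 X + A_2 Y$, via the kernel-containment criterion $\{c : Mc = 0\} \subseteq \{c : Xc = 0\}$, and verify the containment with the fact that $A_1 x + A_2 y = 0$ with $x \in W_1$, $y \in W_2$ forces $x = 0$ --- the same place where both clauses of (E1) do the work, but packaged as a solvability check rather than a construction. What each buys: your argument is minimal, isolates the key algebraic fact cleanly, and mirrors how $U_3$ is actually obtained in the application (Gaussian elimination on the finite system); the paper's construction yields a single $A_3$, independent of the data $(x_i,y_i)$, that works on the whole of $W_1 \oplus W_2$ rather than only on $W_3$, which is a slightly stronger conclusion. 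Both proofs handle (E4) identically, by linearity plus the spanning hypothesis (E2), and your remark that this step must apply to an \emph{arbitrary} solution of (E3), not just a conveniently constructed one, is precisely what the application requires, since Bob's matrix $U_3$ is only guaranteed to satisfy the finite system.
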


\begin{proof}
Due to condition (E1), 
we choose a map $A_4$ from $\im A_1$ to $W_1$
such that $A_4 A_1$ is the identify on $W_1$.
Since $W_1$ is included in $\FF_{q'}^{a_1'}$,
$A_4$ can be regarded as a map from $\im A_1$ to $\FF_{q'}^{a_1'}$.
Then, we choose a projection $A_5$ from $\FF_{q'}^{a_0}$
to $\im A_1$ such that $A_5 x=0$ for $x \in \im A_2$.
Therefore, $A_3:= A_5 A_4$ satisfies the condition of (E3).
Further, (E2) guarantees (E4).
\end{proof}


\begin{lemma}[\protect{\cite[Section VII]{JLKHKM}\cite[Claim 5]{Jaggi2008}}]\Label{LJa}
We independently choose $m$ random variables 
$V_1, \ldots, V_{m}$ 
subject to the uniform distribution on $\FF_{q'} $.
We define the $n \times m$ matrix $U_1$
as $U_{1;i,j}:= (V_j)^i$ with $i=1, \ldots, n$ and $j= 1,\ldots, m$.
Then,  
\begin{align}
\Pr \{ x U_1=x' U_1
\} \le \Big(\frac{n}{q'}\Big)^m
\end{align}
for any $x\neq x'\in \FF_{q'}^n$.
\end{lemma}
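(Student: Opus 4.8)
The plan is to reduce the event $\{xU_1 = x'U_1\}$ to the vanishing of a single row vector, then to factor it into $m$ independent polynomial-root events, one per column of $U_1$, and bound each factor by a root count. First I would set $w := x - x' \in \FF_{q'}^n$, which is nonzero by hypothesis, and observe that $xU_1 = x'U_1$ holds if and only if $wU_1 = 0$, i.e.\ if and only if every one of the $m$ entries of the row vector $wU_1$ vanishes.

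Next I would compute the $j$-th entry of $wU_1$. Since $U_{1;i,j} = (V_j)^i$, we have $(wU_1)_j = \sum_{i=1}^n w_i (V_j)^i = p(V_j)$, where $p(t) := \sum_{i=1}^n w_i t^i$ is a fixed univariate polynomial determined solely by $w$. Because $w \neq 0$, at least one coefficient $w_i$ is nonzero, so $p$ is a nonzero polynomial of degree at most $n$. A nonzero polynomial of degree at most $n$ over the field $\FF_{q'}$ has at most $n$ roots; since $V_j$ is uniform on the $q'$-element set $\FF_{q'}$, this yields $\Pr\{(wU_1)_j = 0\} = \Pr\{p(V_j) = 0\} \le n/q'$.

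Finally I would invoke independence. The event $\{(wU_1)_j = 0\}$ depends only on $V_j$, and the variables $V_1, \ldots, V_m$ are independent, so these $m$ events are mutually independent. Multiplying, $\Pr\{wU_1 = 0\} = \prod_{j=1}^m \Pr\{p(V_j) = 0\} \le (n/q')^m$, which is exactly the claimed bound.

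I do not expect a genuine obstacle here: the argument is a direct Schwartz--Zippel-type root count combined with independence across columns. The only points that require care are verifying that $p$ is nonzero (so that the degree-$n$ root bound legitimately applies) and noting that it is the \emph{same} polynomial $p$, fixed by $w$, that governs every column, which is precisely what makes the $m$ column events independent and each bounded by $n/q'$.
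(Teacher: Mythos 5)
Your proof is correct. Note that the paper itself gives no proof of this lemma---it is imported by citation from Jaggi et al.---and your argument is precisely the standard one underlying the cited Claim 5: the difference vector $w = x - x'$ defines a single nonzero polynomial $p(t)=\sum_{i=1}^n w_i t^i$ of degree at most $n$, each column event $\{p(V_j)=0\}$ has probability at most $n/q'$ by the root-count bound, and independence of the $V_j$ lets the bounds multiply. The two points you flag as needing care (nonvanishing of $p$, and that the same $p$ governs every column) are exactly the right ones, and both are handled correctly.
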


\begin{proofof}{Proposition \ref{T2C}}\par
\noindent{\bf Step (1): Code construction}\par\noindent
First, we provide our code when we use the channel $n$ times based on the finite field $\FF_{q'}$.
Our message is given as an $(m_{0}-m_{1}) \times n$ matrix $M$, which satisfies condition
\eqref{eqH4-27C} asymptotically. 
Since the rank of $H_B$ is $m_1$,
there exist a 
$m_4\times m_1$ matrix $\hat{H}_B$ and 
$m_1\times n$ matrix $\hat{Z}^{n}$ such that
\begin{align}
Y_B^{n}
=K_B U_0  X^{n}+H_B Z^{n}
=K_B U_0  X^{n}+\hat{H}_B \hat{Z}^{n}.\Label{TT}
\end{align}
Then, we address $\hat{H}_B $ and $\hat{Z}^{n}$
instead of ${H}_B $ and ${Z}^{n}$.

\if0
We consider only the case when the rank of $M$ is $m_0-m_1$.
This condition holds with high probability when $M$ is randomly chosen. 
Further, without loss of generality, we can assume that 
the rank of $Z^{n}$ is $m_1$.
Otherwise, this assumption can be satisfied as follows.
Let $m_1'$ be the rank of $Z^{n}$.
Define 
the $m_1' \times m_1$ matrix $P_{m_1'}$
and 
the $m_1 \times m_1'$ matrix $P_{m_1'}^*$
such that 
$P_{m_1'}^* P_{m_1'}$ is the identity map on the image of $Z^{n}$.
Then, we replace
$H_B, Z^{n}$, and $m_1$
by 
$H_B P_{m_1'}^*, P_{m_1'} Z^{n}$, and $m_1'$,
which implies that this condition holds.
Therefore,
the relation 
$\rank 
\left[
\begin{array}{c}
M \\
Z^{n}
\end{array}
\right]
 = m_0$ holds,
where $\rank $ denotes the rank of the matrix.
\fi

We fix an integer $m:=m_0+1$.
We independently choose $m$ random variables 
$V_1, \ldots, V_{m}$ subject to the uniform distribution on $\FF_{q'} $.
Also, we randomly choose 
the $m_3 \times m_3$ matrix $U_0$ among all $m_3 \times m_3$ invertible matrices. 

Then, we define the $n \times m$ matrix $U_1$
as $U_{1;i,j}:= (V_j)^i$ with $i=1, \ldots, n$ and $j= 1,\ldots, m$.
We also define the $(m_{0}-m_{1})\times m$ matrix $U_2:= M U_1$.
Moreover, we define the $m_3 \times n $ matrix $X^{n}:=
\left[
\begin{array}{c}
M \\
0
\end{array}
\right]$, where $0$ is the $m_1 \times n$ zero matrix.
As secret information with a negligible rate,
Alice sends Bob the information $V_1, \ldots, V_{m},U_2$.
Then, Alice inputs the $m_3 \times n $ matrix $U_0 X^{n}$ 
as the input of $n$ times use of the channel.

Then, Bob receives the $m_4 \times n$ matrix $Y_B^{n}$ given in \eqref{TT}
as well as the secret information $V_1, \ldots, V_{m},U_2$.
Since the ranks of $K_B U_0 X^{n}$ and $\hat{H}_B$ are $m_{0}-m_{1}$ and $m_1$ at most, respectively,
the rank of the matrix $Y_B^{n}$ is $m_0$ at most. 
We denote the rank by $\bar{m}_0$.
We choose $\bar{m}_0$ linearly independent row vectors from the row vectors of $Y_B^{n}$.
We denote the $\bar{m}_0 \times n$ matrix composed of the $\bar{m}_0 $ 
independent row vectors by $\bar{Y}_B^{n} $.
Similarly, we denote the matrices composed of these $\bar{m}_0$ row vectors of 
the matrices $K_B$ and $\hat{H}_B$ by
$\bar{K}_B$ and $\bar{H}_B$, respectively.
Then, using the standard Gaussian elimination,
Bob finds a matrix $ U_3$ to satisfy the equation
\begin{align}
U_3 \bar{Y}_B^{n} U_1= U_2,
\Label{4-27Y}
\end{align}
which is equivalent to
\begin{align}
U_3 
 \left[
\bar{K}_BU_0 P_{m_0-m_1} ~ \bar{H}_B 
\right]
 \left[
\begin{array}{c}
M \\
\hat{Z}^{n}
\end{array}
\right] U_1=M U_1,\Label{EEQ1}
\end{align}
where 
$P_{m_0-m_1}$ is the imbedding
$\left[
\begin{array}{c}
I \\
0\end{array}
\right]
 $
from $\FF_{q'}^{m_0-m_1} $ to 
$\FF_{q'}^{m_3} $.
Notice that Bob can calculate 
$U_1$ from the secret information $V_1, \ldots, V_{m}$.
Finally, Bob recovers the information $\hat{M}:=U_3 \bar{Y}_B^{n}$.
To check the condition \eqref{4-27Y}, Bob needs only 
$\bar{Y}_B^{n}$,
$U_2$, and $ U_1$, which can be computed from 
$V_1, \ldots, V_{m}$.

\noindent{\bf Step (2): Analysis of performance}\par\noindent
There are two conditions if the above protocol is to work well.
\begin{description}
\item[(F1)]
The relations 
$\im (\bar{K}_B U_0 P_{m_0-m_1})\cap \im \bar{H}_B= \{0\}$
and $\Ker \bar{K}_B U_0 P_{m_0-m_1}|_{\im M} =\{0\}$ hold.
\item[(F2)]
The relation 
$
\rank  
 \left[
\begin{array}{c}
M \\
\hat{Z}^{n}
\end{array}
\right]
 U_1
=\rank 
\left[
\begin{array}{c}
M \\
\hat{Z}^{n}
\end{array}
\right]
$ holds,
where $\rank $ denotes the rank of the matrix.


\end{description}

Assume that conditions (F1) and (F2) hold.
We apply Lemma \ref{L4-27} to the case when
$a_0= \bar{m}_0,
a_1=\rank M,
a_2= \rank \hat{Z}^{n},
W_1=\im M,
W_2= \im \hat{Z}^{n},
A_1=\bar{K}_B U_0 P_{m_0-m_1},
A_2=\bar{H}_B,
A_3= U_3$.
Then, conditions (F1) and (F2) guarantee
conditions (E1) and (E2), respectively.
Then, due to condition (E3),
there exists a matrix $ U_3$ that satisfies equation \eqref{4-27Y}, i.e., \eqref{EEQ1}.
Condition (E4) guarantees that
$U_3 \bar{Y}_B^{n}= M$, i.e., 
Bob can decode the message $M$.

Now, we evaluate the probability that condition (F2) holds.
Condition (F2) holds if and only if 
$ z^T  \left[
\begin{array}{c}
M \\
\hat{Z}^{n}
\end{array}
\right]
 U_1 \neq 0$ for any $z \in \FF_{q'}^{m_0}$
 satisfying the condition $
 z^T  \left[
\begin{array}{c}
M \\
\hat{Z}^{n}
\end{array}
\right]
\neq 0 $.
Applying Lemma \ref{LJa} to all of $z (\neq 0)\in \FF_{q'}^{m_0}$,
we find that condition (F2) holds at least with probability 
$1- {q'}^{m_0} (\frac{n}{q'})^m
=1- \frac{{n}^m}{{q'}^{m-m_0}}=1- \frac{{n}^{m_0+1}}{{q'}} \to 1$.

Finally, we evaluate the probability that condition (F1) holds.
As shown later, the following conditions (F1') and (F1'') imply condition (F1).
\begin{description}
\item[(F1')]
The relation $\im (K_B U_0 P_{m_0-m_1})\cap \im \hat{H}_B= \{0\}$ holds.
\item[(F1'')]
The relation $\Ker {K}_B U_0 P_{m_0-m_1}|_{\im M} =\{0\}$ holds.
\end{description}
Hence, we show that conditions (F1') and (F1'') hold with a probability close to $1$.
Condition $\im (K_B U_0 P_{m_0-m_1})\cap \im \hat{H}_B= \{0\}$ holds if and only if
$ \im U_0 P_{m_0-m_1}\cap K_B^{-1} (\im \hat{H}_B)= \{0\}$.
For a fixed $K_B,\hat{H}_B$, 
since $\dim K_B^{-1} (\im \hat{H}_B) \le m_3-m_0+m_1$,
the probability of 
condition $\im (K_B U_0 P_{m_0-m_1})\cap \im \hat{H}_B= \{0\}$ is at least
\begin{align}
& (1-{q'}^{m_3-m_0+m_1-m_3})
(1-{q'}^{m_3-m_0+m_1-m_3+1})\cdots
(1-{q'}^{m_3-m_0+m_1-m_3+ m_0-m_1-1}) \nonumber \\
=&(1-{q'}^{m_1-m_0})
(1-{q'}^{m_1-m_0+1})\cdots
(1-{q'}^{-1})
=1-O(1/q').
\end{align}

The relation $\Ker {K}_B U_0 P_{m_0-m_1}|_{\im M} =\{0\}$ holds if and only if
no basis of $\im U_0 P_{m_0-m_1} M$ belongs to the space $\Ker {K}_B $.
Since $\Ker {K}_B $ is an $m_3-m_0$-dimensional subspace of an $m_3$-dimensional space,
the probability of condition 
$\Ker {K}_B U_0 P_{m_0-m_1}|_{\im M} =\{0\}$ is
\begin{align}
(1-{q'}^{-m_0})
(1-{q'}^{-m_0+1})\cdots
(1-{q'}^{-m_0+m_7-1})
=1-O({q'}^{-m_0+m_7-1}),
\end{align}
where $m_7:= \dim \im P_{m_0-m_1} M(= \dim \im U_0  P_{m_0-m_1} M) 
=\dim \im M= \rank M \le m_0-m_1$.
Therefore, since $q'$ is sufficiently large,
we obtain the desired statement.

Finally, we show that condition (F1') implies condition (F1).
Since the relation
$\im (\bar{K}_B U_0 P_{m_0-m_1})\cap \im \bar{H}_B= \{0\}$
can be shown easily from (F1'), we show only the relation
$\Ker \bar{K}_B U_0 P_{m_0-m_1}|_{\im M} =\{0\}$ from (F1') and (F1'').
The choice of $\bar{m}_0$ guarantees that 
there exists an invertible map $U_4$ from $\im [K_B U~ \hat{H}_B]$ to $\FF_{q'}^{\bar{m}_0}$ such that
$U_4 [K_B U~ \hat{H}_B]=[\bar{K}_B U~ \bar{H}_B]$. 
Thus, 
\begin{align}
\Ker \bar{K}_B U_0 P_{m_0-m_1}|_{\im M} =
\Ker U_4^{-1} \bar{K}_B U_0 P_{m_0-m_1}|_{\im M} =
\Ker {K}_B U_0 P_{m_0-m_1}|_{\im M} =\{0\}.
\end{align}
\end{proofof}

\begin{remark}
Our proof is different from the proof presented in \cite[Section VII]{JLKHKM}\cite[Section VI]{Jaggi2008}\cite[Section IV-C]{Yao2014}.
They suggested that $ (m_0-m_1)m_0+1$ be chosen as $m$ 
because they employ the concept of list decoding.
However, our discussion allows us to choose a much smaller value $m_0+1$
as $m$.
This fact shows that our evaluation is better than their evaluation in this sense.
Note that our evaluation does not use list decoding.
\end{remark}

Proposition \ref{T2B} requires a finite field $\FF_{q'}$ with an infinitely large $q'$.
The paper \cite[Appendix D]{Hayashi-Tsurumaru} discussed the construction of $\FF_{2^t}$ whose
multiplication and inverse multiplication have calculation complexity $O(t \log t)$\footnote{The multiplication of elements $v$ and $z$ of $\FF_{2^t}$ is essentially given in (124) of \cite{Hayashi-Tsurumaru}
by using Fourier transform via a calculation on circulant matrices.
For the inverse multiplication of an element $v$ of $\FF_{2^t}$, 
we calculate $ F^{-1}[-Fv. *Fz]$ instead of 
$ F^{-1}[Fv. *Fz]$ in (124), where $F$ is discrete Fourier transform.}.

\section{Security analysis of passive attack in one hop relay network (Fig. \ref{F1})}\Label{B1}
First, we calculate $I(M;Y_1,Y_3)$ and $d_1(M|Y_1,Y_3)$.
We find that
\begin{align*}
P_{Y_1,Y_3|M}(0,0|0)=
P_{Y_1,Y_3|M}(1,0|0) =\frac{1}{2},\\
P_{Y_1,Y_3|M}(0,0|1)=
P_{Y_1,Y_3|M}(1,1|1)=\frac{1}{2},
\end{align*}
where the remaining conditional probabilities are zero.
Hence, 
\begin{align*}
H(Y_1,Y_3|M)=1, ~
H(Y_1,Y_3)=\frac{1}{2}\log 2 +\frac{1}{2}\log 4 =\frac{3}{2},
\end{align*}
which implies $I(M;Y_1,Y_3)=\frac{1}{2}$.

Since
\begin{align*}
P_{Y_1,Y_3}(0,0)=\frac{1}{2},
P_{Y_1,Y_3}(1,0)=
P_{Y_1,Y_3}(1,1)=\frac{1}{4}, 
\end{align*}
we have
\begin{align*}
P_{M|Y_1,Y_3}(0|0,0)&=
P_{M|Y_1,Y_3}(1|0,0)=\frac{1}{2},\\
P_{M|Y_1,Y_3}(0|1,0)&=
P_{M|Y_1,Y_3}(1|1,1)=1,
\end{align*}
where the remaining conditional probabilities are zero.
Therefore,
\begin{align*}
&d_1(M|Y_1,Y_3) \\
= &
\Big|\frac{1}{2}P(0,0)-\frac{1}{2}P(0|0,0)\Big|
+\Big|\frac{1}{2}P(1,0)-\frac{1}{2}P(0|1,0)\Big|
+\Big|\frac{1}{2}P(1,1)-\frac{1}{2}P(0|1,1)\Big| \\
&+\Big|\frac{1}{2}P(0,0)-\frac{1}{2}P(1|0,0)\Big|
+\Big|\frac{1}{2}P(1,0)-\frac{1}{2}P(1|1,0)\Big|
+\Big|\frac{1}{2}P(1,1)-\frac{1}{2}P(1|1,1)\Big|\\
=&0+\frac{1}{8}+\frac{1}{8}
+0+\frac{1}{8}+\frac{1}{8}
=
\frac{1}{2}.
\end{align*}
Replacing $M$ and $L$ by $M+1$ and $L+1$, respectively,
we can calculate $I(M;Y_1,Y_4)$ and $d_1(M|Y_1,Y_4)$
in the same way.

Next, we consider 
$I(M;Y_2,Y_3)$ and $d_1(M|Y_2,Y_3)$.
We find that
\begin{align*}
P_{Y_2,Y_3|M}(0,0|0)=
P_{Y_2,Y_3|M}(1,0|0) =\frac{1}{2},\\
P_{Y_2,Y_3|M}(0,1|1)=
P_{Y_2,Y_3|M}(1,0|1)=\frac{1}{2},
\end{align*}
where the remaining conditional probabilities are zero.
Hence, replacing $(0,0)$ and $(1,1)$ by $(1,0)$ and $(0,1)$, respectively, in the above
derivation, we can show 
$I(M;Y_2,Y_3)=\frac{1}{2}$
and $d_1(M|Y_2,Y_3)=\frac{1}{2}$.
Finally, replacing $M$ and $L$ by $M+1$ and $L+1$, respectively,
we can calculate $I(M;Y_2,Y_4)$ and $d_1(M|Y_2,Y_4)$ in the same way.

\end{document}